\DeclareMathAlphabet{\mathpzc}{OT1}{pzc}{m}{it}
\definecolor{dgreyblue}{rgb}{0.26,0.3,0.46}             
\newcommand{\bee}{{\mathbf{e}}}
\newcommand{\cA}{\mathcal{A}}
\newcommand{\cB}{\mathcal{B}}
\newcommand{\cC}{\mathcal{C}}
\newcommand{\cD}{\mathcal{D}}
\newcommand{\cE}{\mathcal{E}}
\newcommand{\cK}{\mathcal{K}}
\newcommand{\cL}{\mathcal{L}}
\newcommand{\cR}{\mathcal{R}}
\newcommand{\cT}{\mathcal{T}}
\newcommand{\cU}{\mathcal{U}}
\newcommand{\R}{{\mathbb R}}  
\renewcommand{\text}[1]{\hbox{\rm \ #1\ \/}}
\newcommand{\be}[1]{\begin{equation}\label{#1}}
\newcommand{\ee}{\end{equation}}
\newcommand{\beqn}{\begin{eqnarray*}}
\newcommand{\eeqn}{\end{eqnarray*}}
\newcommand{\beq}{\begin{eqnarray}}
\newcommand{\eeq}{\end{eqnarray}}
\newcommand{\ben}{\begin{enumerate}}
\newcommand{\een}{\end{enumerate}}
\newcommand{\bi}{\begin{itemize}}
\newcommand{\ei}{\end{itemize}}
\newcommand{\eps}{\varepsilon}
\newcommand{\IE}{{\em i.e.}\xspace}
\newcommand{\tx}{^{\mathrm{th}}}
\newcommand{\Ave}[1]{{\mathbb E}[ #1]}
\newcommand{\prob}[1]{\Pr\left[ #1\right]}
\newtheorem{fact}{Fact}
\renewenvironment{proof}{{\noindent\bf Proof.\ }}{\hfill{\Pisymbol{pzd}{113}}\vspace{0.1in}}
\newcommand{\NP}{\mathsf{NP}}
\newcommand{\LP}{{\mathsf{LP}}}
\newcommand{\SDP}{{\mathsf{SDP}}}
\renewcommand{\deg}{\mathsf{deg}}
\newcommand{\cS}{\mathcal{S}}
\newcommand{\EA}{{\em et al.}\xspace}
\newcommand{\TB}{\vspace{-0.1ex}}\newcommand{\TiE}{\setlength{\itemsep}{-1ex}}
\newcommand{\junk}[1]{}
\newcommand{\EG}{{\it e.g.}\xspace}
\newcommand{\FI}[1]{Fig.~\ref{#1}\xspace}
\newcommand{\h}{{\mathfrak{h}}}
\newcommand{\wwhile}{{\bf{while}}}
\newcommand{\rreturn}{{\bf{return}}}
\newcommand{\ddo}{{\bf{do}}}
\newcommand{\opt}{ {\mathsf{OPT}} }
\newcommand{\nopt}{ {\mathsf{OPT}_{\#}} }
\newcommand{\vopt}{ V_{\mathrm{opt}} }
\newcommand{\vhat}{\widehat{V}}
\newcommand{\eqdef}{\stackrel{\mathrm{def}}{=}}
\definecolor{columbiablue}{rgb}{0.61, 0.87, 1.0}
\newcommand{\fmc}{\textsc{Fmc}\xspace}
\newcommand{\nfmc}{\textsc{Node-fmc}\xspace}
\newcommand{\sfmc}{\textsc{Segr-fmc}\xspace}
\newcommand{\bfmc}{\ensuremath{\Delta}-\textsc{bal-fmc}\xspace}
\newcommand{\gfmc}{\textsc{Geom-fmc}\xspace}
\newcommand{\fmcg}[2]{\textsc{Fmc}(\ensuremath{#1, #2})}
\newcommand{\mindisc}{\textsc{Min-disc}\xspace}
\newcommand{\algmed}{\textsc{Alg}-\textsc{medium}-\textsc{opt}\ensuremath{_\#}\xspace}
\newcommand{\algsm}{\textsc{Alg}-\textsc{small}-\textsc{opt}\ensuremath{_\#}\xspace}
\newcommand{\algla}{\textsc{Alg}-\textsc{large}-\textsc{opt}\ensuremath{_\#}\xspace}
\newcommand{\algiter}{\textsc{Alg}-\textsc{iter}-\textsc{round}\xspace}
\newcommand{\alggreed}{\textsc{Alg}-\textsc{greed}-\textsc{plus}\xspace}
\newcommand{\alggreedsmpl}{\textsc{Alg}-\textsc{greedy}\xspace}
\newcommand{\alggeom}{\textsc{Alg}-\textsc{geom}\xspace}
\newcommand{\optf}{\ensuremath{\opt_{\mathrm{frac}}}}
\newcommand{\tf}{\ensuremath{t^{\mathrm{final}}}}
\newcommand{\x}{\mathbf{x}}
\newcommand{\khat}{\ensuremath{\widehat{k}}}
\newcommand{\Ghat}{\ensuremath{\widehat{G}}}
\newcommand{\Vhat}{\ensuremath{\widehat{V}}}
\newcommand{\Ehat}{\ensuremath{\widehat{E}}}
\newcommand{\Xhat}{\ensuremath{\widehat{X}}}
\newcommand{\phat}{\ensuremath{{\widehat{p}}}}
\newcommand{\qhat}{\ensuremath{{\widehat{q}}}}
\newcommand{\nhat}{\ensuremath{\widehat{n}}}
\newcommand{\mhat}{\ensuremath{\widehat{m}}}
\newcommand{\rhat}{\ensuremath{\widehat{r}}}
\newcommand{\Vsol}{\ensuremath{\Vhat_{\mathrm{sol}}}}
\newcommand{\Xsol}{\ensuremath{\Xhat_{\mathrm{sol}}}}
\def\namedlabel#1#2{\begingroup
    #2%
    \def\@currentlabel{#2}%
    \label{#1}\endgroup
}
\journalname{}
\begin{document} 

\title{Maximizing coverage while ensuring fairness: a tale of conflicting objectives}

\titlerunning{Maximizing coverage while ensuring fairness}

\author{
Abolfazl Asudeh
\and
Tanya Berger-Wolf
\and
Bhaskar DasGupta$^\ast$
\and
Anastasios Sidiropoulos
}

\institute{
Abolfazl Asudeh \at
           Department of Computer Science, University of Illinois at Chicago, Chicago, IL 60607, USA \\
           \email{asudeh@uic.edu} 
\and
Tanya Berger-Wolf \at
           Department of Computer Science and Engineering, Ohio State University, Columbus, OH 43210, USA  \\
           \email{berger-wolf.1@osu.edu}
\and
$^\ast$Bhaskar DasGupta (corresponding author) \at
           Department of Computer Science, University of Illinois at Chicago, Chicago, IL 60607, USA \\
           Tel.: +312-355-1319\\
           Fax: +312-413-0024\\
           \email{bdasgup@uic.edu} 
\and
Anastasios Sidiropoulos \at
           Department of Computer Science, University of Illinois at Chicago, Chicago, IL 60607, USA \\
           \email{sidiropo@uic.edu}
}

\date{Received: date / Accepted: date}

\maketitle

\begin{abstract}
Ensuring fairness in computational problems has emerged as a \emph{key} topic during recent years, 
buoyed by considerations for equitable resource distributions and social justice. 
It \emph{is} possible to incorporate fairness in computational problems from several perspectives, 
such as using optimization, game-theoretic
or machine learning frameworks. 
In this paper we address the problem of incorporation of fairness from a 
\emph{combinatorial optimization} perspective.
We formulate a combinatorial optimization framework, suitable for analysis by 
researchers in approximation algorithms and related areas, that incorporates fairness in maximum coverage 
problems as an interplay between \emph{two} conflicting objectives. 
Fairness is imposed in coverage by using coloring constraints that \emph{minimizes} the discrepancies between 
number of elements of different colors covered by selected sets; this is in contrast to the usual 
discrepancy minimization problems studied extensively in the literature where (usually two) 
colors are \emph{not} given \emph{a priori} 
but need to be selected to minimize the maximum color discrepancy of \emph{each} individual set.
Our main results are a set of randomized and deterministic approximation algorithms that attempts to 
\emph{simultaneously} approximate both fairness and coverage in this framework.
\end{abstract}

\keywords{fairness \and maximum coverage \and combinatorial optimization \and approximation algorithms}


\section{Introduction}
\label{sec-intro}

In this paper we introduce and analyze a combinatorial optimization framework capturing two \emph{conflicting} objectives: 
\emph{optimize} the main objective while trying to ensure that the selected solution is as \emph{fair} as possible. 
We illustrate the framework with the following \emph{simple} graph-theoretic illustration. 
Consider the graph $G$ of $10$ nodes and $18$ edges as shown in  
\FI{fig1} where each edge is colored from one of $\chi=3$ colors (red, blue or green) 
representing three different attributes.
Suppose that we want to select \emph{exactly} $k=3$ nodes that maximizes the number of edges they ``cover'' 
subject to the ``fairness'' constraint that the proportion of red, blue and green edges in the selected edges 
are the \emph{same}. 
An optimal solution is shown in \FI{fig1} by the solid black nodes $u_1, u_2, u_3$ covering $6$ edges;
\FI{fig1} also shows that the solution 
is quite different from what it would have been 
(the yellow corner nodes $v_1, v_2, v_3$ covering $11$ edges) 
\emph{if} the fairness constraint
was absent.
A simple consequence of the analysis of our algorithms for a more general setting is that, 
assuming that there exists \emph{at least} one feasible solution and assuming 
$k$ is large enough, 
we can find a randomized solution to this fair coverage problem for graphs 
where we select \emph{exactly} $k$ nodes, 
cover at least $63\%$ of the optimal number of edges on an average 
and, for \emph{every} pair of colors, 
with high probability 
the ratio of the number of edges of these two colors among the selected edges 
is $O(1)$.


\begin{figure}[ht]
\centerline{\includegraphics[width=0.5\textwidth]{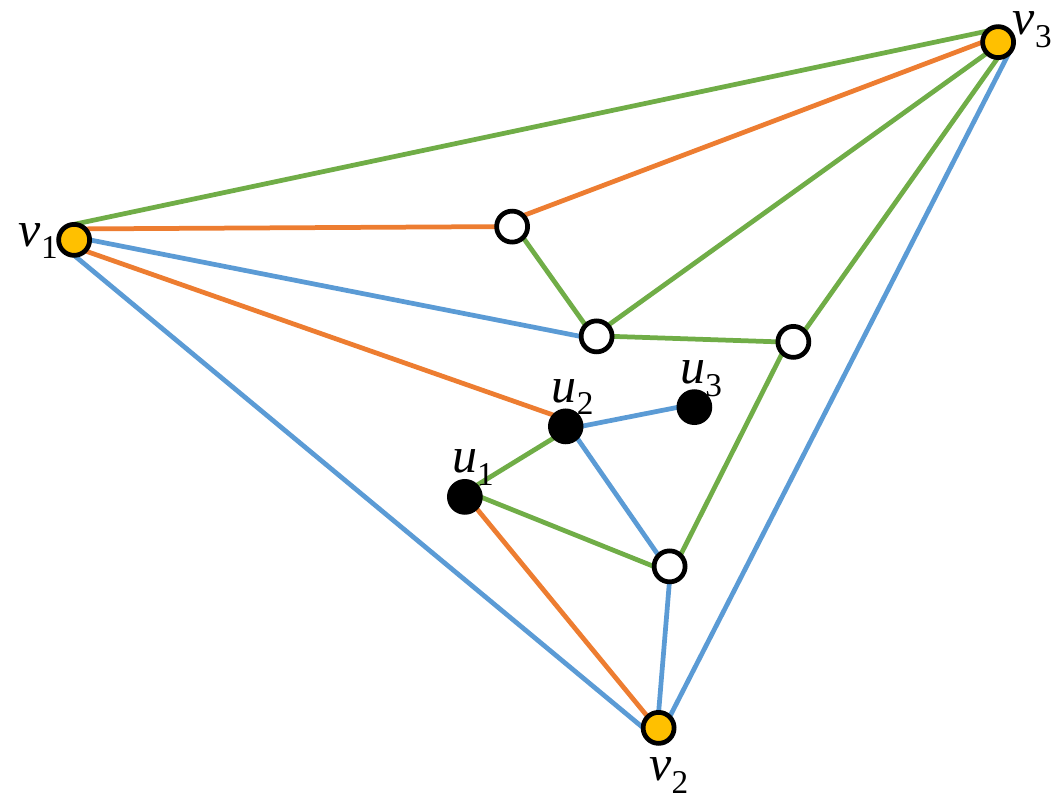}}
\caption{\label{fig1}A simple illustration of fairness in maximum coverage problems for graphs.}
\end{figure}

In this paper we consider this type of problem in more general settings. 
Of course, in the example in \FI{fig1} (and in general) there is nothing special about requiring that 
the proportion of red, blue and green edges in the covered edges 
should be \emph{exactly} equal as opposed to a \emph{pre-specified} unequal proportion. 
For example, we may also require that the proportion of edges of different colors in our solution 
should mimic that in the entire graph, \IE, in \FI{fig1}
among the covered edges 
the proportion of red, blue and green edges should be $q_1,q_2$ and $q_3$ 
where 
$q_1=\nicefrac{1}{6}$, 
$q_2=\nicefrac{1}{3}$, 
and 
$q_3=\nicefrac{1}{2}$. 
Our algorithms will work with \emph{easy} modifications for any \emph{constant} values of 
$q_1$, $q_2$ and $q_3$.

\subsection{Different research perspectives in ensuring fairness}

Theoretical investigations 
of ensuring fairness in computation can be pursued from
many perspectives. We briefly comment on a few of them.
	
One line of research dealing with the goal of ensuring fairness uses the \emph{optimization} framework, \IE, 
we model the problem as an optimization problem with precisely defined fairness constraints. 
This is a common framework used by researchers in combinatorial and graph-theoretic algorithms, 
such as research works that involve designing exact or approximation algorithms, 
investigating fixed-parameter tractability issues
or proving inapproximability results.
In this paper we use such a framework.
Fairness is imposed in coverage by using coloring constraints that minimizes the discrepancies between 
different colors among elements covered by selected sets; this is in contrast to the usual 
discrepancy minimization problems studied extensively in the literature~\cite{C00}
where the (usually two) colors are \emph{not} given a priori 
but need to be selected to \emph{minimize} the maximum color discrepancy of \emph{each} individual set.

A second line of research dealing with fairness involves machine learning frameworks.
Even though it is a relatively new research area, 
there is already a large body of research dealing with ensuring fairness in machine learning algorithms by 
preprocessing the data used in the algorithms, optimization of statistical outcomes 
with appropriate fairness criteria and metric 
during the training, or by post-processing the answers of the algorithms~\cite{Ze13,HPS16,ZLM18}.

A third line of research dealing with fairness involves game theoretic frameworks.
For example, developments of solutions for fair ways of sharing 
transferable utilities in cooperative game-theoretic environments have 
given rise to interesting concepts such as Shapley values and Rabin's fairness model.
We refer the reader to the excellent textbook in algorithmic game theory by 
Nisan \EA~\cite{NRTV07} for further details on these research topics.

Yet another more recent line of research dealing with fairness involve applying 
fairness criteria in the context of clustering of points in a \emph{metric space}
under $k$-means objective, $k$-median or other $\ell_p$-norm objectives~\cite{BCFN19,CKLV17}.
The assumption of an underlying metric space 
allows the development of efficient algorithms for these frameworks.

\section{Fair maximum coverage: notations, definitions and related concepts}

The \emph{Fair Maximum Coverage} problem with $\chi$ colors 
is defined as follows.
We are given an universe $\cU=\{u_1,\dots,u_n\}$ of $n$ elements, 
a weight function $w: \cU\mapsto \R$ assigning a non-negative weight to every element,  
a color function $\cC: \cU\mapsto \{1,\dots,\chi\}$ assigning a color to every element,  
a collection of $m$ sets $\cS_1,\dots,\cS_m\subseteq \cU$, 
and
a positive integer $k$. 
A collection of $k$ distinct subsets,
say $\cS_{i_1},\dots,\cS_{i_k}$, 
with the set of ``covered'' elements
$\bigcup_{j=1}^k \cS_{i_j}$ containing $p_i$ elements of color $i$ 
is considered a 
\emph{valid solution}\footnote{For a more general version of the problem 
we are given 
$\chi$ ``color-proportionality constants'' $q_1,\dots,q_\chi\in(0,1]$ with $q_1+\dots+q_\chi=1$, 
and a valid solution must satisfy 
$\nicefrac{p_i}{p_j}=\nicefrac{q_i}{q_j}$ for all $i$ and $j$. 
As we mentioned already, 
with suitable modifications 
our algorithms will work with similar asymptotic performance guarantee 
for any \emph{constant} values of 
$q_1,\dots,q_\chi$, but to simplify exposition we will assume the simple requirement 
of $q_1=\dots=q_\chi$ in the sequel.}
provided 
$p_i=p_j$ for all $i$ and $j$.
The objective is to \emph{maximize} 
the sum of weights of the covered elements. 
\added[comment={added}]{More explicitly, our problem is defined as follows:}

\medskip
\begin{center}
\begin{tabular}{r l}
\toprule
\textbf{Problem name:} & {Fair Maximum Coverage with $\chi$ colors (\fmcg{\chi}{k})}
\\
\\
\textbf{Input:} & 
   {$\bullet$ universe $\cU=\{u_1,\dots,u_n\}$}
\\
& {$\bullet$ (element) weight function $w: \cU\mapsto \R^+\cup\{0\}$}
\\
& {$\bullet$ (element) color function $\cC: \cU\mapsto \{1,\dots,\chi\}$}
\\
& {$\bullet$ sets $\cS_1,\dots,\cS_m\subseteq \cU$}
\\
& {$\bullet$ integer $k>0$}
\\
\\
\textbf{Valid solution:} & 
      {collection of $k$ distinct subsets $\cS_{i_1},\dots,\cS_{i_k}$ satsifying}
\\
& 
\begin{tabular}{l}
{$\forall i,j\in\{1,\dots,\chi\}:$} 
\\
\hspace*{0.5in}
   {$p_i \eqdef \big| \, \{u_\ell \,|\, u_\ell \in \bigcup_{j=1}^k \cS_{i_j} \text{ and } \cC(u_\ell)=i \} \,\big|$}
\\
\hspace*{0.5in}
     {$=$}
\\
\hspace*{0.5in}
    {$p_j \eqdef \big| \, \{u_\ell \,|\, u_\ell \in \bigcup_{j=1}^k \cS_{i_j} \text{ and } \cC(u_\ell)=j \} \,\big|$}
\end{tabular}
\\
\\
\textbf{Objective:} & 
    {\emph{maximize} $\sum_{u_\ell\in\bigcup_{j=1}^k \cS_{i_j}} w(u_\ell)$}
\\
\bottomrule
\end{tabular}
\end{center}
\medskip

\deleted{We denote this problem by \fmcg{\chi}{k}, or just \fmc when $\chi$ and $k$ are clear from the context.}
\added{We denote \fmcg{\chi}{k} by just \fmc when $\chi$ and $k$ are clear from the context.}
In the sequel, we will distinguish between the following two versions of the problem: 
\begin{enumerate}[label=(\emph{\roman*})]
\item
\emph{unweighted} \fmc
in which $w(u_\ell)=1$ for all $\ell\in\{1,\dots,n\}$ and thus the objective is to maximize
the \emph{number} of elements covered, 
and 
\item
\emph{weighted} \fmc
in which $w(u_\ell)\geq 0$ for all $\ell\in\{1,\dots,n\}$.
\end{enumerate}
For the purpose of stating and analyzing algorithmic performances, we define the following 
notations and natural parameters associated with an instance of 
\fmcg{\chi}{k}: 
\begin{enumerate}[label=$\triangleright$]
\item
$a\in\{2,3,\dots,n\}$
denotes the maximum of the \emph{cardinalities} (number of elements) of all sets.
\item
$f\in\{1,2,\dots,m\}$ 
denotes the \emph{maximum} of the frequencies of all elements, where the \emph{frequency} 
of an element is 
the number of sets in which it belongs.
\item
$\opt$ denotes the optimal objective value of the given instance of \fmc.
\item
$\opt_{\#}$ denotes the \emph{number of covered elements} in an optimal solution of the given instance of \fmc.
For weighted \fmc, if there are multiple optimal solutions then $\opt_\#$ will the {maximum} number of elements 
covered \emph{among} these optimal solutions.
Note that $\opt=\opt_\#$
for unweighted \fmc. 
The reason we need to consider $\opt_\#$ separately from 
$\opt$ for weighted \fmc is because the coloring constraints are tied to $\opt_\#$ 
whereas the optimization objective is tied to $\opt$. 
\item
The performance ratios of many of our algorithms are expressed using the function $\varrho(\cdot)$:
\[
{\varrho(x) \eqdef {\big( 1 - \nicefrac{1}{x} \big)}^x}
\]
Note that 
$\varrho(x)<\varrho(y)$ for $x>y>0$ and  
$\varrho(x)> 1 - \bee^{-1}$ for all $x>0$.
\end{enumerate}
For $\NP$-completeness results, if the 
problem is trivially in $\NP$ then we will \emph{not} mention it.
To analyze our algorithms in this paper, we have used several standard mathematical equalities or inequalities which 
are listed explicitly below for the convenience of the reader: 
\begin{align}
& 
\forall \, x\in[0,1]\,:\,
\bee^{-x} \geq 1-x 
\label{stdeq1}
\\
& 
\forall \, x \,:\, \bee^{-x} = 1 - x + ({x^2}/{2}) \bee^{-\xi} \text{ for some } \xi\in[0,x] 
\label{stdeq3}
\\
& 
\textstyle
\forall \, \alpha_1,\dots,\alpha_q \geq 0 \,:\, \left( \frac{1}{q} {\sum_{j=1}^q \alpha_j} \right)^{\!q}
        \geq \prod_{j=1}^q \alpha_j 
\label{stdeq-new}
\\
& 
\textstyle
\forall \, x\in[0,1]\,\,
\forall \, y\geq 1\,:\,
1- \left( 1 - \frac{x}{y} \right)^{y}
\geq
\left( 1- \left( 1 - \frac{1}{y} \right)^{y} \right) x 
\label{stdeq-new2}
\end{align}

\subsection{Three special cases of the general version of \fmc}

In this subsection we state three important special cases of the general framework of \fmc. 

\medskip 
\noindent
\textbf{Fair maximum $k$-node coverage or \nfmc} 
\medskip 

This captures the scenario posed by the example in \FI{fig1}. 
We are given a connected undirected edge-weighted graph $G=(V,E)$ where $w(e)\geq 0$ denotes 
the weight assigned to edge $e\in E$, 
a color function $\cC: E\mapsto \{1,\dots,\chi\}$ assigning a color to every edge,  
and a positive integer $k$. 
A node $v$ is said to \emph{cover} an edge $e$ if $e$ is incident on $v$. 
A collection of $k$ nodes
$v_{i_1},\dots,v_{i_k}$
covering 
$p_i$ edges of color $i$ for each $i$ 
is considered a 
valid solution
provided 
$p_i=p_j$ for all $i$ and $j$. 
The objective is to \emph{maximize} 
the sum of weights of the covered edges. 
It can be easily seen that this is a special case of \fmc by using the \emph{standard} translation 
from node cover to set cover, \IE, 
the edges are the set of elements, and corresponding to every node $v$ there is a set containing the 
edges incident on $v$. Note that for this special case $f=2$ and 
$a$ is equal to the maximum node-degree in the graph.

\medskip 
\noindent
\textbf{Segregated \fmc or \sfmc} 
\medskip 

Segregated \fmc
is the special case of \fmc when all the elements in any set have the \emph{same} color, \IE, 
\[
\forall \, j\in\{1,\dots,m\} \, 
\forall \, u_p,u_q\in \cS_j \,:\, 
\cC(u_p)=\cC(u_q)
\]
\added[comment={added}]{
Another equivalent way of describing 
\sfmc is as follows.
Let $C_j$ is the set of all elements colored $j$ for $j\in\{1,\dotsc,\chi\}$ in the given instance of \sfmc.
Let the notation $2^A$ denote the power set for any set $A$. Then, \sfmc is the special case when 
$\cS_j \in \bigcup_{j=1}^\chi 2^{C_j}$ holds for all $j\in\{1,\dots,m\}$.
A simple example of an instance of \sfmc 
with $n=6$, $m=10$ and $\chi=2$ 
is shown below: 
}
\added{
\begin{gather*}
\cU=\{u_1,u_2,u_3,u_4,u_5,u_6\}
\\
\cC(u_1)=\cC(u_2)=\cC(u_3)=\cC(u_4)=1
,\,\,\, 
\cC(u_5)=\cC(u_6))=2
\\
\cS_1=\{u_1,u_2,u_3\}
,\,\,\, 
\cS_2=\{u_2,u_3,u_4\}
,\,\,\, 
\cS_3=\{u_5,u_6\}
,\,\,\, 
\cS_4=\{u_6\}
\\
w(u_1)= w(u_2)=7,
,\,\,\, 
w(u_3)= 9, 
,\,\,\, 
w(u_4)= w(u_5)= w(u_6)= 1
\end{gather*}
}
Even though computing an exact solution of \sfmc is still $\NP$-complete, it is 
much easier to approximate 
(see Section~\ref{sec-segr}).
From our application point of view as discussed in Section~\ref{sec-appl},
this may for example model cases in which city neighborhoods are segregated in some manner, \EG, 
racially or based on income.

\medskip 
\noindent
\textbf{$\Delta$-balanced \fmc or \bfmc} 
\medskip 

$\Delta$-balanced \fmc
is the special case of \fmc when the number of elements of each color in a set are within 
an \emph{additive} range of $\Delta$, \IE, 
\begin{multline*}
\forall \, j\in\{1,\dots,m\} \, 
\forall \, p\in \{1,\dots,\chi\} \,:\, 
\\
\max \left\{ 1,\, \left\lfloor \frac{|\cS_j|}{\chi} \right\rfloor
-\Delta
\right\}
\leq
\left| u_\ell \,|\, ( u_\ell\in\cS_j ) \wedge ( \cC(u_\ell)=p ) \right|
\leq
\left\lceil \frac{|\cS_j|}{\chi} \right\rceil 
+\Delta
\end{multline*}
Similar to \sfmc, it is 
much easier to approximate \bfmc
for small $\Delta$
(see Section~\ref{sec-bala}).

\medskip 
\noindent
\textbf{Geometric \fmc or \gfmc} 
\medskip 

In this \emph{unweighted} ``geometric'' version of \fmc, 
the elements are points in $[0,\Delta]^d$ for some $\Delta$ and some constant $d\geq 2$, the 
sets are unit radius balls in $\R^d$, and the distributions of points of different colors 
are given by $\chi$
Lipschitz-bounded measures. 
More precisely, the distribution of points of color $i$ is given by 
a probability measure $\mu_i$
supported on $[0,\Delta]^d$ with a $C$-Lipschitz density 
function\footnote{A function 
$f:\Delta\mapsto \R$
for some subset $\Delta$ of real numbers is 
$C$-Lipschitz provided 
$|f(x)-f(y)| \leq C \,|x-y|$
for all real numbers $x,y\in \Delta$.} 
for some $C>0$ that is upper-bounded by $1$.
Given 
a set of $k$ unit balls 
$\cB_1,\dots,\cB_k \subset \R^d$,
the number of points $p_i$ of color $i$ covered by these balls is given by 
$
\mu_i \big(\bigcup_{i=1}^k \cB_i\big)
$, and 
the total number of points covered by these balls 
is given by 
$
\sum_{i=1}^\chi \mu_i\big(\bigcup_{i=1}^k \cB_i\big)
$.
This variant has an \emph{almost} optimal polynomial-time approximation algorithm for \emph{fixed} $d$ and 
under some \emph{mild} assumption on $\opt$ (see Section~\ref{sec-gfmc}).

\section{Sketch of application scenarios}
\label{sec-appl}

\fmc and its variants are \emph{core} abstractions of many data-driven societal domain applications. 
We present three \emph{diverse} categories of application and highlight 
the \emph{real-world} fairness issues addressed by our 
problem formulations (leaving other applications in the cited references).

\medskip
\noindent
\textbf{Service/Facility Allocation}
\medskip

One of the most common data based policy decisions is assigning services/facilities across different places, \EG, 
placing schools~\cite{ChicagoSchoolClosure}, bus stops, or police/fire stations, choosing a few hospitals for 
specific medical facilities or services, or deciding where to put cell-phone towers.
Of course, a major objective in such assignments is to serve the \emph{maximum} number of people 
(\IE, maximize the \emph{coverage}).
Unfortunately, historical discriminations, such as \emph{redlining}~\cite{redlining}, through their 
long drawn-out effects of manifestations 
in different aspects of public policy are \emph{still} hurting the minorities.
As a result, blindly optimizing for maximum coverage biases the assignment against \emph{equitable} distribution of services.
Below are examples of two \emph{real} cases that further underline the importance of fairness while maximizing coverage:
\begin{enumerate}[label=$\triangleright$,leftmargin=*]
\item {\em Bike sharing:}
As more and more cities adopt advanced transportation systems such as bike-sharing, concerns such as equity and 
fairness arise with them~\cite{yan2019fairness}. For instance, according to~\cite{BikeShare1} 
the bike-sharing network at NYC neglects many low-income neighborhoods and communities of color while giving the 
priority to well-to-do neighborhoods. Here the location of bike stations (or bikes) determines the set of people 
that will have access to the service, perpetuating the unhealthy cycle of lack of transportation, movement, \emph{etc}.
\item {\em Delivery services for online shopping:}
Online shopping has by now gained a 
major share of the shopping market.
Platforms such as Amazon provide services such as same-day delivery 
to make e-shopping even more convenient to their customers.
While Amazon's main aim is to maximize the number of customers 
covered by this service, 
by not considering fairness it demonstrably \emph{failed} to provide such services for predominantly 
black communities~\cite{AmazonDeliveryBias1,AmazonDeliveryBias2}.
\end{enumerate}

\smallskip
\noindent
\textbf{Data Integration}
\medskip

Combining multiple data sources to augment 
the power of any individual data source
is a popular method for data collection. 
Naturally the main objective of data integration is to collect (``cover'') a maximum number of data points.
However, failing to include an adequate number of instances from minorities, known as {\em population bias}, 
in datasets used for training machine learning models is a major reason for model 
unfairness~\cite{olteanu2019social,asudeh2019assessing}.
For example, image recognition and motion detection services by Google~\cite{google-gorilla} and 
HP~\cite{hp1} with a reasonable overall performance failed to tag/detect African Americans since their 
training datasets did not include enough instances from this minority group.
While solely optimizing for coverage may result in biased datasets, considering fairness for integration may 
help remove population bias.

\medskip
\noindent
\textbf{Targeted advertisement}
\medskip

Targeted advertising is popular in social media.
Consider a company that wants to target its ``potential customers''.
To do so, the company needs to select a set of features (such as ``single'' or ``college student'') 
that specify the groups of users to be targeted.
Of course, the company wants to maximize coverage over its customers.
However, solely optimizing for coverage may result in incidents such as racism in the Facebook advertisements~\cite{AdRacism}
or sexism in the job advertisements~\cite{adSexism}.
Thus, a desirable goal for the company would be to select the keywords such that it 
provides fair coverage over users of diverse demographic groups.

\section{Review of prior related works}
\label{sec-prior-coverage}

To the best of our knowledge, \fmc in its full generalities has \emph{not} been separately 
investigated before. However, there are several prior lines of research that conceptually 
intersect with \fmc.

\medskip
\noindent
\textbf{Maximum $k$-set coverage and $k$-node coverage problems} 
\medskip

The maximum $k$-set coverage and $k$-node coverage 
problems are the same as the \fmc and \nfmc problems, respectively,
without element colors and without coloring constraints. 
These problems have been extensively studied in 
the algorithmic literature, \EG, 
see~\cite{AS04,F98,Ho97} for $k$-set coverage 
and~\cite{GLL18a,GLL18b,GNW07,M18,FL01,AS14,HYZZ02} for $k$-node coverage.
A summary of these results are as follows:

\begin{description}[leftmargin=15pt]
\item[\em $k$-set coverage:]
The best approximation algorithm 
for $k$-set coverage
is a deterministic algorithm that 
has an approximation ratio of 
$\max \big\{ \varrho(f) , \, \varrho(k) \big\} > 1-\nicefrac{1}{\bee}$~\cite{AS04,Ho97}. 
On the inapproximability side, 
assuming $\mathsf{P}\!\neq\!\NP$ an asymptotically optimal inapproximability ratio of $1-\nicefrac{1}{\bee}+\eps$ (for any $\eps>0$) 
is known for any polynomial-time algorithm~\cite{F98}.
\item[\em $k$-node coverage:]
The best approximation for 
$k$-node coverage 
is a randomized algorithm that has an approximation ratio of 
$0.7504$ with high probability~\cite{FL01,HYZZ02}.
On the inapproximability side, 
$k$-node coverage 
is $\NP$-complete even for bipartite graphs~\cite{AS14},
and 
cannot be approximated within a ratio of $1-\eps$ for some (small) constant $\eps>0$~\cite{L98,P94}.
More recently, 
Manurangsi~\cite{M18} 
provided an semidefinite programming based approximation algorithm with an approximation ratio of $0.92$, 
and Austrin and Stankovic~\cite{AS19} used the results in~\cite{AKS11} 
to provide an \emph{almost} matching upper bound of $0.929+\eps$ (for any $\eps>0$) 
on the approximation ratio of any polynomial time algorithm
assuming the \emph{unique games conjecture} is true.
There is also a significant body of prior research on the fixed parameter tractability issues 
for the $k$-node coverage problem: for example, 
$k$-node coverage 
is \emph{unlikely} to allow an FPT algorithm as it is $W[1]$-hard~\cite{GNW07}, but 
Marx designed an FPT \emph{approximation scheme} in~\cite{M08}
whose running times were subsequently improved in Gupta, Lee and Li in~\cite{GLL18a,GLL18b}.
\end{description}
\added{However, the coloring constraints make \fmc \emph{fundamentally different} from the maximum set or node coverage problems}.
Below we point out some of the significant aspects of these differences.
For comparison purposes, 
for an instance of \fmc let 
$\opt_{\mathrm{coverage}}$ 
denote the objective value of an optimal solution for the corresponding maximum $k$-set coverage problem 
for this instance by ignoring element colors and coloring constraints.
\begin{description}[leftmargin=15pt]
\item[\em Existence of a feasible solution:]
For the maximum $k$-set coverage problem, a feasible solution trivially exists for any $k$. However, 
a valid solution for 
\fmcg{\chi}{k}
may \emph{not} exist for some or all $k$ even if $\chi=2$ and in fact 
our results (Lemma~\ref{lem1}) show that even deciding if there exists a valid solution is $\NP$-complete.
The $\NP$-completeness result holds even if $f=1$ (\IE, the sets are mutually \emph{disjoint}); note that
if $f=1$ then 
it is \emph{trivial} to compute an optimal solution to the maximum $k$-set coverage problem. 
That is why for algorithmic purposes we will assume the existence of at least \emph{one} feasible 
solution\footnote{Actually, our $\LP$-relaxation based algorithms require only the existence of a feasible
\emph{fractional} solution but we cannot say anything about the approximation ratio in the absence of a 
feasible integral solution.} 
and for showing computational hardness results we will show the existence of at least one trivial 
feasible solution.
\item[\em Number of covered elements:]
The number of covered elements 
and the corresponding selected sets 
in an optimal solution in \fmc
can differ vastly from that in the 
maximum $k$-set coverage problem on the same instance.
The reason for the discrepancy is because
in \fmc one may need to select fewer covered elements to satisfy the
coloring constraints. 
\item[\em Exactly $k$ sets vs.\ at most $k$ sets:]
For the maximum $k$-set coverage problem
any solution trivially can use 
exactly $k$ sets and therefore there is \emph{no} change to the solution space whether the problem formulation 
requires exactly $k$ sets or at most $k$ sets. However, the corresponding situation for \fmc is different 
since it may be non-trivial to
convert a feasible solution containing $k'<k$ sets to one containing exactly $k$ sets because of the coloring 
constraints.
\end{description}

\smallskip
\noindent
\textbf{Discrepancy minimization problems} 
\medskip

Informally, the discrepancy minimization problem for set systems (\mindisc) is orthogonal to
unweighted \fmc.
Often \mindisc is studied in the context of \emph{two} colors, say red and blue, and is defined as follows.
Like unweighted \fmc we are given $m$ sets over $n$ elements. However, unlike \fmc element colors are \emph{not} given 
\emph{a priori} but the goal to color every element red or blue to minimize the maximum \emph{discrepancy}
over all sets, where the discrepancy of a set is the absolute difference of the number of red and blue elements 
it contains.
The Beck-Fiala theorem~\cite{BF81}
shows that the discrepancy of any set system is at most $2f$,  
Spencer showed in~\cite{S85} 
that the discrepancy of any set system is 
$O(\sqrt{n \log (2m/n)}\,)$, 
Bansal provided a randomized polynomial time algorithm achieving Spencer's bound in~\cite{B10}, and 
a deterministic algorithm with similar bounds were provided in~\cite{LRR17}.
On the lower bound side, it is possible to construct set systems such that the discrepancy is 
$\Omega(\sqrt{n}\,)$~\cite{C00}. 
For generalization of the formulation to \emph{more} than two colors and corresponding results, see
for example~\cite{D02,DS99,S03}.

\smallskip
\noindent
\added[comment={added}]{\textbf{Maximization of non-decreasing submodular set functions with linear inequality constraints}} 
\medskip

{
Kulik \EA in~\cite{KST09}
provided approximation algorithms for maximizing a non-decreasing submodular set function 
subject to multiple linear \textbf{inequality} constraints over the elements.
Unfortunately, because the linear constraints in 
\sfmc 
are \textbf{equality} constraints, \sfmc \emph{cannot} be put in the framework of~\cite{KST09}
and the approximation algorithms in~\cite{KST09}
do not directly apply to 
\sfmc. 
}

\section{\replaced{Summary of our contribution}{Summary of our contribution and proof techniques}}

\subsection{Feasibility hardness results}

Obviously \fmc (\emph{resp.}, \nfmc) obeys all the inapproximability results for the maximum 
$k$-set coverage (\emph{resp.}, $k$-node coverage) problem.
We show in Lemma~\ref{lem1} that determining feasibility of \fmc instances is $\NP$-complete 
even under very restricted parameter values; the proofs cover (or can be easily modified to cover) 
all the spacial cases of \fmc investigated in this paper. However, our subsequent algorithmic results show that 
even the existence of \emph{one} feasible solution gives rise to non-trivial approximation bounds for the 
objective and the coloring constraints.

\subsection{Algorithmic results}

A summary of our algorithmic results is shown in Table~\ref{tab-sum}.
Based on the discussion in the previous section, all of our algorithms assume that
at least one feasible solution for the \fmc instance exists.

\addtolength{\tabcolsep}{-5pt} 
\begin{table}[h]
\hspace*{-0.9in}
\scalebox{0.85}[0.85]{
\begin{tabular}{ c c c c c c c c c c}
\toprule
       & 
       &  
		   & 
			 \multicolumn{3}{|c|}{
		   {
        \begin{tabular}{c}
        coloring constraints \\ approximation 
				\end{tabular}
				  }
		     } 
			 & \multicolumn{3}{c|}{
		   {
        \begin{tabular}{c}
        parameter and other restrictions \\ (if any) 
				\end{tabular}
				  }
		     } 
       & { 
        theorem 
				 }
\\
\cmidrule{4-6}
\cmidrule{7-9}
        \begin{tabular}{c}
          problem \\ name 
				\end{tabular}
				& 
        \begin{tabular}{c}
           algorithm \\ name \\ \& type
				\end{tabular}
         & 
        \begin{tabular}{c}
        approximation \\ ratio
				\end{tabular}
				 &  
	    \multicolumn{1}{|c}{
			        \begin{tabular}{c}
               $\eps$-approximation \\ $\eps=$
			        \end{tabular}
										     } 
			         & 
			        \begin{tabular}{c}
			         randomized \\ $\eps$-approximation \\ $\eps=$ 
			        \end{tabular}
							 & 
					 \multicolumn{1}{c|}{
			        \begin{tabular}{c}
							 strong \\ randomized \\ $\eps$-approximation \\ $\eps=$
			        \end{tabular} 
							} &
       $\nopt$ & $\chi$ & \multicolumn{1}{c|}{other} & 
\\
[4pt]
\midrule
\multirow{4}{*}{
			        \begin{tabular}{c}
                   \\ \\ \\ \\ \\ \fmc
			        \end{tabular} 
									 } 
       & 
			        \begin{tabular}{c}
			            \algla \\ ${\cR\cA\mathfrak{l}\mathfrak{g}}$ 
			        \end{tabular} 
       &  
              $\varrho(f)$ 
       &  
			        $\nicefrac{\mathrm{N}}{\mathrm{A}}$
       &  
							$3.16f$
       &  
			        $O(f)$ 
       &  
							$\Omega(\chi \sqrt{n} \log\chi )\,\,\,\,$
       &  
			        ---$\,\,\,$
       &  
			        $\nicefrac{\mathrm{N}}{\mathrm{A}}$
       &  
			        Theorem~\ref{thm-main}
\\
\cmidrule{2-10}
       & 
			        \begin{tabular}{c}
			            \algmed \\ ${\cR\cA\mathfrak{l}\mathfrak{g}}$ 
			        \end{tabular} 
       &  
              $\varrho(f)$ 
       &  
			        $\nicefrac{\mathrm{N}}{\mathrm{A}}$
       &  
							$3.16f$
       &  
			        $O(f^2)$ 
       &  
			        \begin{tabular}{c}
			              $\Omega(a\chi\log\chi)$ \\ and \\ $O(\chi \sqrt{n \log\chi} )$
			        \end{tabular} 
       &  
			        ---$\,\,\,$
       &  
			        $\nicefrac{\mathrm{N}}{\mathrm{A}}$
       &  
			        Theorem~\ref{thm-main}
\\
\cmidrule{2-10}
       & 
			        \begin{tabular}{c}
			            \algsm \\ ${\cR\cA\mathfrak{l}\mathfrak{g}}$ 
			        \end{tabular} 
       &  
              $\varrho(f)$ 
       &  
			        $\nicefrac{\mathrm{N}}{\mathrm{A}}$
       &  
							$3.16f$
       &  
			        $O( f^2 \sqrt{a \,\chi \opt_\#}\,)$ 
       &  
			        ---$\,\,\,$
       &  
			        $O(\max\{1,\,\frac{\log n}{\log m}\})$ 
       &  
			        $\nicefrac{\mathrm{N}}{\mathrm{A}}$
       &  
			        Theorem~\ref{thm-main}
\\
\cmidrule{2-10}
& 
\multirow{2}{*}{
			        \begin{tabular}{c}
			            \\[-5pt] \algiter \\ ${\cD\cA\mathfrak{l}\mathfrak{g}}$ 
			        \end{tabular} 
							}
       &  
              $\nicefrac{1}{f}$ 
       &  
			        $O(f^2)$
       &  
			        $\nicefrac{\mathrm{N}}{\mathrm{A}}$
       &  
			        $\nicefrac{\mathrm{N}}{\mathrm{A}}$
       &  
			        ---$\,\,\,$
       &  
			        $O(1)$ 
       &  
              \begin{tabular}{c}
			             {at most} \\ $k+\frac{\chi-1}{2}$ sets
			        \end{tabular} $\,\,$ 
       &  
			        Theorem~\ref{thm-fmc-con}
\\
\cmidrule{3-10}
       &  
       &  
              $\nicefrac{1}{f}$ 
       &  
			        $O(f^2+\chi^2 f)$
       &  
			        $\nicefrac{\mathrm{N}}{\mathrm{A}}$
       &  
			        $\nicefrac{\mathrm{N}}{\mathrm{A}}$
       &  
			        ---$\,\,\,$
       &  
			        ---$\,\,\,$
       &  
              \begin{tabular}{c}
			             {at most} \\ $k+\chi-1$ sets
			        \end{tabular} $\,\,$ 
       &  
			        Theorem~\ref{thm-fmc-con}
\\
[4pt]
\midrule
\multirow{3}{*}{
			        \begin{tabular}{c}
                 \\[4pt] \nfmc
			        \end{tabular} 
								 }         
       & 
			 \multirow{2}{*}{
			        \begin{tabular}{c}
			            \\[-5pt] \algiter \\ ${\cD\cA\mathfrak{l}\mathfrak{g}}$ 
			        \end{tabular} 
							}
       &  
              $\nicefrac{1}{2}$ 
       &  
			        $4+4\chi$
       &  
			        $\nicefrac{\mathrm{N}}{\mathrm{A}}$
       &  
			        $\nicefrac{\mathrm{N}}{\mathrm{A}}$
       &  
			        ---$\,\,\,$
       &  
			        $O(1)$ 
       &  
              \begin{tabular}{c}
			             {at most} \\ $k+\frac{\chi-1}{2}$ sets
			        \end{tabular} $\,\,$ 
       &  
			        Theorem~\ref{thm-nfmc-con}
\\
\cmidrule{3-10}
& 
       &  
              $\nicefrac{1}{2}$ 
       &  
			        $4+2\chi+4\chi^2$
       &  
			        $\nicefrac{\mathrm{N}}{\mathrm{A}}$
       &  
			        $\nicefrac{\mathrm{N}}{\mathrm{A}}$
       &  
			        ---$\,\,\,$
       &  
			        ---$\,\,\,$
       &  
              \begin{tabular}{c}
			             {at most} \\ $k+\chi-1$ sets
			        \end{tabular} $\,\,$ 
       &  
			        Theorem~\ref{thm-nfmc-con}
\\
\cmidrule{2-10}
\multicolumn{9}{c}{Other results: same as \fmc with $f=2$}
\\
[4pt]
\midrule
\sfmc & 
              \begin{tabular}{c}
			            \alggreed \\ ${\cD\cA\mathfrak{l}\mathfrak{g}}$ 
			        \end{tabular} 
       &  
              $\varrho$ 
       &  
			        $2$
       &  
			        $\nicefrac{\mathrm{N}}{\mathrm{A}}$
       &  
			        $\nicefrac{\mathrm{N}}{\mathrm{A}}$
       &  
			        ---$\,\,\,$
       &  
			        ---$\,\,\,$
       &  
              \begin{tabular}{c}
			             {at most} \\ $k$ sets
			        \end{tabular} $\,\,$ 
       &  
			        Theorem~\ref{thm-segr-app}
\\
[4pt]
\midrule
\bfmc & 
              \begin{tabular}{c}
			            \alggreedsmpl \\ ${\cD\cA\mathfrak{l}\mathfrak{g}}$ 
			        \end{tabular} 
       &  
              $\varrho$ 
       &  
			        $O(\Delta f)$
       &  
			        $\nicefrac{\mathrm{N}}{\mathrm{A}}$
       &  
			        $\nicefrac{\mathrm{N}}{\mathrm{A}}$
       &  
			        ---$\,\,\,$
       &  
			        ---$\,\,\,$
       &  
			        $\nicefrac{\mathrm{N}}{\mathrm{A}}$
       &  
			        Proposition~\ref{prop-bala-app}
\\
[4pt]
\midrule
\gfmc & 
              \begin{tabular}{c}
			            \alggeom \\ ${\cR\cA\mathfrak{l}\mathfrak{g}}$ 
			        \end{tabular} 
       &  
              $1-O(\delta)$ 
       &  
			        $\nicefrac{\mathrm{N}}{\mathrm{A}}$
       &  
			        $\nicefrac{\mathrm{N}}{\mathrm{A}}$
       &  
			        $1+\delta$
       &  
			        ---$\,\,\,$
       &  
			        ---$\,\,\,$
       &  
              \begin{tabular}{c}
                   $d=O(1)$ 
			             \\ 
									 [3pt]
									 $\frac{\opt}{\chi}\geq \eta$
			        \end{tabular} $\,\,$ 
       &  
			        Theorem~\ref{thm-gfmc}
\\
\bottomrule
\\
\multicolumn{10}{c}{
		 \hspace*{-1.6in}
     ${\cD\cA\mathfrak{l}\mathfrak{g}}$: deterministic algorithm 
		 \hspace*{0.3in}
     ${\cR\cA\mathfrak{l}\mathfrak{g}}$: randomized algorithm 
}
\\
[2pt]
\multicolumn{10}{c}{
		 \hspace*{-1.6in}
		 $\nicefrac{\mathrm{N}}{\mathrm{A}}$: not applicable
		 \hspace*{0.3in}
		 ---: unrestricted
   }
\\
[2pt]
\multicolumn{10}{c}{
		 \hspace*{-1.3in}
     ${\varrho(x) = {\big( 1 - \nicefrac{1}{x} \big)}^x}>1-\nicefrac{1}{\bee}$
		 \hspace*{0.3in}
		 $\varrho=\max\{ \varrho(f), \, \varrho(k) \}>1-\nicefrac{1}{\bee}$
}
\\
[2pt]
\multicolumn{10}{c}{
		 \hspace*{-1.3in}
		 $\delta$: any constant in the range (0,1] 
		 \hspace*{0.3in}
		 $\eta$: any constant strictly greater than $1$
   }
\end{tabular}
}
\caption{\label{tab-sum}A summary of our algorithmic results. The $O(\cdot)$ notation is used when constants 
are irrelevant or not precisely calculated.
Precise definitions of 
(deterministic) $\eps$-approximation, 
randomized $\eps$-approximation and 
strong randomized $\eps$-approximation
appear in Section~\ref{sec-relax}.
} 
\end{table}
\addtolength{\tabcolsep}{7pt} 


\subsection{{{Remarks on proof techniques}}}

{
{
\smallskip
\noindent
\textbf{Distributions on level sets with negative correlations}
}
}
\smallskip

{
{
Our randomized algorithms use the sampling result by 
Srinivasan~\cite{S01} which allows one to sample variables satisfying an equality \emph{precisely} while 
still ensuring that the variables are \emph{negatively} correlated and therefore the tail bounds 
by Panconesi and Srinivasan~\cite{PS97}
can be applied. 
This allows the randomized algorithms in 
Theorem~\ref{thm-main}
to select precisely $k$ sets while still preserving the properties of the distribution
of variables that are needed for the proof. 
}
}

\medskip
\noindent
{
{
\textbf{Strengthening $\LP$-relaxation via additional inequalities}
}
}
\smallskip

{
{
As we show in Section~\ref{sec-str}, 
a straightforward $\LP$-relaxation of \fmc based on a corresponding known $\LP$-relaxation of maximum 
$k$-set coverage problems does \emph{not} have an finite integrality gap and therefore unsuitable for further 
analysis. To get around this, we use an approach similar to what was used by 
prior researchers (\EG, see the works by Carnes and Shmoys~\cite{CD08} and Carr~\EA~\cite{CFLP00})
by introducing extra $O(fn)$ \emph{covering inequalities}
which brings down the integrality gap 
and allows the results in Theorem~\ref{thm-main} to go through. 
}
}

{
{
Moreover, we had to separately modify existing constraints of or add new constraints to the basic $\LP$-relaxation
for the three algorithms, namely algorithms 
\algsm, \algmed and \algiter.
Modifications for 
\algsm and \algmed
in Theorem~\ref{thm-main}
are done to 
encode the coloring constraints suitably to 
optimize their coloring constraint approximation bounds for the corresponding parameter ranges.
The modifications for \algiter in Theorem~\ref{thm-nfmc-con} and Theorem~\ref{thm-fmc-con}
are necessary for the iterated rounding approach to go through.
}
}

\medskip
\noindent
{
{
\textbf{Doob martingales and Azuma's inequality}
}
}
\smallskip

{
{
The analysis of the rounding step of our various $\LP$-relaxations are further complicated by the fact 
that the random element-selection variables
may \emph{not} be pairwise independent;  
in fact, it is easy to construct examples in which 
each element-selection variable may be correlated to about $af$ other element-selection variables, 
thereby ruling out straightforward use of Chernoff-type tail bounds.
For sufficient large $\opt_\#$, 
we remedy this situation by using Doob martinagales and Azuma's inequality in the analysis of 
\algla in Theorem~\ref{thm-main}.
}
}

\medskip
\noindent
{
{
\textbf{Iterated rounding of $\LP$-relaxation}
}
}
\smallskip

{
{
The analysis of our deterministic algorithm 
\algiter uses the iterated rounding approach originally introduced by Jain in~\cite{J01} and subsequently used
by many researchers (the book by Lau, Ravi and Singh~\cite{LRS11} provides an excellent overview of the topic). 
A crucial ingredient of this technique used in our proof is the rank lemma.
In order to use this technique, we had to modify the $\LP$-relaxation again.
When $\chi=O(1)$, we can do two exhaustive enumeration steps in polynomial time, giving rise 
to a somewhat better approximation of the coloring constraints. 
}
}

\medskip
\noindent
{
{
\textbf{Random shifting technique}
}
}
\smallskip

\medskip
{
{
The analysis of our deterministic algorithm \alggeom 
uses the random shifting technique that has been used by prior researchers such as~\cite{A98,M99}.
}
}

\section{Organization of the paper and proof structures}

The rest of the paper is organized as follows.
\begin{enumerate}[label=$\triangleright$,leftmargin=*]
\item 
In Section~\ref{sec-hard-feas} we present our result
in Lemma~\ref{lem1}
on the computational hardness of finding a feasible solution of \fmc.
\item 
Based on the results in Section~\ref{sec-hard-feas}, we need to make some minimal assumptions and 
need to consider appropriate approximate variants of the coloring constraints. They are discussed in 
Section~\ref{sec-relax}
for the purpose of designing (deterministic or randomized) approximation algorithms.
\item 
In Section~\ref{sec-lp-rand}
we design and analyze our $\LP$-relaxation based randomized approximation algorithms for \fmc.
In particular, in Theorem~\ref{thm-main}
we employ two different $\LP$-relaxation of \fmc 
and combine three randomized rounding analysis on them 
to get an approximation algorithm whose approximation qualities depend on the range of relevant 
parameters.
\begin{enumerate}[label=$\triangleright$]
\item
Parts of the algorithm and analysis specific to the three algorithms 
\algla, \algmed and \algsm
are discussed in 
Section~\ref{sec-algla1}, 
Section~\ref{sec-algmed1} and 
Section~\ref{sec-algsm1}, respectively.
\item
Proposition~\ref{prop1} in Section~\ref{sec-limit-lp}
shows that 
the dependence of the coloring constraint bounds in Theorem~\ref{thm-main}(\emph{e})(\emph{i})--(\emph{ii})
on $f$ cannot be completely eliminated by better analysis of our $\LP$-relaxations
even for $\chi=2$. 
\end{enumerate}
\item 
In Section~\ref{sec-nfmc-det} 
we provide polynomial-time deterministic approximations of \fmc via iterated rounding
of a new $\LP$-relaxation. 
Our approximation qualities depend on the parameters $f$ and $\chi$.
\begin{enumerate}[label=$\triangleright$]
\item
For better understanding, we first prove our result for the special case 
\nfmc of \fmc 
in Theorem~\ref{thm-nfmc-con} (Section~\ref{sec-nfmc-con})
and later on describe how to adopt the same approach for \fmc
in Theorem~\ref{thm-fmc-con} (Section~\ref{sec-fmc-con}).
\item
The proofs for both 
Theorem~\ref{thm-nfmc-con} and
Theorem~\ref{thm-fmc-con} 
are themselves divided into two parts depending on whether $\chi=O(1)$ or not.
\end{enumerate}
\item 
In Section~\ref{sec-special-fmc} 
we provide deterministic 
approximation algorithms for two special cases of \fmc, namely 
\sfmc and \bfmc.
\item 
Section~\ref{sec-gfmc}
provides the deterministic 
approximation for \gfmc.
\end{enumerate}
Our proofs are structured as follows. A complex proof is divided into subsections corresponding to 
logical sub-divisions of the proofs and the algorithms therein. Often we provide some informal intuitions
behind the proofs (including some intuition about why other approaches may not work, if appropriate) 
before describing the actual proofs.

\section{Computational hardness of finding a feasible solution of \fmc}
\label{sec-hard-feas}

We show that determining if a given instance of 
\fmc
has even one feasible solution is $\NP$-complete even in very restricted parameter settings.
The relevant parameters of importance for \fmc is $a$, $f$ and $\chi$; Lemma~\ref{lem1} shows that 
the $\NP$-completeness result holds even for very small values of these parameters.

\begin{lemma}\label{lem1}
Determining feasibility of 
an instance of 
\fmc 
of $n$ elements is $\NP$-complete
even with the following restrictions:
\begin{enumerate}[label=$\triangleright$]
\item 
the instances correspond to the unweighted version, 
\item 
the following combinations of maximum set-size $a$, frequency $f$ and number of colors $\chi$ are satisfied:
\begin{enumerate}[label=\emph{(}\alph*\emph{)}]
\item
$f\in \{1,3\}$, all but one set contains exactly $3$ elements and all $\chi\geq 2$,
\item
the instances correspond to \nfmc (which implies $f=2$), 
$a=O(\sqrt{n}\,)$, and all $\chi\geq 2$, or 
\item
$f=1$, $a=3$ and $\chi=\nicefrac{n}{3}$.
\end{enumerate}
%
\end{enumerate}
Moreover, the following assertions also hold: 
\begin{enumerate}[label=$\triangleright$]
\item 
The instances of \fmc generated in 
\emph{(\emph{a})}
and 
\emph{(\emph{b})}
actually are instances of 
\sfmc.
\item 
For the instances of \fmc generated in 
\emph{(\emph{c})},
$\opt_\#=\chi=\nicefrac{n}{3}$
and, assuming $P\neq\NP$,
there is no polynomial time approximation algorithm that has either a finite approximation ratio 
or satisfies the coloring constraints 
in the $\eps$-approximate sense $($cf.\ eq.\ \eqref{eq1}$)$
for any finite $\eps$. 
\end{enumerate}
\end{lemma}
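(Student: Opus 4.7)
The plan is to handle NP-membership trivially (given a purported $k$-subset of sets, verify $p_1 = \cdots = p_\chi$ in polynomial time) and then reduce from suitable restricted variants of \textsc{Exact Cover by 3-Sets} (\textsc{X3C}) for each of the three hardness regimes.

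For part \emph{(a)} I would reduce from \textsc{X3C-3}, the NP-hard variant in which every element appears in exactly three sets. Given an instance $(E,C)$ with $|E| = 3q$, I would assign color~$1$ to the elements of $E$ and turn each triple in $C$ into a monochromatic size-$3$ red set; for each additional color $c \in \{2,\dots,\chi\}$ I would introduce $3q$ fresh elements split into $q$ pairwise disjoint size-$3$ monochromatic sets, and set $k = \chi q$. If a feasible solution picks $i_c$ sets of color $c$, then $p_c = 3 i_c$ for $c \geq 2$ by disjointness, while $p_1 \leq 3 i_1$ with equality only when the chosen red sets are pairwise disjoint. Equalizing all $p_c$ forces $i_c = q$ for every $c$ and an exact cover of $E$. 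By construction $f \in \{1,3\}$, every set has size exactly three, and all sets are monochromatic, so the instance lies in \sfmc.

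For part \emph{(c)} I would reduce from \textsc{X3C} padded so that $m = 3q$ (still NP-hard). For each $C_j = \{e_a, e_b, e_c\}$ I would create a fresh bin $B_j$ of three elements with colors $a, b, c$; the bins are pairwise disjoint ($f = 1$), each of size $3$, and $\chi = 3q = n/3$. Setting $k = q$, any selection covers exactly $3q = \chi$ elements, so balanced coverage requires every color to be covered exactly once, which is equivalent to an \textsc{X3C} exact cover. This immediately yields $\opt_{\#} = \chi$ on feasible instances, and the inapproximability claim follows by pigeonhole: on a NO instance, every $k$-selection leaves some color with zero covered elements and some other color with count $\geq 2$, so the ratio $p_i/p_j$ is infinite, ruling out any finite approximation ratio on $\opt$ or any finite-$\varepsilon$ approximate coloring guarantee in the sense of eq.~\eqref{eq1} unless $\mathsf{P} = \mathsf{NP}$.

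Part \emph{(b)} is where I expect the main obstacle. The NFMC constraint forces $f = 2$, so I would encode an \textsc{X3C-3} instance as a graph whose color-$1$ component is bipartite between set-nodes $\{v_j\}$ and element-nodes $\{u_a\}$ with a red edge $(v_j, u_a)$ for each incidence $e_a \in C_j$, and I would attach $q$ disjoint monochromatic three-leaf stars for each additional color. All components are monochromatic (so the instance is \sfmc), every node has degree $3$, and therefore $a = 3 = O(\sqrt{n})$ holds trivially. The subtlety is that picking an element-node $u_a$ covers three red edges, exactly as many as picking a set-node $v_j$, so naively there could be balanced solutions which do not correspond to any \textsc{X3C} exact cover. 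I would resolve this either by attaching an auxiliary enforcement subgraph (kept inside the same red component to preserve \sfmc) that makes element-node picks strictly suboptimal for balance, or by switching the source problem to a bipartite variant (for example a carefully constructed \textsc{3-Dimensional Matching} or \textsc{Balanced Biclique} instance) whose graph encoding naturally breaks the symmetry between the ``set'' and ``element'' sides while still respecting the $a = O(\sqrt{n})$ bound.
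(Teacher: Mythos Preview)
Your argument for part~(c) is essentially the paper's: replace each X3C triple $\{e_a,e_b,e_c\}$ by a fresh disjoint $3$-set whose elements carry colours $a,b,c$, and observe that any $k=q$ sets cover exactly $3q=\chi$ elements, so balance is equivalent to an exact cover. This part is fine.

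Part~(a), however, has a genuine gap. From $p_c=3i_c$ for $c\geq 2$, $p_1\leq 3i_1$, $p_1=\cdots=p_\chi$, and $\sum_c i_c=\chi q$ you \emph{cannot} conclude $i_c=q$ for all $c$. Writing $p=3(q-t)$ gives $i_c=q-t$ for $c\geq 2$ and $i_1=q+(\chi-1)t$; nothing you stated rules out $t\geq 1$ provided the X3C-3 instance contains $q+(\chi-1)t$ red triples whose union has size exactly $3(q-t)$. Such configurations do exist in NO instances (for $\chi=2$, $q=4$ one can build a $12$-element X3C-3 instance with five triples covering a fixed $9$-element subset), so your reduction can map a NO instance of X3C to a feasible \fmc instance. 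The paper sidesteps this entirely: instead of $q$ disjoint triples per extra colour, it adds a \emph{single} set of size $3q$ containing all colour-$2$ elements (this is precisely the ``all but one set contains exactly $3$ elements'' clause in the lemma). Then $p_2\in\{0,3q\}$, so any feasible solution has $p_1=3q$, forcing the $k-1=q$ red sets to be an exact cover.

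For part~(b) you correctly identify the obstacle in your bipartite-incidence encoding (element-nodes look just like set-nodes), but your proposed fixes are speculative. The paper takes a different and much cleaner route: reduce from vertex cover on cubic graphs. Colour all edges of the cubic graph $G$ with colour~$1$, adjoin a clique $K_{(3n'/2)+1}$ whose edges all get colour~$2$, and set $k=k'+1$. Any feasible solution must pick exactly one clique vertex (covering $3n'/2$ colour-$2$ edges), and the remaining $k'$ vertices must then cover all $3n'/2$ colour-$1$ edges of $G$, i.e.\ form a vertex cover. The clique vertex has degree $3n'/2$ while the total edge count is $\Theta((n')^2)$, giving $a=O(\sqrt{n})$; and both components are monochromatic, so the instance is \sfmc.
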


A proof of Lemma~\ref{lem1} appears in the appendix.

\begin{remark}
It may be tempting to conclude that an approach similar to what is stated below 
using the 
$k$-set coverage 
problem as a ``black box''
may make the claims in Lemma~\ref{lem1}
completely obvious.
We simply take any hard instance of $k$-set coverage and
equi-partition the universe arbitrarily into $\chi$ many color classes and let this be the corresponding instance 
of \fmc.
Using a suitable standard reductions of $\NP$-hardness 
the $k$-set coverage problem,
if there is a feasible solution 
of \fmc
then the $k$ sets trivially cover the entire universe and thus trivially satisfy 
the color constraints but otherwise one may (incorrectly) claim 
that no $k$ sets cover the universe and so the fairness constraints cannot be satisfied.
Additionally, one may be tempted to argue that if one takes a 
$k$-set coverage problem
instance with any additional structure (\EG, bounded occurrence of universe elements) then the coloring does not affect this 
additional property at all and hence the property is retained in the 
$k$-set coverage problem
instance with color constraints.

However, such a generic reduction will fail because 
it is incorrect and because it will not 
capture all the special parameter restrictions imposed in Lemma~\ref{lem1}.
For example:
\begin{enumerate}[label=$\triangleright$]
\item 
Even though the $k$ sets may not cover the entire universe, it is still possible that 
they may satisfy the color constraints. For example, 
consider the following instance of 
the $k$-set coverage problem: 
\begin{gather*}
\cU=\{u_1,u_2,u_3,u_4,u_5,u_6\},
\,
k=2,
\,
\cS_1 =\{u_1,u_2\},\,
\cS_2 =\{u_3,u_4\},\,
\cS_3 =\{u_5,u_6\}
\end{gather*}
Suppose we select the equi-partition 
$\{u_1,u_3,u_5\},\{u_2,u_4,u_6\}$, thus setting 
$
\cC(u_1)=
\cC(u_3)=
\cC(u_5)= 1
$
and
$
\cC(u_2)=
\cC(u_4)=
\cC(u_6)= 2
$. Then any two selected sets will satisfy the coloring constraints.
\item 
Consider the requirements of the \fmc\ instances in part 
{(\emph{c})}.
Since $f=1$, every element occurs in exactly one set and thus the set systems for the 
$k$-set coverage problem form a partition of the universe. Such an instance cannot be a hard instance 
of 
$k$-set coverage since it admits a trivial polynomial time solution: 
sort the sets in non-decreasing order of their cardinalities and simply take the first $k$ sets.
\end{enumerate}
\end{remark}

\section{Relaxing coloring constraints for algorithmic designs}
\label{sec-relax}

Based on Lemma~\ref{lem1} we need to make the following \emph{minimal} assumptions for the purpose of designing 
approximation algorithms with finite approximation ratios:
\begin{enumerate}[label=(\emph{\roman*})]
\item 
We assume the existence of at least one feasible solution for the given instance of \fmc.
\item 
We assume that $\opt_\#$ is sufficiently large compared to $\chi$, \EG, $\opt_\#\geq c\chi$ for some large constant $c>1$.
\end{enumerate}
%
Lemma~\ref{lem1} and the example in \FI{fig1} also show that 
satisfying the color constraint \emph{exactly} (\IE, requiring $\nicefrac{p_i}{p_j}$ to be exactly equal to 
$1$ for all $i$ and $j$)
need to be relaxed for the purpose of designing efficient algorithms since 
non-exact solutions of \fmc may not satisfy these constraints exactly. 
We define an (\emph{deterministic}) ${\eps}$-\emph{approximate coloring} of \fmc
(for some $\eps\geq 1$) to be a coloring that satisfies the coloring constraints in the following manner: 
\begin{gather}
\text{\bf Deterministic ${\eps}$-approximate coloring}:
\,\,\,\,
\forall \, i,j\in\{1,\dots,\chi\} \,:\,
p_i \leq \eps p_j
\label{eq1}
\end{gather}
Note that~\eqref{eq1} automatically implies that 
$p_i \geq  p_j/\eps$ for all $i$ and $j$.
Thus, in our terminology, a $1$-approximate 
coloring corresponds to satisfying the coloring constraints exactly.
Finally, if our algorithm is randomized, then the $p_j$'s could be a random values, and 
then we will assume that the relevant constraints will be satisfied 
\emph{in expectation} or \emph{with high probability} 
in an appropriate sense. More precisely, \eqref{eq1} will be modified 
as follows:\footnote{We do not provide a bound on $\Ave{\nicefrac{p_i}{p_j}}$ since
$\nicefrac{p_i}{p_j}=\infty$ when $p_j=0$ and $p_j$ may be zero with a strictly positive probability, and 
for arbitrary $\chi$ selecting a set individually for each to avoid this situation in our 
randomized algorithms may select too many sets.}
\begin{description}[leftmargin=15pt,labelindent=10pt]
\item[\bf Randomized ${\eps}$-approximate coloring:]$\,$

\hspace*{1in}
$
\forall \, i,j\in\{1,\dots,\chi\} \,:\,
\Ave{p_i} \leq \eps \, \Ave{p_j}
$
\hfill {\eqref{eq1}$'$}
\smallskip
\item[\bf Randomized strong ${\eps}$-approximate coloring:]$\,$

\hspace*{1in}
$
\bigwedge_{ i,j\in\{1,\dots,\chi\} } 
\Big( \prob { p_i \leq \eps p_j } \Big) \geq 1 - o(1)
$
\hfill {\eqref{eq1}$''$}
\end{description}
Unless otherwise stated explicitly, our algorithms will select exactly $k$ sets.

\section{$\LP$-relaxation based randomized approximation algorithms for \fmc}
\label{sec-lp-rand}

If $k$ is a constant then we can solve 
\fmcg{\chi}{k}
\emph{exactly} in polynomial (\IE, $O(n^k)$) time by exhaustive enumeration, 
so we assume that $k$ is at least a sufficiently large constant.
In this section we will employ two slightly different $\LP$-relaxation of \fmc 
and combine three randomized rounding analysis on them 
to get an approximation algorithm whose approximation qualities depend on the range of various relevant 
parameters.
The combined approximation result is stated in 
Theorem~\ref{thm-main}.
\emph{In the proof of this theorem 
no serious attempt was made to optimize most constants since we are 
mainly interested in the asymptotic nature of the bounds, 
and to simplify exposition constants have been over-estimated to get nice integers}.
In the statement of 
Theorem~\ref{thm-main}
and in its proof 
we will refer to the three algorithms corresponding to the two $\LP$-relaxations as 
\algsm, \algmed and \algla.

\begin{theorem}\label{thm-main}
Suppose that the instance of \fmcg{\chi}{k} has $n$ elements and $m$ sets.
Then, we can design three randomized polynomial-time algorithms 
\algsm, \algmed and \algla
with the following properties: 
\begin{enumerate}[label=\textbf{\emph{(}\alph*\emph{)}},leftmargin=*]
\item
All the three algorithms
select $k$ sets $($with probability $1)$.
\item
All the three algorithms
are randomized  
$\varrho(f)$-approximation for \fmc, \emph{\IE}, 
the expected total weight of the selected elements for both 
algorithms is at least $\varrho(f)>1-\nicefrac{1}{\bee}$ times $\opt$.
\item
All the three algorithms
satisfy
the randomized $\eps$-approximate coloring constraints $($cf.\ Inequality~\eqref{eq1}$')$
for 
$\eps=O(f)$, \emph{\IE}, 
for all 
$i,j\in\{1,\dots,\chi\}$, 
$\frac{\Ave{p_i}}{\Ave{p_j}} \leq 
\frac{2 f }{  \varrho(f) }
< 3.16 f$.
\item
The algorithms satisfy
the strong randomized $\eps$-approximate coloring constraints $($cf.\ Equation~\eqref{eq1}$''$, \emph{\IE},
$\bigwedge_{ i,j\in\{1,\dots,\chi\} } \big( \prob { p_i \leq \eps p_j } \big) \geq 1 - o(1))$
for the values of $\eps$, $\opt_\#$ and $\chi$ as shown below:

\hspace*{-0.8in}
\begin{tabular}{l c c c c}
\midrule
& $\eps$ & range of $\opt_\#$ & range of $\chi$ & algorithm 
\\
\midrule
\text{(\emph{i})} & $O(f)$ & $\Omega(\chi \sqrt{n} \log\chi )$ & unrestricted & \text{\algla}
\\
\midrule
\text{(\emph{ii})} &  $O(f^2)$ & $\Omega(a\chi\log\chi)$ \text{and} $O(\chi \sqrt{n \log\chi} )$ & unrestricted & \text{\algmed}
\\
\midrule
\text{(\emph{iii})} &  $O( f^2 \sqrt{a \,\chi \opt_\#}\,)$ & unrestricted  & $\chi =O(\max\{1,\,\frac{\log n}{\log m}\})$ & \text{\algsm}
\\
\midrule
\end{tabular}
%
\end{enumerate}
\end{theorem}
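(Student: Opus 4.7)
}

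The strategy is to design two closely related LP relaxations of \fmc\ and round them via Srinivasan's level-set sampling, then run three different concentration analyses tailored to the three regimes of $\opt_\#,\chi,a,f$. First I would write down the common base LP: introduce $x_j\in[0,1]$ for each set $\cS_j$ with $\sum_j x_j=k$, and $y_u\in[0,1]$ for each element $u$ with $y_u\le \sum_{j:\,u\in\cS_j}x_j$, maximizing $\sum_u w(u)\,y_u$; impose the color-balance constraints $\sum_{u:\cC(u)=i}y_u=\sum_{u:\cC(u)=j}y_u$ for all $i,j$. As noted in Section~\ref{sec-str}, this basic LP has an unbounded integrality gap, so I add the $O(fn)$ covering inequalities $y_u\ge x_j$ for every $u\in\cS_j$, following the Carnes--Shmoys and Carr \EA strengthening; these are valid since a single selected set already covers each of its elements. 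For \algmed\ and \algsm\ I would further modify/augment the color-balance constraints to tighten the respective coloring-approximation bounds in their parameter ranges.

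Next, let $(x^{*},y^{*})$ be an optimal LP solution, and round $x^{*}$ using Srinivasan's dependent rounding~\cite{S01} to produce a random vector $X\in\{0,1\}^m$ with $\sum_j X_j=k$ deterministically, $\Ave{X_j}=x^{*}_j$, and the coordinates $\{X_j\}$ negatively correlated in the sense needed by Panconesi--Srinivasan~\cite{PS97}. Declaring $\cS_j$ selected iff $X_j=1$ yields part~(a). For part~(b), fix $u$ with $J_u=\{j:\,u\in\cS_j\}$, so $|J_u|\le f$; by negative correlation,
\[
\Pr\big[u\text{ is covered}\big] \;=\; 1-\Ave{\textstyle\prod_{j\in J_u}(1-X_j)} \;\ge\; 1-\textstyle\prod_{j\in J_u}(1-x^{*}_j),
\]
which by \eqref{stdeq-new} and \eqref{stdeq-new2}, combined with the covering inequality $y^{*}_u\ge x^{*}_j$ that lets one replace each $x^{*}_j$ by a bounded fraction of $y^{*}_u$, is at least $\varrho(f)\,y^{*}_u$. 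Weighting by $w(u)$ and summing gives $\Ave{\text{total weight covered}}\ge \varrho(f)\,\sum_u w(u)\,y^{*}_u\ge \varrho(f)\,\opt$. For part~(c), the same pointwise estimate yields $\varrho(f)\,P_i^{*}\le \Ave{p_i}\le P_i^{*}$ where $P_i^{*}:=\sum_{u:\cC(u)=i}y^{*}_u$; the LP color constraints force the $P_i^{*}$'s to be (approximately) equal, so $\Ave{p_i}/\Ave{p_j}\le 2f/\varrho(f)<3.16f$, where the extra factor-$f$ slack accounts for the relaxed color constraints used in \algmed\ and \algsm.

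The main work lies in part~(d), where the obstacle is that the indicator ``$u$ is covered'' depends on up to $f$ of the $X_j$'s and therefore is correlated with roughly $af$ other element indicators, precluding a direct Chernoff bound on $p_i=\sum_{u:\cC(u)=i}\mathbf{1}[u\text{ covered}]$. For \algla\ I would invoke a Doob martingale that reveals the $X_j$'s one at a time under Srinivasan's process; each revelation shifts the conditional expectation of $p_i$ by at most $a$ since $|\cS_j|\le a$, so Azuma's inequality yields $|p_i-\Ave{p_i}|=O(a\sqrt{m\log\chi}\,)$ with probability $1-o(1/\chi^2)$, and a union bound over color pairs absorbs this into the $\eps=O(f)$ slack exactly when $\opt_\#=\Omega(\chi\sqrt{n}\log\chi)$. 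For \algmed, the modified LP tightens the per-step differences and the Azuma bound, yielding $\eps=O(f^2)$ across its middle range. For \algsm, when $\chi=O(\log n/\log m)$ one has $m^{\chi}\gtrsim n$, so the color sums reduce to sums over the $k$ negatively correlated set indicators, whereupon the Panconesi--Srinivasan tail bound applied directly to the sums of set sizes of each color delivers the claimed $\eps=O(f^{2}\sqrt{a\chi\opt_\#}\,)$.

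The hardest step, as flagged above, will be the martingale analysis for \algla: balancing the $m$-length Doob filtration, the bounded difference $a$, the size $n$ of the universe, and the union-bound factor $\chi$ against the target ratio bound $O(f)$ to get the sharp threshold $\opt_\#=\Omega(\chi\sqrt{n}\log\chi)$ requires care in choosing which quantity (the sums $p_i$ themselves, or normalized color densities) to track along the martingale, and in controlling how the covering inequalities keep the martingale differences from inflating when many elements share the same set.
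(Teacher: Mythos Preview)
Your overall architecture (strengthened LP with covering inequalities, Srinivasan rounding, then three separate concentration analyses) matches the paper, but several of the concrete steps are off in ways that would block the stated bounds.

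\textbf{Part~(c): the upper bound on $\Ave{p_i}$.} You claim $\Ave{p_i}\le P_i^{*}$, but this is false: if $u$ lies in two sets with $x_j^{*}=1/2$ each, the covering inequalities only force $y_u^{*}\ge 1/2$, yet $\Pr[u\text{ covered}]=3/4>y_u^{*}$. The paper instead proves $\Ave{x_j^{+}}\le \min\{1,\,2f\,x_j^{*}\}$ by a case split on whether some containing set has $y_\ell^{*}>1/2$ (using the covering inequality $x_j^{*}\ge y_\ell^{*}$ in the first case and $1-x\ge e^{-2x}$ for $x\le 1/2$ in the second). That is where the $2f$ in $2f/\varrho(f)$ comes from, not from ``relaxed color constraints.'' You also omit that the paper guesses $\opt_\#$ exactly and adds $\sum_j x_j=\opt_\#$, which is what pins $P_i^{*}=\opt_\#/\chi$.

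\textbf{Part~(d)(i): the martingale.} Revealing the set indicators $X_j$ one at a time with Lipschitz constant $a$ over $m$ steps gives deviation $O(a\sqrt{m\log\chi})$, which does \emph{not} translate to the threshold $\opt_\#=\Omega(\chi\sqrt{n}\log\chi)$; in general $a\sqrt{m}\gg\sqrt{n}$. The paper's martingale is more delicate: it orders the element indicators by when they become ``settled'' (all containing sets revealed) and tracks $X^i=\sum_{\cC(u)=i}x_u^{+}$ along this ordering, so the martingale has length $n_i\le n$ with unit differences, yielding $O(\sqrt{n\ln\chi})$ directly.

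\textbf{Parts~(d)(ii)--(iii): different mechanisms entirely.} For \algmed\ the paper does \emph{not} reuse Azuma. It adds new LP constraints $\sum_i \nu_{i,j}\,y_i=\h_j\,\opt_\#$ (where $\nu_{i,j}$ counts color-$j$ elements in $\cS_i$), so that $p_j$ is sandwiched between $\frac{1}{f}\sum_i\nu_{i,j}\,y_i^{+}$ and $\sum_i\nu_{i,j}\,y_i^{+}$; since this sum involves only the negatively correlated set variables, Fact~\ref{fact3}(c) lets Chernoff apply directly after normalizing by $a$. For \algsm\ the paper does not rely on Panconesi--Srinivasan either: it exhaustively enumerates which colors have $Z_i\le 5\ln\chi$ sets in the optimum (polynomial because $\chi=O(\log n/\log m)$), fixes one witness set per such color, adds constraints $\sum_{\cS_\ell\in\cA_i}y_\ell>5\ln\chi$ for the remaining colors, and then bounds $\Pr[p_i>\Delta p_j]$ via Chebyshev after computing $\mathrm{var}(p_i-\zeta p_j)$ from pairwise covariances of element indicators. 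Your sketches for these two regimes would need to be replaced by these mechanisms to reach the stated $\eps$'s.
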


\begin{remark}
Note that the 
high-probability $\eps=O(f)$ bound in 
Theorem~\ref{thm-main}\emph{(\emph{d})(\emph{i})}
is asymptotically \emph{the same} as the ``ratio of expectation'' bound in 
Theorem~\ref{thm-main}\emph{(\emph{c})}.
\end{remark}

\begin{remark}
The dependence on $f$ of the bounds for $\eps$ in 
Theorem~\ref{thm-main}\emph{(\emph{d})(\emph{i})--(\emph{ii})}
can be contrasted with 
the Beck-Fiala theorem in discrepancy minimization that shows 
that the discrepancy of any set system is at most $2f$.
\end{remark}

\begin{remark}
Consider the special case $\nfmc$
with $\chi=O(1)$: for this case $f=2$ and $a$ 
is equal to the maximum node-degree 
$\deg_{\max}$ 
in the graph. 
The bounds in Theorem~\ref{thm-main}\emph{(\emph{d})(\emph{i})--(\emph{ii})}
for this special case imply a $O(1)$-approximation of color constraints 
unless $\opt_\#$ is not sufficiently large compared to 
$\deg_{\max}$.  
To illustrate the bound for smaller $\opt_\#$
in Theorem~\ref{thm-main}\emph{(\emph{d})(\emph{iii})}, if 
$\opt_\#= \deg_{\max}^{({1}/{2})-\eps}$ for some $\eps>0$ then 
the approximation bound of the coloring constraints is 
$O(\deg_{\max}^{1-\eps})$.
\end{remark}

A proof of Theorem~\ref{thm-main} is discussed in the remaining subsections of this section. 
The following notations will be used uniformly throughout the proof.
\begin{enumerate}[label=$\triangleright$,leftmargin=*]
\item 
$x_1,\dots,x_n\in\{0,1\}$ and $y_1,\dots,y_m\in\{0,1\}$ are the usual \emph{indicator} variables for the elements 
$u_1,\dots,u_n$ and the sets $\cS_1,\dots,\cS_m$, respectively, 
Their values in an optimal solution of the $\LP$-relaxation under consideration will be denoted by 
$x_1^\ast,\dots,x_n^\ast$ and 
$y_1^\ast,\dots,y_m^\ast$, respectively.
\item 
$C_j$ is the set of all elements colored $j$ for $j\in\{1,\dotsc,\chi\}$ in the given instance of \fmc.
\item 
$\optf$ 
is the optimum value of the objective function of the $\LP$-relaxation under consideration.
\end{enumerate}

\subsection{An obvious generalization of $\LP$-relaxation of maximum $k$-set coverage fails}

\begin{figure}[ht]
\begin{center}
\begin{tabular}{r l l}
\toprule
maximize & $\sum_{i=1}^n w(u_i) x_i$ & 
\\
[4pt]
subject to & $x_j \leq \sum_{u_j\in\cS_\ell} y_\ell$ & for $j=1,\dots,n$   
\\
[4pt]
   & $\sum_{\ell=1}^m y_\ell=k$ &   
\\
[4pt]
   & $0\leq x_j \leq 1$ & for $j=1,\dots,n$
\\
[4pt]
   & $0\leq y_\ell \leq 1$ & for $\ell=1,\dots,m$
\\
\bottomrule
\end{tabular}
\end{center}
\caption{\label{f1}A well-known $\LP$-relaxation of the element-weighted maximum $k$-set coverage problem.}
\end{figure}

It is well-known that the $\LP$-relaxation for the element-weighted maximum $k$-set coverage problem 
as shown in \FI{f1} followed by a suitable deterministic or randomized rounding 
provides an optimal approximation algorithm for the problem (\EG, see~\cite{AS04,MR95}).  
A straightforward way to extend this $\LP$-relaxation is to add the following 
${\chi(\chi-1)}/{2}$ additional constraints, one corresponding to each pair of colors:
\[
\sum_{u_\ell \in C_i} x_\ell = \!\! \sum_{u_\ell \in C_j} x_\ell 
\text{ for } i,j\in\{1,\dots,\chi\}, i<j
\]
%
Unfortunately, this may not lead to a $\eps$-approximate coloring (\emph{cf}.\ Equation~\eqref{eq1}) for any non-trivial $\eps$ as 
the following example shows.
Suppose our instance of an unweighted \fmcg{2}{2} has four sets 
$\cS_1=\{u_1\}$,
$\cS_2=\{u_2,\dots,u_{n-2}\}$,
$\cS_3=\{u_{n-1}\}$ 
and 
$\cS_4=\{u_n\}$ 
with the elements $u_1$ and $u_{n-1}$ having color $1$ and all other elements having color $2$.
Clearly the solution to this instance consists of the sets 
$\cS_3$ and $\cS_4$ with $\opt=2$. 
On the other hand, 
the fractional solution 
$y_1^\ast=y_2^\ast=x_1^\ast=x_2^\ast=1$
and all remaining variables being zero is also an optimal solution of the $\LP$-relaxation, but 
any rounding approach that does not change the values of zero-valued variables in the fractional solution 
must necessarily result in an integral solutions in which $p_2/p_1=n-3$. 
The example is easily generalized for arbitrary $k$.

\subsection{\algla: strengthening $\LP$-relaxation via additional inequalities}
\label{sec-str}

One problem that the $\LP$-relaxation in \FI{f1} faces when applied to \fmc is the following. 
We would like \emph{each} element-indicator variable $x_j$ to satisfy 
$x_j^\ast = \min \big\{1, \, \sum_{u_j\in\cS_\ell} y_\ell^\ast \big\}$
in an optimum solution of the $\LP$, but this may not be true 
as shown in the simple example in the previous section. 
Past researchers have corrected this kind of
situation by introducing extra valid inequalities that hold for any solution to the problem
but restrict the feasible region of the $\LP$.
For example, Carnes and Shmoys in~\cite{CD08} and Carr~\EA in~\cite{CFLP00}
introduced a set of additional inequalities, which they called the KC (Knapsack Cover) 
inequalities, to strengthen the integrality gaps of 
certain types of capacitated covering problems.
Following their ideas, we add the extra $O(fn)$ ``covering inequalities'' which are satisfied by 
\emph{any} integral solution of the $\LP$: 
\begin{gather*}
x_j \geq y_\ell \text{ for } j=1,\dots,n,\, \ell=1,\dots,m, \text{ and } u_j\in\cS_\ell
\end{gather*}
In addition, we adjust our $\LP$-relaxation in the following manner. 
Since $\opt_\#$ is an integer from $\{\chi,2\chi,\dots,(\lfloor\nicefrac{n}{\chi}\rfloor)\chi\}$, 
we can ``guess'' the correct value of $\opt_\#$ by running the algorithm for \emph{each} of 
the $\lfloor\nicefrac{n}{\chi}\rfloor$ possible value of 
$\opt_\#$, consider those solutions that maximized its objective function and select that one among these
solutions that has the largest value of $\opt_\#$. Thus, we may assume that our $\LP$-relaxation  knows the value of 
$\opt_\#$ \emph{exactly}, and we add the following extra equality:

\medskip
\centerline{$\sum_{i=1}^n x_i =\opt_\#$}

\medskip
The resulting $\LP$-relaxation is shown in its completeness in \FI{tab1} for convenience.

\begin{figure}[h]
\begin{center}
\begin{tabular}{r l l}
\toprule
maximize & $\sum_{i=1}^n w(u_i) x_i$ & 
\\
[4pt]
subject to & $x_j \leq \sum_{u_j\in\cS_\ell} y_\ell$ & for $j=1,\dots,n$   
\\
[4pt]
   & $\sum_{\ell=1}^m y_\ell=k$ &   
\\
[4pt]
   & $x_j \geq y_\ell$ & for $j=1,\dots,n$, $\ell=1,\dots,m$, and $u_j\in\cS_\ell$ 
\\
[4pt]
   & $\sum_{i=1}^n x_i =\opt_\#$ &   
\\
[4pt]
   & $\sum_{u_\ell \in C_i} x_\ell = \sum_{u_\ell \in C_j} x_\ell$ &   
	  for $i,j\in\{1,\dots,\chi\}$, $i<j$
\\
[4pt]
   & $0\leq x_j \leq 1$ & for $j=1,\dots,n$
\\
[4pt]
   & $0\leq y_\ell \leq 1$ & for $\ell=1,\dots,m$
\\
\bottomrule
\end{tabular}
\end{center}
\vspace*{-0.2in}
\caption{\label{tab1}A $\LP$-relaxation for \algla with $n+m$ variables and $O\left(fn+\chi^2\right)$ constraints.}
\end{figure}

\subsection{A general technique for obtaining joint high-probability statement}
\label{sec-repeat}

Suppose that our randomized $\LP$-relaxation based algorithm guarantees 
that 
$\prob{p_i/p_j > \eps } < \nicefrac{1}{(c\chi^2)}$ for some constant $c\geq 3$
independently for all $i,j\in\{1,\dots,\chi\}$.  
Then 
\begin{multline*}
\bigwedge_{ i,j\in\{1,\dots,\chi\} } 
\hspace*{-0.2in}
\prob{ p_i\leq   \eps\,p_j    }
=
1 - 
\hspace*{-0.2in}
\bigvee_{ i,j\in\{1,\dots,\chi\} } 
\hspace*{-0.2in}
\prob{ p_i >  \eps\,p_j   }
\\
\geq
1 - 
\hspace*{-0.2in}
\sum_{ i,j\in\{1,\dots,\chi\} } 
\hspace*{-0.2in}
\prob{ p_i   \geq \eps\, p_j }
>
1 - 
(\chi^2)\frac{1}{c\chi^2}
=
1- \frac{1}{c}
=c'
\end{multline*}
To boost the success probability, 
we repeat the randomized rounding 
$c'\ln n$ times, compute the quantity 
$\sigma = \max_{i,j\in\{1,\dots,\chi\},\,p_j\neq 0}\{\nicefrac{p_i}{p_j}\}$ in each iteration, 
and output the solution in that iteration that resulted in the minimum value of $\sigma$.
It then follows that for the selected solution 
satisfies the strong randomized $\eps$-approximate coloring constraints 
since  
$
\bigwedge_{ i,j\in\{1,\dots,\chi\} } 
\prob{ p_i\leq \eps\, p_j }
\geq
1 - \left(\nicefrac{1}{c'}\right)^{c'\ln n} > 1 - \nicefrac{1}{n^2}
$.

\subsection{\algla: further details and relevant analysis}
\label{sec-algla1}

For our randomized rounding approach, we recall the following result from~\cite{S01}.

\begin{fact}{\rm\cite{S01}}\label{fact3}
Given numbers $p_1,\dots,p_r\in[0,1]$ such that $\ell=\sum_{i=1}^r p_i$ is an integer, 
there exists a polynomial-time algorithm that generates a sequence 
of integers $X_1,\dots,X_r$ such that 
\textbf{$(a)$}
$\sum_{i=1}^r X_i$ with probability $1$, 
\textbf{$(b)$} 
$\prob{X_i=1}=p_i$ for all $i\in\{1,\dots,r\}$, and 
\textbf{$(c)$} 
for any real numbers $\alpha_1,\dots,\alpha_r\in[0,1]$ the sum 
$\sum_{i=1}^r \alpha_i X_i$ 
satisfies standard Chernoff bounds.
\end{fact}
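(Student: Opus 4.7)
The plan is to establish Fact~\ref{fact3} via a pipage-style pairwise randomized rounding procedure that, at each step, either rounds one fractional coordinate to $0$ or $1$ while preserving three invariants: the running partial sum equals $\ell$, each marginal $\prob{X_i=1}$ equals $p_i$, and the joint distribution satisfies a negative-correlation property strong enough to yield Chernoff-style tail bounds. Interpreting part (a) as $\sum_i X_i = \ell$ with probability $1$, these three invariants are exactly what needs to be maintained.

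First I would initialize $X_i \leftarrow p_i$ and maintain a vector $x \in [0,1]^r$ with $\sum_i x_i = \ell$. While two fractional coordinates $i \ne j$ remain, pick any such pair and perform the following atomic update: let $\alpha = \min\{1-x_i,\, x_j\}$ and $\beta = \min\{x_i,\, 1-x_j\}$; with probability $\beta/(\alpha+\beta)$ set $(x_i,x_j) \leftarrow (x_i+\alpha,\, x_j-\alpha)$, and otherwise set $(x_i,x_j) \leftarrow (x_i-\beta,\, x_j+\beta)$. A direct calculation shows that the conditional expectations of $x_i$ and $x_j$ are unchanged, the pairwise sum $x_i+x_j$ is preserved exactly, and at least one of the two coordinates becomes integral. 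Since each step strictly reduces the number of fractional coordinates, at most $r$ steps suffice, giving parts (a) and (b) by induction on the number of fractional coordinates remaining.

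For part (c), I would establish by induction over the rounding steps the negative correlation inequalities
\[
\BigAve{\prod_{i \in S} X_i} \le \prod_{i \in S} \Ave{X_i}
\qquad\text{and}\qquad
\BigAve{\prod_{i \in S} (1-X_i)} \le \prod_{i \in S} \Ave{1-X_i}
\]
for every $S \subseteq \{1,\dots,r\}$. Conditioned on the history just before a swap on coordinates $i,j$, coordinates outside $\{i,j\}$ are untouched, so the only nontrivial case is $\{i,j\}\subseteq S$, in which I would verify directly that $\Ave{X_i X_j \mid \text{history}} \le x_i x_j$ by plugging in the two possible updates together with their probabilities $\beta/(\alpha+\beta)$ and $\alpha/(\alpha+\beta)$; the analogous computation with $(1-X_i)(1-X_j)$ handles the companion inequality.

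Finally, to deduce standard Chernoff bounds for $\sum_i \alpha_i X_i$ with $\alpha_i \in [0,1]$, I would follow Panconesi and Srinivasan~\cite{PS97}: since $X_i \in \{0,1\}$, one has $\exp(t \alpha_i X_i) = 1 + X_i(\bee^{t\alpha_i}-1)$, and expanding $\prod_i \exp(t\alpha_i X_i)$ as a sum over subsets with nonnegative coefficients (for $t>0$) lets me apply the negative-correlation inequality term by term to obtain $\Ave{\prod_i \exp(t\alpha_i X_i)} \le \prod_i \Ave{\exp(t\alpha_i X_i)}$. A symmetric argument using the $(1-X_i)$ inequality handles the lower tail, and Markov's inequality then reproduces the standard Chernoff derivation verbatim. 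I expect the main obstacle to be the inductive verification that the negative-correlation inequalities genuinely survive each rounding step, since the local identity must be checked simultaneously in both its upper-tail and lower-tail forms and against both random update branches.
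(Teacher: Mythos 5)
Your proposal is correct, but note that the paper does not prove this statement at all: Fact~\ref{fact3} is invoked as a black box, cited from Srinivasan~\cite{S01} (with the tail bounds under negative correlation coming from Panconesi and Srinivasan~\cite{PS97}). What you have written is essentially a faithful reconstruction of that external argument: the pairwise ``pipage''-style update with $\alpha=\min\{1-x_i,x_j\}$, $\beta=\min\{x_i,1-x_j\}$ and branch probabilities $\beta/(\alpha+\beta)$, $\alpha/(\alpha+\beta)$ preserves marginals and the level set $\sum_i x_i=\ell$, kills at least one fractional coordinate per step (and cannot strand a single fractional coordinate, since $\ell$ is an integer), and your local computation $\Ave{X_iX_j\mid\text{history}}=x_ix_j-\alpha\beta\le x_ix_j$ (and the identical computation for $(1-X_i)(1-X_j)$) makes $\prod_{i\in S}x_i$ a supermartingale, giving the negative cylinder correlations for every $S$; the case where the step touches only one index of $S$ is handled by linearity, as you say. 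Your route to part (c) is also the standard one: for $X_i\in\{0,1\}$ write $\exp(t\alpha_iX_i)=1+X_i(\bee^{t\alpha_i}-1)$ and expand with nonnegative coefficients for the upper tail. The one fine point you should make explicit is the lower tail: with $t>0$ you must rewrite $\exp(-t\alpha_iX_i)=\bee^{-t\alpha_i}\bigl(1+(1-X_i)(\bee^{t\alpha_i}-1)\bigr)$ so that the subset expansion again has nonnegative coefficients and the $(1-X_i)$ correlation inequality applies termwise; your phrase ``a symmetric argument'' is exactly this, but as stated it could be misread as taking $t<0$ in the original expansion, where the coefficients alternate in sign. With that clarification, your proof stands on its own and in fact supplies a self-contained justification that the paper delegates to~\cite{S01,PS97}.
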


We round 
$y_1^\ast,\dots,y_m^\ast$ to 
$y_1^+,\dots,y_m^+$
using the algorithm mentioned in Fact~\ref{fact3}; this ensures
$\sum_{\ell=1}^m y_\ell^+=\sum_{\ell=1}^m y_\ell^\ast=k$ resulting in selection of \emph{exactly} $k$ sets.
This proves the claim in 
\textbf{(\emph{b})}.
We round 
$x_1^\ast,\dots,x_n^\ast$ to 
$x_1^+,\dots,x_n^+$ in the following way: 
for $j=1,\dots,n$, 
if $u_j\in \cS_\ell$ for some $y_\ell^+=1$
then set $x_j^+=1$.

\medskip
\noindent
{\bf Proof of (\emph{b})}
\medskip

Our proof of 
\textbf{(\emph{c})}
is similar to that for the maximum $k$-set coverage and is included for the sake of completeness.
Note that 
$x_j^+=0$ if and only if $y_\ell^+=0$ for every set $\cS_\ell$ containing $u_j$ and thus: 
\begin{multline}
\textstyle
\Ave{x_j^+} = \Pr[x_j^+=1] = 1 - \Pr[x_j^+=0] 
=
1- \prod_{u_j\in\cS_\ell} \Pr[y_\ell^+=0]
= 
1- \prod_{u_j\in\cS_\ell} (1 - y_\ell^\ast)
\\
\textstyle
\geq 
1- \Big( \frac{\sum_{u_j\in\cS_\ell} \big( 1 - y_\ell^\ast \big) }{f_j} \Big)^{f_j}
=
1- \Big( 1 - \frac{\sum_{u_j\in\cS_\ell} y_\ell^\ast }{f_j} \Big)^{f_j}
\geq 
1- \Big( 1 - \frac{\sum_{u_j\in\cS_\ell} y_\ell^\ast }{f} \Big)^{f}
\label{eq2}
\end{multline}
where we have used inequality~\eqref{stdeq-new}. 
If 
$\sum_{u_j\in\cS_\ell} y_\ell^\ast\geq 1$ then obviously~\eqref{eq2} implies  
$\Ave{x_j^+} \geq 
\varrho(f)
\geq
\varrho(f) x_j^\ast
$. 
Otherwise, 
$x_j^\ast\leq \sum_{u_j\in\cS_\ell} y_\ell^\ast<1$ and then by~\eqref{stdeq-new2} we get
\begin{gather}
\textstyle
\Ave{x_j^+} \geq 
1- \left( 1 - \frac{\sum_{u_j\in\cS_\ell} y_\ell^\ast }{f} \right)^{f}
>
1- \left( 1 - \frac{x_j^\ast }{f} \right)^{f}
\geq
\varrho(f) x_j^\ast
\label{jj3}
\end{gather}
This implies our bound since 
\begin{gather*}
\textstyle
\Ave { \sum_{i=1}^n w(u_i) x_i^+ } 
=
\sum_{i=1}^n w(u_i) \Ave {x_i^+} 
\geq 
\sum_{i=1}^n w(u_i) 
\varrho(f) x_j^\ast
=
\varrho(f) \optf
\geq
\varrho(f) \opt
\end{gather*}

\medskip
\noindent
{\bf Proof of (\emph{c})}
\medskip

Note that inequalities~\eqref{stdeq1}~and~\eqref{stdeq3}
imply 
$1-x\leq \bee^{-x} \leq 1 - x + (x^2/2)$
for all $x\in[0,1]$.
In particular, the following implication holds: 
\begin{gather}
\forall\, c>1 \,\, \forall \, x\in [ 0, (2/c^2)(c-1) ] \,:\, 1-x \geq 1 - cx + (c^2x^2/2) \geq \bee^{-cx}
\label{stdeq4}
\end{gather}
We estimate an upper bound on 
$\Ave{x_j^+}$ 
in terms of $x_j^\ast$ in the following manner: 
\begin{description}
\item[Case 1: $\pmb{\exists \, \ell}$ such that $\pmb{u_j\in\cS_\ell}$ and $\pmb{y_\ell^\ast>\nicefrac{1}{2}}$.] 
Thus,  
$x_j^\ast \geq y_\ell^\ast>\nicefrac{1}{2}$, and 
$\Ave{x_j^+}\leq 1 \leq 2 x_j^\ast$. 
\item[Case 2: $\pmb{y_\ell^\ast\leq \nicefrac{1}{2}}$ for every index $\pmb{\ell}$ satisfying $\pmb{u_j\in\cS_\ell}$.] 
Note that 
$x_j^\ast\geq \left( \sum_{u_j\in\cS_\ell} y_\ell^\ast \right) / f$,
and 
setting $c=2$ in inequality~\eqref{stdeq4} we get 
$1-x \geq \bee^{-2 x}$ for all $x\in [0,\nicefrac{1}{2}]$.
Now, standard calculations show the following: 
\begin{multline*}
\Ave{x_j^+}
= 
1- \prod_{u_j\in\cS_\ell} (1 - y_\ell^\ast)
\leq 
1- \prod_{u_j\in\cS_\ell} \bee^{- 2 y_\ell^\ast}
=
1- \bee^{- 2 \sum_{u_j\in\cS_\ell} y_\ell^\ast}
\\
= 
1- \bee^{- 2 f x_j^\ast}
\leq 
1 - (1 -   2 f x_j^\ast) =  
2 f  x_j^\ast
\end{multline*}
\end{description}
Combining all the cases and using~\eqref{jj3}, it follows that 
$
\varrho(f) x_j^\ast
\leq \Ave{x_j^+}\leq 
\min \big\{ 1,\, 2 f  x_j^\ast \big\}
$.
Recall that 
$\sum_{u_\ell \in C_j} x_\ell^\ast = \nicefrac{\opt_\#}{\chi}>0$ for every $j\in\{1,\dots,\chi\}$.
Since 
$\Ave{p_j} = \Ave { \sum_{u_\ell \in C_j} x_\ell^+ }= \sum_{u_\ell \in C_j} \Ave{x_\ell^+}$,
we get the following bounds for all 
$j\in\{1,\dots,\chi\}$:
\begin{gather}
\textstyle
\varrho(f) \frac{\opt_\#}{\chi}
=
\sum_{u_\ell \in C_j}
\varrho(f) x_\ell^\ast
\leq
\Ave{p_j} = 
\sum_{u_\ell \in C_j} \Ave{x_\ell^+}
\leq 
\sum_{u_\ell \in C_j} (2 f )x_\ell^\ast
= 2 f 
\frac{\opt_\#}{\chi}
\label{eq3}
\end{gather}
which 
gives the bound
$
\frac{\Ave{c_i}}{\Ave{c_j}}
\leq
\frac{2 f }{  \varrho(f)}
$
for all 
$i,j\in\{1,\dots,\chi\}$.

\bigskip
\noindent
{\bf Proofs of (\emph{d})(\emph{i}) via Doob martingales}
\medskip

Note that the random variables 
$x_1^+,\dots,x_n^+$ may \emph{not} be pairwise independent since two distinct elements belonging to the same set are
correlated, and consequently the random variables 
$p_1,\dots,p_\chi$ also may \emph{not} be pairwise independent.
Indeed in the worst case
an element-selection variable may be correlated to $(a-1)f$ other element-selection variables, 
thereby ruling out straightforward use of Chernoff-type tail bounds.

For sufficient large $\opt_\#$, 
this situation can be somewhat remedied by using Doob martinagales and Azuma's inequality by finding 
a suitable ordering of the element-selection variables conditional on the rounding of the set-selection variables.
We assume that the reader is familiar with basic definitions and results for the theory of martingales 
(\EG, see~\cite[Section $4.4$]{MR95}).
Fix an arbitrary ordered sequence $y_1^+,\dots,y_m^+$
of the set-indicator variables. 
Call an element-indicator variable $x_i^+$ ``settled'' at the $t\tx$ step 
if and only if 
$\cup_{u_i\in\cS_j}\{y_j\}\not\subseteq \{y_1^+,\dots,y_{t-1}^+\}$
and 
$\cup_{u_i\in\cS_j}\{y_j\}\subseteq \{y_1^+,\dots,y_t^+\}$.
The elementary event in our underlying sample space 
$\pmb{\Omega}$
are all possible $2^n$ assignments of $0$-$1$ values to the variables 
$x_1^+,\dots,x_n^+$.
For each $t\in\{1,\dots,m\}$, let $V_t$ be the subset of element-selection 
variables whose values are 
settled at the $t\tx$ step, 
let $\pi_t$ be an arbitrary ordering of the variables in $V_t$, and let 
us relabel the element-indicator variable names so that 
$x_{1}^+,x_{2}^+,\dots,x_{n}^+$ 
be the ordering of all element-selection variables given by the ordering $\pi_1,\dots,\pi_m$.
For each $t\in \{0,1,\dots,m\}$ and each 
$w_1,\dots,w_t\in \{0,1\}$, let
$\mathbf{B}_{w_1,\dots,w_t}$ 
denote the event that 
$y_j^+=w_j$ for $j\in\{1,\dots,t\}$.
Let 
$x_{1}^+,x_{2}^+,\dots,x_{q_t}^+$ 
be the union of set of all $q_t$ element-indicator variables that are settled at the $i\tx$ step 
over all $i\in\{1,\dots,t\}$, and 
suppose that the event
$\mathbf{B}_{w_1,\dots,w_t}$ 
induces the following assignment of values to the element-indicator variables: 
$x_{1}^+ = b_1, \dots, x_{q_t}^+ = b_{q_t}$ for some $b_1,\dots,b_{q_t}\in\{0,1\}$.
Define the block 
$\mathbf{B}'_{w_1,\dots,w_t}\subseteq \pmb{\Omega}$ induced by the event  
$\mathbf{B}_{w_1,\dots,w_t}$ 
as 
$\mathbf{B}'_{w_1,\dots,w_t}=\{b_1,\dots,b_{q_t} r_{q_t+1} \dots r_{n} \,|\, r_{q_t+1},\dots,r_{n}\in\{0,1\}\}$. 
Letting  
$\pmb{\mathbb{F}}_t$ 
be the $\sigma$-field generated by the partition of 
$\pmb{\Omega}$
into the blocks 
$\mathbf{B}'_{w_1,\dots,w_t}$ 
for each 
$w_1,\dots,w_t\in \{0,1\}$, 
it follows that 
$\pmb{\mathbb{F}}_0, \pmb{\mathbb{F}}_1, \dots, \pmb{\mathbb{F}}_m$
form a filter for the $\sigma$-field 
$(\pmb{\Omega},2^{\pmb{\Omega}})$.
Suppose that $\cU$ contains $n_i>0$ elements of color $i$,
let 
$x_{\alpha_1}^+,\dots,x_{\alpha_{n_i}}^+$
be the 
ordered sequence 
of the element-selection variables for elements of color $i$ determined by the subsequence of these variables in 
the ordering $x_1^+,\dots,x_n^+$, 
and suppose that 
$x_{\alpha_j}^+$
was settled at the 
$t_{\alpha_j}\tx$ step. 
Let $X^i=\sum_{j=1}^{n_i} x_{\alpha_j}^+$, 
and 
define the Doob martingale sequence $X_0,X_1,\dots,X_{n_i}$ 
where 
$X_0=\Ave{X^i}
=\sum_{j=1}^{n_i} \Ave{x_{\alpha_j}^+}
$, and 
$X_\ell=\Ave{ X^i \,|\, y_{1}^+,\dots,y_{t_{\alpha_\ell}}^+ }$ 
for all $\ell\in\{1,\dots,n_i\}$. 
Since 
$n_i< n $, 
$X_{n_i}=X^i$ and $|X_\ell - X_{\ell-1} | \leq 1$ for 
for all $\ell\in\{1,\dots,n_i\}$, 
by Azuma's inequality (for any $\Delta>\bee$),  
\begin{multline*}
\prob{ \, \left | {\textstyle \sum_{j=1}^{n_i} } x_{\alpha_j}^+ 
         - {\textstyle \sum_{j=1}^{n_i} } \Ave{x_{\alpha_j}^+} \right| \geq 3 \sqrt{\ln \Delta} \sqrt{n} }
\leq
\prob{ \, \left | {\textstyle \sum_{j=1}^{n_i} } x_{\alpha_j}^+ - X_0 \right| \geq 3 \sqrt{\ln \Delta} \sqrt{n_i} }
\\
=
\prob{|X_{n_i} - X_0| \geq 3 \sqrt{\ln \Delta} \sqrt{n_i} }
\leq
\bee^{-4\ln \Delta}
= \Delta^{-4}
\\
\Rightarrow \,
\prob{ \, 
         {\textstyle \sum_{j=1}^{n_i} } \Ave{x_{\alpha_j}^+} - 3 \sqrt{\ln \Delta} \sqrt{n} 
				 <
         {\textstyle \sum_{j=1}^{n_i} } x_{\alpha_j}^+ 
				 < 
         {\textstyle \sum_{j=1}^{n_i} } \Ave{x_{\alpha_j}^+} + 3 \sqrt{\ln \Delta} \sqrt{n} }
>
1- \Delta^{-4}
\end{multline*}
Recall from~\eqref{eq3} 
that 
$(1-\nicefrac{1}{\bee}) \frac{\opt_\#}{\chi} <  \sum_{j=1}^{n_i} \Ave{x_{\alpha_j}^+} \leq 2f \frac{\opt_\#}{\chi}$, 
and 
the inequality 
$\sum_{j=1}^{n_i} \Ave{x_{\alpha_j}^+} \geq 6 \sqrt{\ln \Delta} \sqrt{n}$
is therefore satisfied provided 
$ \opt \geq 6 \big(1 - \nicefrac{1}{\bee} \big) ^{-1} (\sqrt{ n \ln \Delta }) \chi$.
Setting 
$\Delta=2\chi$ and 
remembering that 
$p_i = {\textstyle \sum_{j=1}^{n_i} } x_{\alpha_j}^+$,
we get the following bound for any $i\in\{1,\dots,\chi\}$: 
\begin{gather*}
\prob{ \, \textstyle
         {\frac{1-\nicefrac{1}{\bee}}{2} } \left( \frac{\opt_\#}{\chi} \right)
				 <
         p_i 
				 < 
         3 f \left( \frac{\opt_\#}{\chi}  \right)}
>
1 - (1/16)\chi^{-4}
\end{gather*}
%
and therefore 
$
\prob { 
\frac {p_i}{p_j} \leq 
\frac {6} {1-\nicefrac{1}{\bee}} f
}
>
1 - (1/8)\chi^{-4}
$.
Thus, it follows that 
\begin{multline*}
\bigwedge_{ i,j\in\{1,\dots,\chi\} } 
\hspace*{-0.2in}
\prob{ p_i\leq   {\textstyle \frac {6} {1-\nicefrac{1}{\bee}} } \,p_j    }
=
1 - 
\hspace*{-0.2in}
\bigvee_{ i,j\in\{1,\dots,\chi\} } 
\hspace*{-0.2in}
\prob{ p_i >  {\textstyle \frac {6} {1-\nicefrac{1}{\bee}} } \,p_j   }
\\
\geq
1 - 
\hspace*{-0.2in}
\sum_{ i,j\in\{1,\dots,\chi\} } 
\hspace*{-0.2in}
\prob{ p_i   \geq {\textstyle \frac {6} {1-\nicefrac{1}{\bee}} } \, p_j }
>
{\textstyle
1 - 
(\chi^2)\frac{1}{8\chi^4}
\geq
\frac{1}{32}
}
\end{multline*}
This implies our claim in (\emph{d})(\emph{i})
using the technique in Section~\ref{sec-repeat}.

\subsection{\algmed: details and proofs of relevant claims in (\emph{a})--(\emph{c}) and (\emph{d})(\emph{ii})}
\label{sec-algmed1}

\medskip
\noindent
\textbf{\algmed: idea behind the modified $\LP$-relaxation and approach}
\medskip

A limitation of \algla is that we could not use Fact~\ref{fact3} of Srinivasan to the fullest extent. 
Although Fact~\ref{fact3} guaranteed that the set-indicator variables are negatively correlated and 
hence Chernoff-type tail bounds can be applied to them due to the result by 
Panconesi and Srinivasan~\cite{PS97}, our coloring constraints are primarily indicated by 
element-indicator variables which depend \emph{implicitly} on the set-indicator variables.
In fact, it is not difficult to see that the element-indicator variables are \emph{not} negatively 
correlated in the sense of~\cite{PS97,S01}\footnote{A set of binary random variables 
$z_1,\dots,z_r\in\{0,1\}$ are called negatively correlated in~\cite{PS97,S01} if the following holds:
$\forall \, I\subseteq\{1,\dots,r\}: \, 
\prob{\wedge_{i\in I} (z_i=0)} \leq \prod_{i\in I} \prob{z_i=0} 
\text{ and }
\prob{\wedge_{i\in I} (z_i=1)} \leq \prod_{i\in I} \prob{z_i=1}$}
even if the set-indicator variables are negatively correlated.

Our idea is to remedy the situation by expressing the coloring constraints also by set-indicator variables 
and use the element-indicator variables to \emph{implicitly} control the set-indicator variables in these
coloring constraints. This will also necessitate using additional variables.

\bigskip
\noindent
\textbf{A modification of the $\LP$-relaxation in \FI{tab1}}
\medskip

To begin, we quantify the number of elements of different colors in a set $\cS_i$ using the following notation: 
for $j\in\{1,\dots,\chi\}$, 
let 
$\nu_{i,j}$ be the number of elements in $\cS_i$ of color $j$.
Note that $0\leq \nu_{i,j}\leq a$.
Fix an optimal integral solution of \fmcg{\chi}{k} covering $\opt_\#$ elements and a color value $j$, and 
consider the following two quantities: 
$\cA=\sum_{u_\ell\in C_j} x_\ell$
and 
$\cB=\sum_{i=1}^m \nu_{i,j}\, y_i$.
Note that 
$\cA=\frac{\opt_\#}{\chi}$, 
$\cB\in\{k,k+1,\dots,ak\}$, and 
$\cA\leq \cB \leq f \cA$
by definition of $f$.
Thus, 
$\cB=\h_j \opt_\#$ 
is satisfied by 
a $\h_j$ that is a rational number from the set 
$
\big\{
\frac{1}{\chi},
\frac{1}{\chi}+\frac{1}{\opt_\#},
\frac{1}{\chi}+\frac{2}{\opt_\#},
\dots,
\frac{f}{\chi}-\frac{1}{\opt_\#},
\frac{f}{\chi}
\big\}
$. 
We will use the $\LP$-relaxation in \FI{tab1} 
with 
$\chi$ additional variables $\h_1,\dots,\h_\chi$ 
and the following additional constraints 

\smallskip
\begin{center}
\begin{tabular}{l l}
$\sum_{i=1}^m \nu_{i,j}\, y_i = \h_j\opt_\#$ &  for $j\in\{1,\dots,\chi\}$
\\
[4pt]
$\nicefrac{1}{\chi}\leq \h_j\leq \nicefrac{f}{\chi}$ & for $j=1,\dots,\chi$
\end{tabular}
\end{center}

For reader's convenience, the new $\LP$-relaxation in its entirety is shown in \FI{tab3}. 
\begin{figure}[ht]
\begin{center}
\begin{tabular}{r l l}
\toprule
maximize & $\sum_{i=1}^n w(u_i) x_i$ & 
\\
[4pt]
subject to & $x_j \leq \sum_{u_j\in\cS_\ell} y_\ell$ & for $j=1,\dots,n$   
\\
[4pt]
   & $\sum_{\ell=1}^m y_\ell=k$ &   
\\
[4pt]
   & $x_j \geq y_\ell$ & for $j=1,\dots,n$, $\ell=1,\dots,m$, and $u_j\in\cS_\ell$ 
\\
[4pt]
   & $\sum_{i=1}^n x_i =\opt_\#$ &   
\\
[4pt]
   & $\sum_{i=1}^m \nu_{i,j}\, y_i = \h_j\opt_\#$ &  for $j\in\{1,\dots,\chi\}$
\\
[4pt]
   & $\sum_{u_\ell \in C_i} x_\ell = \sum_{u_\ell \in C_j} x_\ell$ &   
	  for $i,j\in\{1,\dots,\chi\}$, $i<j$
\\
[4pt]
   & $0\leq x_j \leq 1$ & for $j=1,\dots,n$
\\
[4pt]
   & $0\leq y_\ell \leq 1$ & for $\ell=1,\dots,m$
\\
[4pt]
   & $\nicefrac{1}{\chi}\leq \h_j\leq \nicefrac{f}{\chi}$ & for $j=1,\dots,\chi$
\\
\bottomrule
\end{tabular}
\end{center}
\vspace*{-0.2in}
\caption{\label{tab3}A modified $\LP$-relaxation for \algmed with 
with $n+m+\chi$ variables and $O\left(f n+\chi^2\right)$ constraints.}
\end{figure}

\medskip
\noindent
\textbf{Analysis of the modified $\LP$-relaxation}
\medskip

By our assumption on $\h_j$'s, the $\LP$-relaxation has a feasible solution.  
We use the same randomized rounding procedure (using Fact~\ref{fact3}) as in Section~\ref{sec-algla1}
for \algla.
The proofs for parts 
(\emph{b})--(\emph{d})
are the same as before since all prior relevant inequalities are still included. Thus, we concentrate on the proof of 
(\emph{e})(\emph{ii}).
A crucial thing to note is the following simple observation: 

\begin{quote}
Consider the sum 
$\Delta=\sum_{i=1}^m \nu_{i,j}\, y_i$
for any assignment of values $y_1,\dots,y_m\in\{0,1\}$. 
Then, the number of elements covered by the sets corresponding to those variables that are set to $1$ is
between $\Delta/f$ and $\Delta$.
\end{quote}

Fix a color $j$.
Let $\cK_j=\h_j\opt_\#$, $\alpha_i=\frac{\nu_{i,j}}{a}$ and consider the summation 
$\cL_j=\sum_{i=1}^m \alpha_i \, y_i^+$. 
Since $\alpha_i\in[0,1]$ for all $i$, by Fact~\ref{fact3} we can apply standard Chernoff bounds~\cite{MR95}
for $\cL_j$.
Note that 
$\Ave{\cL_j}= \sum_{i=1}^m \alpha_i \, y_i^\ast = \frac{\cK_j}{a}$. 
Assuming 
$\cK_j\geq 16a\ln \chi$,
we get the following for the tail-bounds:
\begin{gather*}
\prob{ {\textstyle \sum_{i=1}^m  \nu_{i,j}\, y_i^+ > 5 \cK_j } }
=
\prob{\cL_j > 5 (\nicefrac{\cK_j}{a}) } < 
2^{-{6\cK_j}/{a}}
\leq 
2^{-96\ln\chi}
< 
{\chi}^{-96}
\\
\prob{ {\textstyle \sum_{i=1}^m  \nu_{i,j}\, y_i^+ < \nicefrac{\cK_j}{2} } }
=
\prob{ {\textstyle \cL_j <  \frac{{\cK_j}/{a}}{2} }  } < 
\bee^{-\frac{\cK_j}{8a}}
\leq \bee^{-2\ln\chi} 
=
{\chi}^{-2}
\end{gather*}
Remember that 
$p_j=\sum_{u_\ell \in C_j} x_\ell^+$ is the random variable denoting the 
number of elements of color $j$ selected by our randomized algorithm. 
Since 
$\frac{1}{f}\sum_{i=1}^m  \nu_{i,j}\, y_i^+ \leq c_j \leq \sum_{i=1}^m  \nu_{i,j}\, y_i^+$
we get 
\begin{gather*}
\prob{ p_j > 5 \cK_j }
\leq
\prob{ \textstyle \sum_{i=1}^m  \nu_{i,j}\, y_i^+ > 5 \cK_j }
<
{\chi}^{-96},
\\
\prob{ \textstyle p_j < \frac{\cK_j}{2f} }
\leq 
\prob{ \textstyle \sum_{i=1}^m  \nu_{i,j}\, y_i^+ < \frac{\cK_j}{2} }
< {\chi}^{-2}
\end{gather*}
Note that 
$
\frac{\opt_\#}{\chi}
\leq
\cK_j=\h_j\opt_\#
\leq 
\frac{f \opt_\#}{\chi}
$, and therefore 
$
\nicefrac{1}{f}
\leq
\frac{\cK_i}{\cK_j}
\leq
f
$
for any two $i,j\in\{1,\dots,\chi\}$.
Let $\cE_j$ be the event defined as 
$\cE_j\eqdef \frac{\cK_j}{2f} \leq c_j \leq 5 \cK_j$.
Then for any two $i,j\in\{1,\dots,\chi\}$ we get 
\begin{multline}
\prob{ {\textstyle \frac{p_i}{p_j} \leq 10f^2 } }
\geq
\prob{\cE_i \wedge \cE_j}
=
1- \prob{\,\overline{\cE_i} \vee \overline{\cE_j}\,}
\geq 
1- \prob{\,\overline{\cE_i} \,} - \prob{\,\overline{\cE_j}\,}
\\
\geq 
1 - 
\prob{ p_i > 5 \cK_i } - \prob{ p_i < \nicefrac{\cK_i}{(2f)} } - 
\prob{ p_j > 5 \cK_j } - \prob{ p_j < \nicefrac{\cK_j}{(2f)} } 
\\
>
1- 2 \big( {\chi}^{-96} + {\chi}^{-2} \big)
\label{eq:t1}
\end{multline}
The assumption of 
$\cK_j\geq 16a\ln \chi$,
is satisfied provided
$\opt_\#\geq 16a\chi \ln \chi$.
\eqref{eq:t1} 
implies our claim in (\emph{d})(\emph{ii})
using the technique in Section~\ref{sec-repeat}.

\subsection{\algsm: details and proofs of relevant claims in (\emph{a})--(\emph{c}) and (\emph{d})(\emph{iii})}
\label{sec-algsm1}

\medskip
\noindent
\textbf{Another modification of the $\LP$-relaxation in \FI{tab1}}
\medskip

Note that for this case $\chi=O(1)$. 
Fix an optimal solution for our instance of \fmc.
Let $\cA_i$ be the collection of those sets that contain \emph{at least} 
one element of color $i$
and let 
$Z_i=\sum_{\cS_\ell\in \cA_i} y_\ell$ 
indicate the number of sets from $\cA_i$ selected in an integral solution of the $\LP$; 
obviously $Z_i\geq 1$.
We consider two cases for $Z_i$ depending on whether it is at most $5\ln\chi$ or not.
We cannot know \emph{a priori} whether $Z_i\leq 5 \ln\chi$ or not.
However, for our analysis it suffices if we can guess \emph{just one} set belonging to $Z_i$ correctly.
We can do this by trying out all relevant possibilities exhaustively in the following manner. 
Let 
$\Psi=\{1,\dots,\chi\}$
be the set of indices of all colors. For each of the $2^\Psi-1$ subsets
$\Psi'$ of $\Psi$, we ``guess'' that 
$Z_i\leq 5 \ln\chi$ if and only if 
$i\in\Psi'$.
Of course, we still do not know one set among these $5\ln\chi$ subset
for each such $i$, so we will exhaustively try out one each of the at most $|\cA_i|\leq m$
sets for each $i$. 
For every such choice of $\Psi'$ and every such choice of 
a set $\cS_{i_{\Psi'}}\in\cA_i$ for each $i\in\Psi'$,  
we perform the following steps: 
\begin{enumerate}[label=$\triangleright$]
\item
Select the sets 
$\cS_{i_{\Psi'}}$ 
and their elements
for each $i\in\Psi'$
Set the variables corresponding to these sets and elements to $1$ in the $\LP$-relaxation
in \FI{tab1}, \IE, set $y_{i_{\Psi'}}=1$ and $x_j=1$ for every 
$i\in\Psi'\cS_{i_{\Psi'}}$ and 
$j\in$.
Remove any constraint that is already satisfied after the above step.
\item
Add the following additional (at most $\chi$) constraints to the $\LP$-relaxation:
\begin{center}
\begin{tabular}{l l}
$\sum_{ (u_\ell \in C_i)\wedge (u_\ell\in \cS_j) } y_j >  5\ln\chi$  & for $i\notin\Psi'$
\end{tabular}
\end{center}
\end{enumerate}
Note that the total number of iterations of the basic iterations that is needed is at most
$O( { (2m) }^{\chi})$, 
which is polynomial provided $\chi =O\big(\max\big\{1,\, \frac{\log n}{\log m}\big\}\big)$.

\bigskip
\noindent
\textbf{Analysis of the modified $\LP$-relaxation}
\medskip

We now analyze that iteration of the $\LP$-relaxation that correctly guesses the value of $\opt_\#$, 
the subset $\Psi'\subseteq\Psi$ and 
the sets $\cS_{i_{\Psi'}}\in\cA_i$ for each $i\in\Psi'$.
As already mentioned elsewhere, the random variables 
$x_1^+,\dots,x_n^+$ may \emph{not} be pairwise independent since two distinct elements belonging to the same set are
correlated, and consequently the random variables 
$p_1,\dots,p_\chi$ also may \emph{not} be pairwise independent.
For convenience, let 
$\mu_i=\Ave{x_i^+}$ and 
$\cE_i$ denote the event 
$\cE_i \equiv x_i^+=1$; note that $(1-\bee^{-1})x_i^\ast < \prob{\cE_i}=\mu_i\leq \min \{1,\, 2f x_i^\ast\}$.
We first calculate a bound on 
$\mbox{cov}(x_i^+,x_j^+)$ for all $i\neq j$ as follows. If  
$x_i^+$ and $x_j^+$
are independent then of course 
$\mbox{cov}(x_i^+,x_j^+)=0$, otherwise 
\begin{multline*}
- \min \left\{ \mu_i,\, \mu_j \right\}
\leq
- \prob {\cE_i} \prob{\cE_j} 
\leq
\prob{\cE_i \wedge \cE_j} 
- \prob {\cE_i} \prob{\cE_j} 
=
\Ave {x_i^+ x_j^+} - \mu_i \mu_j 
\\
=
\mbox{cov}(x_i^+,x_j^+)
\leq 
\prob {\cE_i \wedge \cE_j} - \mu_i \mu_j
\leq
\min \left\{ \mu_i,\, \mu_j \right\}
- \mu_i \mu_j
<
\min \left\{ \mu_i,\, \mu_j \right\}
\end{multline*}
giving the following bounds:
\begin{gather}
- \min \left\{ \varrho(f) x_i^\ast, \, \varrho(f) x_j^\ast \, \right\}
\leq
\mbox{cov}(x_i^+,x_j^+)
\leq
\min \left\{ 2 f x_i^\ast,\,  2 f x_j^\ast,\, 1 \right\}
\label{new1}
\end{gather}
%
For notational convenience, let 
$\cD_{i,j}=\{ \ell \,|\, u_i,u_j\in\cS_\ell,\, j\neq i \}$ 
be the indices of those sets in which both the elements $u_i$ and $u_j$ appear, and let 
$\cD_i = \cup_{j=1}^n \cD_{i,j}$. 
Note that 
$| \cD_{i,j} | \leq f$,  
$| \cD_i | \leq (a-1)f$,  
and 
the random variable $x_i^+$ is independent of 
all $x_j^+$ satisfying $j\notin\cD_i$.
Using this observation and~\eqref{new1}, 
for any $i \in \{1,\dots,n\}$ 
we get 
\begin{gather}
\sum_{j=1}^n\mbox{cov}(x_i^+,x_j^+)
\leq 
\sum_{j\in\cD_i} \!\! \big( \min \left\{ \mu_i,\,  \mu_j \right\} \big) 
\leq
|\cD_i| \, \mu_i
\leq
a f \mu_i
\leq
\min \big\{ 2af^2 x_i^\ast,\, af \big\}
\label{eq:resale1}
\end{gather}
\vspace*{-0.4in}
\begin{multline}
\sum_{j=1}^n\mbox{cov}(x_i^+,x_j^+)
\geq 
- \sum_{j\in\cD_i} \min \left\{  \mu_i,\,  \mu_j \right\}
\geq 
-  \min \Big\{  |\cD_i| \, \mu_i,\,  \varrho(f) \sum_{j\in \cD_i} x_j^\ast \Big\} 
\\
\geq 
- \min \Big\{  (a-1) f \,\mu_i, \, \varrho(f) \sum_{j\in \cD_i} x_j^\ast \Big\} 
>
- a f x_i^\ast
\label{eq:resale2}
\end{multline}
The above bounds can be used to bound 
the total pairwise co-variance between elements in two same or different color 
classes as follows. Consider two color classes $C_i$ and $C_j$ ($i=j$ is allowed). Then, 
\begin{multline}
\sum_{u_r\in C_i} \sum_{u_s\in C_j} \mbox{cov}(x_r^+,x_s^+)
\leq
\sum_{u_r\in C_i} \sum_{j=1}^n \mbox{cov}(x_r^+,x_j^+)
\leq
\sum_{u_r\in C_i} 
\min \big\{ 2af^2 x_r^\ast,\, af \big\}
\\
\textstyle
=
\min \Big\{ 2af^2 \sum_{u_r\in C_i} x_r^\ast,\, af |C_i| \Big\}
\leq
\min \big\{ 2af^2 \frac{\opt_\#}{\chi} ,\, af n \big\}
\label{eq:cor1}
\end{multline}
\vspace*{-0.2in}
\begin{gather}
\sum_{u_r\in C_i} \sum_{u_s\in C_j} \mbox{cov}(x_r^+,x_s^+)
\geq
- \sum_{u_r\in C_i} \sum_{j\in \cD_r} \min \{ \mu_r,\,\mu_j \}
>
- \sum_{u_r\in C_i} a f x_r^\ast
=
\textstyle
-a f \frac{\opt_\#}{\chi}
\label{eq:cor2}
\end{gather}
%
For calculations of probabilities of events of the form ``$p_i>\Delta p_j$'', 
we first need to bound the probability of events ``$p_j=0$'' for 
$j\in \{1,\dots,\chi \}$. 
If $j\in \Psi'$ then 
$\prob{p_j=0}=0$ since at least one set containing an element of color $j$ is always selected.
Otherwise, 
$\sum_{\cS_\ell\in \cA_j} y_\ell^\ast>5\ln\chi$, and 
$p_j=0$ if and only if $y_\ell^+=0$ for \emph{every} $\cS_\ell\in \cA_j$. 
This gives us the following bound for $j\notin\Psi'$:
\begin{multline*}
\prob{p_j=0}=
\prod_{\cS_\ell\in \cA_j} \prob{y_\ell^+=0} 
=
\prod_{\cS_\ell\in \cA_j} (1 - y_\ell^\ast) 
\leq
\prod_{\cS_\ell\in \cA_j} \bee^{- y_\ell^\ast}
\\
= \bee^{- \sum_{\cS_\ell\in \cA_j} y_\ell^\ast }
\leq
\bee^{- 5\ln\chi}
=
\chi^{-5}
\end{multline*}
Combining both cases, we have 
$\prob{p_j=0} \leq \nicefrac{1}{\chi^{5}}$ for all $j$.

We now can calculate the 
probabilities of events of the form ``$p_i>\Delta p_j$''
for $\Delta=\Delta_1 + \Delta_2 \geq 1$, $\Delta_1,\Delta_2 \geq 0$ and 
$i,j \in \{1,\dots,\chi \}$ as follows:
\begin{multline}
\prob{ p_i\geq \Delta p_j }
\leq
\prob{ p_j = 0} + \prob{ p_i \geq \Delta p_j \,|\, p_j\geq 1}
\leq
\chi^{-5}
+ \prob{ p_i \geq \Delta_1 p_j + \Delta_2 \,|\, p_j\geq 1}
\\
=
\chi^{-5}
+ \frac{ \prob{ ( p_i \geq \Delta_1 p_j + \Delta_2 )\wedge ( p_j\geq 1 ) } } { \prob{  p_j\geq 1 }  }
< 
\chi^{-5}
+ \frac{ \prob{ p_i \geq \Delta_1 p_j + \Delta_2 } } { 1 - \prob{  p_j=0 }  }
\\
<
\chi^{-5}
+
\frac{ \prob{ p_i \geq \Delta_1 p_j + \Delta_2 } } { 1 -   
\chi^{-5}
}
\label{gen1}
\end{multline}
For a real number $\zeta>0$, let $\delta_{i,j}=p_i - \zeta p_j$. 
We have the following bound on $\Ave{\delta_{i,j}}$ for all $\zeta\geq 3f$: 
\begin{gather*}
\Ave{\delta_{i,j}} = \Ave{p_i} - \zeta\,\Ave{p_j}  \leq 2f \frac{\opt_\#}{\chi} - \zeta \varrho(f) \frac{\opt_\#}{\chi} <0
\end{gather*}
Therefore, using Chebyshev's inequality
we get (for all $\zeta\geq 3f$ and $\lambda>1$):
\begin{multline}
\prob{ p_i \geq \zeta p_j + \lambda \sqrt { \mbox{var} (\delta_{i,j})} }
=
\prob{ \delta_{i,j} \geq \lambda \sqrt { \mbox{var} (\delta_{i,j})} }
\\
<
\prob{ \left|\,\delta_{i,j} - \Ave{\delta_{i,j} } \,\right| > \lambda \sqrt { \mbox{var} (\delta_{i,j})} } 
\leq
\nicefrac{1}{\lambda^2}
\label{nv1}
\end{multline}
Using~\eqref{nv1} in~\eqref{gen1} with $\Delta_1=\zeta$, 
$\Delta_2= \lambda \sqrt { \mbox{var} (\delta_{i,j})}$
and $\lambda=10\chi$
we get
\begin{multline}
\textstyle
\prob{ p_i < \left(\zeta + 10 \chi \sqrt { \mbox{var} (\delta_{i,j})} \, \right) p_j }
=
1 - 
\prob{ p_i \geq \left(\zeta + 10 \chi \sqrt { \mbox{var} (\delta_{i,j})} \, \right) p_j }
\\
\textstyle
>
1- \chi^{-5}
-
\frac{\chi^{-100}}{1-\chi^{-5}}
> 
1-\chi^{-4}
\label{gen2}
\end{multline}
We now calculate a bound on 
$\mbox{var} (\delta_{i,j})$ 
using~\eqref{eq:cor1}~and~\eqref{eq:cor2}
as follows:
\begin{multline}
\mbox{var} (\delta_{i,j}) = 
\mbox{var} ( p_i - \zeta p_j ) = 
\mbox{var} \Big( \sum_{u_\ell \in C_i} \!\! x_\ell^+ +  \sum_{u_\ell \in C_j} \!\! (-\zeta x_\ell^+)  \Big)
\\
=
\sum_{u_\ell \in C_i} \!\! \mbox{var}\big( x_\ell^+ \big)
+ \zeta^2 \sum_{u_\ell \in C_j} \!\! \mbox{var}\big( x_\ell^+ \big)
+ \!\!\!\!\!\! \sum_{u_r,u_s\in C_i, r\neq s} \!\!\!\!\!\!\!\!\! \mbox{cov}(x_r^+,x_s^+)
+ \!\!\!\!\!\! \sum_{u_r,u_s\in C_j, r\neq s} \!\!\!\!\!\!\!\!\! \mbox{cov}(-\zeta x_r^+,-\zeta x_s^+)
\\
+ \!\!\!\!\!\! \sum_{u_r\in C_i,u_s\in C_j} \!\!\!\!\!\!\!\!\! \mbox{cov}(x_r^+,-\zeta x_s^+)
\\
\leq
\sum_{u_\ell \in C_i} \!\! \mu_\ell
+ \zeta^2 \sum_{u_\ell \in C_j} \!\! \mu_\ell
+ 2 a f^2 {\textstyle \frac{\opt_\#}{\chi} }
+ \zeta^2 \!\!\!\!\!\! \sum_{u_r,u_s\in C_j, r\neq s} \!\!\!\!\!\!\!\!\! \mbox{cov}(x_r^+,x_s^+)
- \zeta \!\!\!\!\!\! \sum_{u_r\in C_i,u_s\in C_j} \!\!\!\!\!\!\!\!\! \mbox{cov}(x_r^+,x_s^+)
\\
\leq
\Ave{c_i} 
+ \zeta^2 \Ave{c_j}
+ 2 a f^2 {\textstyle \frac{\opt_\#}{\chi} }
+ 2 \zeta^2 a f^2 {\textstyle \frac{\opt_\#}{\chi} }
+ \zeta a f {\textstyle \frac{\opt_\#}{\chi} }
\\
\leq
2 f {\textstyle \frac{\opt_\#}{\chi} }
   + \zeta^2 2 f {\textstyle \frac{\opt_\#}{\chi} }
   + 2 a f^2 {\textstyle \frac{\opt_\#}{\chi} }
   + \zeta^2 2 a f^2 {\textstyle \frac{\opt_\#}{\chi} }
   + \zeta a f {\textstyle \frac{\opt_\#}{\chi} }
\leq 
4 \zeta^2  a f^2 {\textstyle \frac{\opt_\#}{\chi} }
\\
\Rightarrow \,
\sqrt{\mbox{var} (\delta_{i,j})}
\leq
2 \zeta  \sqrt{a} f \sqrt{\opt_\#/\chi}
\label{eq:gen3}
\end{multline}
Setting $\zeta=3f$ and 
using~\eqref{eq:gen3} in~\eqref{gen2}
we get
$\prob{ c_i < \left(3f +  60 \sqrt{a} f^2 \sqrt{\opt_\# \chi} \, \right) c_j } > 1-\chi^{-4}$.
This implies our claim in (\emph{d})(\emph{iii})
using the technique in Section~\ref{sec-repeat}.

\subsection{Limitations of our $\LP$-relaxation: ``a gap of factor $f$'' for coloring constraints}
\label{sec-limit-lp}

The coloring constraint bounds in Theorem~\ref{thm-main}(\emph{e})(\emph{i})--(\emph{ii})
depend on $f$ or $f^2$ only.
It is natural to ask as a possible first direction of improvement whether this dependence
can be eliminated or improved by better analysis of our $\LP$-relaxations. 
Proposition~\ref{prop1}
shows that this may \emph{not} be possible even for $\chi=2$ 
unless one uses a significantly different $\LP$-relaxation for \fmcg{\chi}{k}.

\begin{proposition}\label{prop1}
There exists optimal non-integral solutions of \fmc with the following property: 
any rounding approach that does not change the values of zero-valued variables in the fractional solution 
must necessarily result in an integral solutions in which the color constraints differ by at least a factor of $f$.
\end{proposition}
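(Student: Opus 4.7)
The plan is to construct, for each $f \geq 3$, a specific instance of \fmcg{2}{1} whose $\LP$-relaxation (Figure~\ref{tab1}) admits an optimal non-integral solution in which a single $y$-variable is forced to zero, while every integer rounding that preserves this zero yields color counts differing by a factor of at least $f$. The universe will be $\{v, u_1, \dots, u_f\}$ with $v$ of color $2$ and the $u_i$'s of color $1$. I would take three types of sets: a balanced set $\cA = \{v, u_1\}$, a ``heavy'' set $\cB = \{v, u_1, \dots, u_f\}$, and $f-2$ singleton sets $\cC^{(1)} = \cdots = \cC^{(f-2)} = \{v\}$ whose sole role is to raise the frequency of $v$ to exactly $f$. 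Fix $k = 1$. A direct check shows that the maximum element frequency is $f$ (achieved by $v$), and that the unique integer solution respecting the color-equality requirement is $\{\cA\}$, so $\opt_\# = 2$.

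Next I would exhibit the optimal fractional $\LP$ solution. Set $y_\cA^\ast = 0$, $y_\cB^\ast = 1/f$, distribute the remaining mass $(f-1)/f$ uniformly across the $f-2$ singletons, and take $x_v^\ast = 1$ with $x_{u_i}^\ast = 1/f$ for every $i$. A routine verification shows that each constraint of Figure~\ref{tab1} is met: $\sum_\ell y_\ell^\ast = 1 = k$; the containment inequalities $x_j \leq \sum_{u_j \in \cS_\ell} y_\ell$ and the covering inequalities $x_j \geq y_\ell$ all hold (in particular $x_{u_1}^\ast = 1/f = y_\cA^\ast + y_\cB^\ast = \max\{y_\cA^\ast, y_\cB^\ast\}$); $\sum_i x_i^\ast = 2 = \opt_\#$; and the two color-class sums are each equal to $1$. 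The same vector also satisfies the additional constraints of Figure~\ref{tab3} with $h_1^\ast = h_2^\ast = 1/2$. Because the $\LP$ objective is forced by the equality $\sum_i x_i = \opt_\#$, this vector is $\LP$-optimal.

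Finally I would analyze any integral rounding that keeps $y_\cA = 0$. Since $\sum_\ell y_\ell = k = 1$ is a hard constraint, such a rounding must set exactly one of $y_\cB, y_{\cC^{(1)}}, \dots, y_{\cC^{(f-2)}}$ to $1$. Choosing $\cB$ covers all of $\{v, u_1, \dots, u_f\}$, yielding $p_1 = f$ and $p_2 = 1$; choosing any $\cC^{(j)}$ covers only $v$, yielding $p_1 = 0$ and $p_2 = 1$. In each case the ratio $p_1/p_2$ is at least $f$, establishing the proposition. The one slightly delicate step is arranging a fractional $\LP$-optimum in which the only balanced set $\cA$ receives zero weight; once the singletons $\cC^{(j)}$ are available to absorb the leftover $y$-mass without contributing any color-$1$ elements, there is just enough freedom for the color-equality constraint to hold at $x_v^\ast = 1 = \sum_i x_{u_i}^\ast$, and the rest of the argument is bookkeeping.
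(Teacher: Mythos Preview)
Your construction is correct and considerably simpler than the paper's. The paper builds $\alpha = f$ disjoint blocks, each consisting of $\alpha+1$ elements and $\alpha+1$ ``leave-one-out'' sets $\cS_i^j = \cU^j\setminus\{u_i^j\}$, together with a separate family of $\alpha+1$ two-element balanced sets that furnish a disjoint integral optimum; this yields an instance with $k=\alpha+1$ and $\Theta(f^2)$ elements. You instead produce a minimal instance with $k=1$, only $f+1$ elements and $f$ sets, and force the unique balanced set $\cA$ to receive zero $\LP$ mass by absorbing the leftover $y$-mass into singleton copies of $\{v\}$. Both arguments establish the same gap; yours is more economical, while the paper's has the incidental feature that $k$ grows with $f$, so the gap is exhibited even at large $k$.

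One small slip: in your final sentence you write ``In each case the ratio $p_1/p_2$ is at least $f$,'' but when a singleton $\cC^{(j)}$ is chosen you have $p_1 = 0$ and $p_2 = 1$, so $p_1/p_2 = 0$. The intended conclusion still holds---the color counts differ by an infinite factor, certainly at least $f$---but you should phrase it as $\max\{p_1/p_2,\,p_2/p_1\} \geq f$ (or simply note that one color class is empty) rather than fixing the direction of the ratio.
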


\begin{proof}
We will show our result for the $\LP$-relaxation in 
\FI{tab1}; proofs for other modified versions of this $\LP$-relaxation are similar. 
Consider $\alpha\gg 1$ disjoint collections of sets and elements of the following type: 
for $j\in\{1,\dots,\alpha\}$, the $j\tx$ collection consists of a set of $\alpha+1$ elements 
$\cU^j=\{u_1^j,\dots,u_{\alpha+1}^j\}$ with 
$\cC(u_1^j)=1$ and 
$\cC(u_2^j)=\dots=\cC(u_{\alpha+1}^j)=2$,
and the $\alpha+1$ sets 
$\cS_1^j,\dots,\cS_{\alpha+1}^j$ where 
$\cS_i^j=\cU^j \setminus \{u_i^j\}$ for $i\in\{1,\dots,\alpha+1\}$
(note that each element $u_i^j$ is in exactly $\alpha$ sets). 
Add to these collections the additional 
$2\alpha+2$ elements 
$u_1^\ell,u_2^\ell$ 
with 
$\cC(u_1^\ell)=1$ and $\cC(u_2^\ell)=2$, 
and the $\alpha+1$ sets $\cS^\ell=\{u_1^\ell,u_2^\ell\}$
for $\ell\in\{\alpha+1,\dots,2\alpha+1\}$. 
Note that for our created instance $f=\alpha$.
Consider the following two different 
solutions of the $\LP$-relaxation:
\begin{description}
\item[(1)]
For a non-integral solution, let 
$y_1^j=\dots=y_{\alpha+1}^j=\nicefrac{1}{\alpha}$,
let $x_1^j=1$ and let $x_2^j=\dots=x_{\alpha+1}^j=\nicefrac{1}{\alpha}$ 
for $j\in\{1,\dots,\alpha\}$, and set all other variables to zero.
This results in a solution with summation of set variables being $\alpha+1$ 
(\IE, $\alpha+1$ sets are selected non-integrally),
and summation of element variables being 
$2\alpha+2$ (\IE, $2\alpha+2$ elements are selected non-integrally).
Moreover, the summation of element variables with the color value of $1$ is precisely the same as summation 
of element variables with the color value of $2$ since both are equal to $\alpha+1$. 
\item[(2)]
For an integral solution, let 
$y^\ell=x_1^\ell=x_2^\ell=1$ for 
$\ell\in\{\alpha+1,\dots,2\alpha+1\}$. 
This also results in a solution in which $\alpha+1$ sets are selected, the number of elements 
covered is $2\alpha+2$ and the number of elements of each color is $\alpha+1$.
\end{description}
The crucial things to note here is that the two above solutions are disjoint (\IE, non-zero 
variables in one solution are zero in the other and \emph{vice versa}), and thus 
any rounding approach for the solution in \textbf{(1)} that does not change values of the zero-valued variables 
results in an integral solution in which the number of elements of color $2$ is $f$ times 
the number of elements of color $1$. 
\end{proof}


\section{A tale of fewer colors: deterministic approximation for \fmc when $\chi$ is ``not too large''}
\label{sec-nfmc-det}

In this section we provide polynomial-time deterministic approximations of \fmc via the \emph{iterated rounding} technique 
for $\LP$-relaxations. 
We assume that the reader is familiar with the basic concepts related to this approach as described, 
for example, in~\cite{LRS11}. 
Our approximation qualities will depend on the parameters $f$ and $\chi$ and the coloring constraint bounds 
are interesting only if $\chi$ is not too large, \EG, no more than, say, poly-logarithmic in $n$.
\emph{For better understanding of the idea, we will first consider the special case 
\nfmc of \fmc for which $f=2$, and later on describe how to adopt the same approach for arbitrary $f$}.
As per the proof of Theorem~\ref{thm-main} (see Section~\ref{sec-str})
we may assume we know the value of $\nopt$ exactly.
A main ingredient of the iterated rounding approach is the following 
``rank lemma''.

\begin{fact}[Rank lemma]{\em\cite[Lemma 2.1.4]{LRS11}}\label{fact2}
Consider any convex polytope 
$P\eqdef \{\x\in\R^n \,|\, \mathbf{A}_j \x \geq b_j \text{ for } j\in\{1,\dots,m\}, \, \x\geq \mathbf{0}\}$
for some $\mathbf{A}_1,\dots,\mathbf{A}_m\in\R^{n}$ and $(b_1,\dots,b_m)^{\mathrm{T}}\in\R^m$. 
Then the following property holds for every extreme-point for $P$: the number of 
any maximal set of linearly independent tight constraints $($\emph{\IE}, constraints satisfying  
$\mathbf{A}_j \x = b_j$ for some $j)$ in this solution 
equals the number of non-zero variables.
\end{fact}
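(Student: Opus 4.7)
The plan is to combine the geometric characterization of extreme points---that $\x^*$ cannot be written as a nontrivial convex combination of other points of $P$---with a linear-algebraic perturbation argument. Let $S = \mathrm{supp}(\x^*) \subseteq \{1,\dots,n\}$ with $|S| = s$, and let $A_\cT$ denote the matrix whose rows are the tight constraint vectors $\mathbf{A}_j$ (those $j$ with $\mathbf{A}_j \x^* = b_j$). Writing $A_\cT|_S$ for the submatrix obtained by restricting $A_\cT$ to columns in $S$, the heart of the proof reduces to showing $\mathrm{rank}(A_\cT|_S) = s$; everything else follows by standard linear algebra.

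First I would establish $\mathrm{rank}(A_\cT|_S) \leq s$, which is immediate since $A_\cT|_S$ has only $s$ columns. For the lower bound I would argue by contradiction. Assume $\mathrm{rank}(A_\cT|_S) < s$. Then the nullspace of $A_\cT|_S$ in $\R^s$ is nontrivial, so there exists a nonzero $\vecd' \in \R^s$ with $A_\cT|_S \vecd' = \mathbf{0}$. Extend $\vecd'$ to $\vecd \in \R^n$ by zero-padding on coordinates outside $S$. For sufficiently small $\epsilon > 0$, both $\x^* + \epsilon \vecd$ and $\x^* - \epsilon \vecd$ are feasible for $P$: tight inequality constraints remain tight since $\mathbf{A}_j \vecd = \mathbf{A}_j|_S \vecd' = 0$; non-tight inequalities are preserved by continuity (strict inequalities tolerate small perturbations); and non-negativity is preserved because $d_i = 0$ whenever $i \notin S$ and $x_i^* > 0$ whenever $i \in S$. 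But then $\x^* = \tfrac{1}{2}(\x^* + \epsilon \vecd) + \tfrac{1}{2}(\x^* - \epsilon \vecd)$ is a nontrivial convex combination of two distinct feasible points, contradicting the extremality of $\x^*$.

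Having established $\mathrm{rank}(A_\cT|_S) = s$, I would select $s$ tight rows $\mathbf{A}_{j_1}, \dots, \mathbf{A}_{j_s}$ whose restrictions to $S$ form a basis of $\R^s$. These rows are automatically linearly independent as vectors in $\R^n$, since any nontrivial linear dependence among them would induce a dependence among their restrictions. To argue this set is maximal, consider any other tight row $\mathbf{A}_{j_0}$. Its restriction $\mathbf{A}_{j_0}|_S$ decomposes uniquely as $\sum_k c_k \mathbf{A}_{j_k}|_S$, and evaluating the residual $\mathbf{w} := \mathbf{A}_{j_0} - \sum_k c_k \mathbf{A}_{j_k}$ against $\x^*$ yields $b_{j_0} = \sum_k c_k b_{j_k}$, because $\mathbf{w}$ vanishes on $S$ while $\x^*$ vanishes on $S^c$. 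Thus no additional tight row can be independent of the selected $s$ rows in the support-restricted sense that drives the iterated rounding machinery.

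The main obstacle---and the only step requiring delicate attention---is conceptual: carefully distinguishing linear independence of tight rows viewed as full vectors in $\R^n$ from linear independence of their restrictions to the support $S$. The support-restricted viewpoint is what practically matters for iterated rounding, because it tracks precisely the number of fractional coordinates that need to be rounded. Once this viewpoint is adopted, the substantive work is the perturbation argument above, and the rest is bookkeeping via standard dimension counts.
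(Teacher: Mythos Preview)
The paper does not give its own proof of this statement: Fact~\ref{fact2} is quoted from \cite[Lemma~2.1.4]{LRS11} and used as a black box in the proof of Lemma~\ref{lem-term}. There is therefore no in-paper argument to compare your proposal against.

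That said, your perturbation argument is the standard one and correctly establishes $\mathrm{rank}(A_{\cT}|_S)=s$; in particular it yields $\mathrm{rank}(A_{\cT})\geq s$, which is precisely the inequality the paper actually invokes (bounding the number of non-zero variables from above by the number of available non-trivial constraints). You are also right to flag the distinction between linear independence of the full rows $\mathbf{A}_j\in\R^n$ and of their restrictions to $S$. Indeed, the literal ``equals'' with full-vector independence can fail in degenerate instances: take $n=2$ with constraints $x_1\geq 1$, $x_1+x_2\geq 1$, $\x\geq\mathbf{0}$; at the extreme point $(1,0)$ both $A$-rows $(1,0)$ and $(1,1)$ are tight and independent in $\R^2$, yet only one variable is nonzero. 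The support-restricted version you prove is the robust formulation, and since the paper only uses the direction $\mathrm{rank}(A_{\cT})\geq s$, this subtlety does not affect its application.
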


\subsection{Approximating \nfmc}
\label{sec-nfmc-con}

\begin{theorem}\label{thm-nfmc-con}
We can design a deterministic polynomial-time approximation algorithm \algiter for \nfmc with the following 
properties: 
%
\begin{enumerate}[label=\textbf{\emph{(}\alph*\emph{)}},leftmargin=*]
\item
The algorithm selects 
$\tau$ nodes where 
$
\tau \leq 
\begin{cases}
k+\frac{\chi-1}{2}, & \mbox{if $\chi=O(1)$}
\\
k +\chi-1, & \mbox{otherwise}
\end{cases}
$
\item
The algorithm
is a   
$\frac{1}{2}$-approximation for \nfmc, \emph{\IE}, 
the total weight of the selected elements 
is at least $\nicefrac{\opt}{2}$.
\item
The algorithm
satisfies
the $\eps$-approximate coloring constraints $($cf.\ Inequality~\eqref{eq1}$)$
as follows:
\begin{quote}
for all 
$i,j\in\{1,\dots,\chi\}$, 
$\frac{p_i}{p_j}<
\begin{cases}
4+4\chi,  & \mbox{if $\chi=O(1)$}
\\
4+2\chi+4\chi^2, & \mbox{otherwise}
\end{cases}
$
\end{quote}
\end{enumerate}
\end{theorem}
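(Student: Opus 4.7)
The plan is to adapt the iterated-rounding LP-relaxation technique indicated by the section setup. First, I would formulate the natural LP: introduce variables $y_v \in [0,1]$ and $x_e \in [0,1]$ for every node $v$ and edge $e$, with coverage constraints $x_e \le y_u + y_v$ for every $e = (u,v)$, the budget $\sum_v y_v = k$, and the coloring equalities $\sum_{e \in C_i} x_e = \sum_{e \in C_j} x_e$ for all $i < j$. Following Section~\ref{sec-str}, I would also guess the integer $\nopt$ exactly (by running the algorithm for each candidate value) and add $\sum_e x_e = \nopt$ together with the covering inequalities $x_e \ge y_v$ for every node $v$ incident to $e$. The objective is $\max \sum_e w(e)\, x_e$.

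Next, the algorithm \algiter would iteratively compute an optimal extreme-point solution to the current LP, permanently fix every $y_v = 1$ (removing $v$ and decrementing the budget), drop every $y_v = 0$, and repeat until every surviving $y_v$ is strictly fractional. At that point I would invoke the rank lemma (Fact~\ref{fact2}): letting $n'$ be the number of surviving $y$-variables and $m'$ the number of non-zero $x$-variables, one has $n' + m' =$ the size of any maximal linearly independent set of tight constraints. Each non-zero $x_e$ accounts for at least one edge-local tight constraint (either $x_e = 1$ or $x_e = y_u + y_v$), consuming $m'$ of them; the remaining come from the budget, the $\nopt$-guess, and the at most $\chi - 1$ independent coloring equalities, giving $n' \le \chi + 1$. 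Since strict fractionality forces $\sum y_v > n' - 1$, rounding every surviving $y_v$ up to $1$ selects at most $k + \chi - 1$ nodes in total, matching the general bound in part (a). For the improved $k + (\chi - 1)/2$ bound when $\chi = O(1)$, I would exhaustively enumerate one ``pivot'' edge per color class (at most $|E|^{\chi} = n^{O(1)}$ possibilities) and pre-enforce half of the coloring ratios through hard-coded inclusions, effectively halving the rank-lemma contribution of the coloring constraints.

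For the $\nicefrac{1}{2}$-approximation in part (b): rounding every $y_v > 0$ up to $1$ guarantees that every edge $e$ with $x_e^* > 0$ is covered in the integer solution, so by the covering inequalities the recovered weight is at least $\optf \ge \opt$ before any adjustment. The factor of two enters only when some rounded-up vertices must be pruned in order to satisfy the coloring-approximation bounds of part (c); a charging argument shows that at most half of the LP's edge weight is shed in this pruning. For part (c) itself, since the LP's coloring equalities hold with ratio exactly $1$, the ratio $p_i/p_j$ in the rounded integral solution is perturbed only by the contribution of the $n' \le \chi + 1$ rounded-up fractional nodes, each of which contributes at most $f = 2$ edges per color class. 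Combining these perturbations with the guaranteed lower bound $p_j \ge \Theta(\nopt/\chi)$ from the coloring equalities yields the stated $4 + 4\chi$ ratio in the enumerated case and $4 + 2\chi + 4\chi^2$ in the general case.

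The main obstacle will be the rank-lemma bookkeeping: ensuring that the coloring equalities contribute exactly $\chi - 1$ independent tight constraints at the final extreme point (so that the counting yields $n' \le \chi + 1$ rather than a larger quantity), while keeping the LP strong enough to preserve the $\nicefrac{1}{2}$ objective bound through the rounding. A secondary obstacle is verifying that the pivot-edge enumeration for the $\chi = O(1)$ case meshes cleanly with the iterated rounding, so that the pre-enforced coloring choices can be removed from the LP without destroying the extreme-point structure on which the remaining iterations depend.
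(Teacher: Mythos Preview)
Your proposal has the right high-level shape (LP $+$ iterated rounding $+$ rank lemma), but two of the three conclusions rest on arguments that do not work as written.

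\textbf{The coloring bound (c) is the serious gap.} You write that each of the $n'\le \chi+1$ rounded-up fractional nodes ``contributes at most $f=2$ edges per color class.'' This confuses the frequency parameter $f$ (each \emph{edge} lies in $f=2$ node-sets) with node degree. A single fractional node $u$ may have degree $\Theta(n)$ and contribute arbitrarily many edges of a single color, so rounding it up can blow $p_i/p_j$ up without bound. The paper handles this by a preprocessing step you are missing entirely: it first \emph{guesses} the $\chi+1$ highest-degree nodes $v_1,\dots,v_{\chi+1}$ of some optimal solution (polynomial when $\chi=O(1)$), removes them, and only then runs the LP on the residual instance. This guarantees that every node the LP later rounds up has degree at most $\deg(v_1)$, and that $\sum_{i\le\chi+1}\deg(v_i)\le 2\,\nopt$. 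These two facts are exactly what drive the $4+4\chi$ bound. For arbitrary $\chi$ the paper guesses only one top-degree node and replaces the exact color-count equalities by two-sided inequalities, which is why the rank-lemma count doubles to $2\chi+1$ and the color ratio degrades to $4+2\chi+4\chi^2$. Your ``pivot edge per color class'' enumeration does nothing to control node degrees and cannot substitute for this step.

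\textbf{The source of the $\nicefrac12$ in (b) is misidentified.} There is no pruning step in the paper's algorithm; every node rounded to $1$ stays in the output. The factor $2$ arises because the paper's LP uses \emph{incidence-indexed} edge variables $x_{e,u}$ with the equality $x_{e,u}=y_u$ (so an edge $e=\{u,v\}$ has two variables, one per endpoint). The LP objective therefore counts each covered edge up to twice, and the integral weight recovered is at least half of $\optf\ge\opt$. Your single-variable-per-edge LP with $x_e\le y_u+y_v$ does not have this clean accounting, and your claimed pruning-based loss of a factor $2$ is not an argument. A secondary benefit of the paper's equality formulation is that the rank-lemma bookkeeping becomes trivial: there are exactly $\rhat$ constraints of type $x_{e,u}=y_u$, one per edge variable, so the count $\rhat+\chi+1\ge \nhat+\rhat$ falls out immediately without having to reason about which of $x_e\le 1$, $x_e\ge y_u$, or $x_e\le y_u+y_v$ is tight.
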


We discuss the proof in the rest of this section.
Let $G=(V,E)$ be the given graph, and let $\deg(v)$ denote the degree of node $v$.
Assume that $G$ has no isolated nodes.

\subsubsection{The case of $\chi=O(1)$} 

Since the problem can be exactly solved in polynomial time by exhaustive enumeration if $k$ is a constant, 
we can assume $k$ is at least a sufficiently large integer, \EG, assume that $k>10\chi$.

\medskip
\noindent
\textbf{Initial preprocessing}
\smallskip

To begin, we ``guess'' $\chi+1$ nodes, say $v_1,\dots,v_{\chi+1}\in V$ 
with $\deg(v_1)\leq \deg(v_2)\leq\dots\leq \deg(v_{\chi+1})$ such that 
there exists an optimal solution contains these $\chi+1$ nodes 
with the following property: ``the remaining $k-(\chi+1)$ nodes in the solution have degree \emph{at most} $\deg(v_1)$''.
Since there are at most $\binom{n}{\chi+1}=n^{O(1)}$ choices for such $\chi+1$ nodes, we can try them out in an 
\emph{exhaustive} fashion. 
Thus, we only need to analyze that run of our algorithm where the our guess is \emph{correct}.
Once these $\chi+1$ nodes have been selected, we will 
use the following sets of nodes in $\Vhat$ and ''incidence-indexed'' edges $\Ehat$ as input 
to our algorithm (note that an edge $e\eqdef \{u,v\}$ may appear as two members $(e,u)$ and $(e,v)$ in $\Ehat$ if 
both $u$ and $v$ are in $\Vhat$):  
\begin{gather*}
\Vhat = V \setminus 
\big ( \, \{ v_1,\dots,v_{\chi+1} \} \,\cup\, \{ v \,|\, \text{degree of $v$ in $G$ is strictly larger than $\deg(v_{1})$ } \} \, \big)
\\
\Ehat =  \{ \, (e,u) \,|\, \big( e\eqdef\{u,v\} \in E \big) \bigwedge \big( v\notin \{ v_1,\dots,v_{\chi+1} \} \big) \bigwedge 
            \big(  u \in \Vhat \big) \, \}
\end{gather*}
Fix an optimal solution $\vopt\subseteq V$ that includes the nodes 
$v_1,\dots,v_{\chi+1}$.
We next make the following parameter adjustments: 
\begin{enumerate}[label=$\triangleright$,leftmargin=*]
\item
We update an \emph{estimate} for 
$p_j$ (the number of edges of color $j$ covered by the optimal solution) from its initial value of 
$\nicefrac{\opt_\#}{\chi}$ 
in the following manner.
Let $\mu_j$ be the number of edges of color $j$ incident on at least one of the nodes in 
$\{v_1,\dots,v_{\chi+1}\}$.
Consider the quantity 
$
\qhat_j=
\sum_{u\in\vopt\setminus \{v_1,\dots,v_{\chi+1}\} }
\big|
\,
\big\{
e\eqdef\{u,v\}\in E \,|\,
( v\notin \{v_1,\dots,v_{\chi+1} \} ) 
\,\wedge\, 
(\cC(e)=j )
\big\}
\,
\big|
$.
Note that 
$
\sum_{i=1}^{\chi+1}\deg(v_i)
\leq 2\nopt
$
and 
$\qhat_j$ 
is an integer in the set 
$\Big\{\frac{\nopt}{\chi} - \mu_j,\frac{\nopt}{\chi} - \mu_j+1,\dots,2\big(\frac{\nopt}{\chi} - \mu_j\big) \Big\}\subseteq \{0,1,2,\dots,n\}$
since any edge can be covered by either one or two nodes.
Note that there are at most 
$\frac{\nopt}{\chi}-\mu_j+1 \leq \frac{\nopt}{\chi}\leq \frac{n}{\chi}$
possible number of integers values that each 
$\qhat_j$ may take. 
Since $\chi=O(1)$, we can try out all possible combinations of $\qhat_j$ values over \emph{all} colors in 
polynomial time since 
$(n/\chi)^\chi=n^{O(1)}$. Thus, we henceforth assume that we know the \emph{correct} value of $\qhat_j$ for 
each $j\in\{1,\dots,\chi\}$.
Note that $\qhat_j\geq 0$ since our guess is correct.
\item
Update $k$ (the number of nodes to be selected) by subtracting $\chi+1$ from it, and call the new value $\khat$.
\item
Update $C_i$ (the set of edges of color $i$) to be the set of edges \emph{in} 
$\Ehat$ that are of color $i$.
\end{enumerate}

\medskip
\noindent
\textbf{Yet another $\LP$-relaxation}
\smallskip

Let 
$|\Vhat|=\nhat$ and $|\Ehat|=\mhat$.
We will start with an initial $\LP$-relaxation of \nfmc on $\Ghat$ which will be iteratively modified 
by our rounding approach. Our $\LP$-relaxation is the following modified version of the $\LP$-relaxation in \FI{tab1}. 
\begin{enumerate}[label=$\triangleright$,leftmargin=*]
\item
There is a node indicator variable $y_v$ for every node $v\in \Vhat$ 
and 
an edge indicator variable
$x_{e,u}$ for every edge $(e,u)\in\Ehat$; thus we have $\nhat+\mhat$ variables in total.
\item
Constraints of the form ``$x_j \geq y_\ell$'' and 
``$x_j \leq \sum_{u_j\in\cS_\ell} y_\ell$''
in \FI{tab1} are removed now and instead replaced by at most two 
constraints $x_{e,u}=y_u$ if $y_u\in\Vhat$ and $x_{e,v}=y_v$ if $y_v\in\Vhat$ .
This is done so that we can apply the rank lemma in a meaningful way.
\item
Note that 
the quantity 
$\sum_{u\in\Vhat} \sum_{ e\eqdef\{u,v\}\in C_i} x_{e,u}$
for each color $i$
is the integer $\qhat_i$ mentioned before.
\item
To maximize the parameter ranges over which our algorithm can be applied, we 
replace the $\binom{\chi}{2}$ constraints in \FI{tab1} of the form 
``$\sum_{u_\ell \in C_i} x_\ell = \sum_{u_\ell \in C_j} x_\ell$ for $i,j\in\{1,\dots,\chi\}$, $i<j$''
by 
the $\chi$ constraints
$\sum_{u\in\Vhat}\sum_{ e\eqdef\{u,v\}\in C_i} x_{e,u}  = q_i'$ for 
$i\in\{1,\dots,\chi\}$.
\end{enumerate}
The entire initial $\LP$-relaxation $\cL$ for $\Ghat$ is shown in \FI{tab4} for convenience.
Note that the number of constraints in lines 
\textbf{(1)}--\textbf{(3)} of \FI{tab4} is \emph{exactly} 
$\mhat+\chi+1$.

\begin{figure}[h]
\begin{center}
\begin{tabular}{l r l l}
\toprule
& maximize & \multicolumn{2}{l}{$\psi=\sum_{u\in\Vhat}\sum_{(e,u)\in \Ehat} w(e) x_{e,u}$}
\\
[4pt]
& subject to &  & 
\\
\textbf{(1)}
& & $x_{e,u}=y_u$ 
  & for all $u\in\Vhat$ and $(e,u)\in \Ehat$   
\\
[8pt]
\textbf{(2)}
   & & $\sum_{v\in V_1} y_v=\khat$ &   
\\
[4pt]
\textbf{(3)}
   & & $\sum_{u\in\Vhat} \sum_{ e\eqdef\{u,v\}\in C_i} x_{e,u} = \qhat_i$ &   
	  for all $i\in\{1,\dots,\chi\}$
\\
[4pt]
\textbf{(4)}
   & & $0\leq  x_{e,u} \leq 1$ & for all $(e,u)\in \Ehat$
\\
[4pt]
\textbf{(5)}
   & & $0\leq y_v \leq 1$ & for all $v\in \Vhat$
\\
\bottomrule
\end{tabular}
\end{center}
\vspace*{-0.2in}
\caption{\label{tab4}The initial $\LP$-relaxation $\cL=\cL^{(0)}$ for the graph $\Ghat$ used in Theorem~\ref{thm-nfmc-con}.
The iterated rounding approach will successively modify the $\LP$ to create a sequence $\cL^{(1)},\cL^{(2)},\dots$ of $\LP$'s.}
\end{figure}

\medskip
\noindent
\textbf{Details of iterated rounding}
\smallskip

We will use the variable $t\in\{0,1,2,\dots,n\}$ to denote the iteration number of our rounding, with 
$t=0$ being the situation before any rounding has been performed, and we will use a ``\emph{superscript} $(t)$'' for the relevant 
quantities to indicate their values or status \emph{after} the $t\tx$ iteration of the rounding, \EG, 
$\nhat^{(0)}=\nhat$ and 
$\nhat^{(1)}$ is the value of $\nhat$ after the first iteration of rounding. 
Our iterated rounding algorithm \algiter in high level details is shown in \FI{tab5}, where 
the following 
notation is used for brevity
for a node $u\in\Vhat$: 
\begin{gather*}
Z_u^{\mathrm{variables}}
=\{y_u\} \, \bigcup\,  \big\{ \, x_{e,u} \,|\, (e,u)\in \Ehat \big\}
\end{gather*}
For concise analysis of our algorithm, we will use the following notations: 
\begin{enumerate}[label=$\triangleright$]
\item
$W^{\chi+1}$
is the sum of weights of all the edges 
incident to one or more nodes from the set of nodes
$\{ v_1,\dots,v_{\chi+1} \}$.
\item
$w(X)=\sum_{x_{u,e}\in X}w(e)$ for a subset of variable $X\subseteq \{ x_{e,u} \,|\, (e,u)\in \Ehat \}$.
\item
$
W_{\!\!\text{\tiny\sc Alg}}^{(t)}
=
w(\Xsol^{(t)})
$
is 
the \emph{sum} of weights of the edges whose variables are in $\Xsol^{(t)}$ 
(thus, for example, 
$W_{\!\!\text{\tiny\sc Alg}}^{(0)}=0$).
\item
$\optf^{(t)}$ 
is the optimum value of the objective function of the $\LP$-relaxation $\cL^{(t)}$ during the $t\tx$ iteration 
of rounding.
\item
$\phat_i^{(t)}$
is the number of edges of color $i$ selected by \algiter
\emph{up to and including} the $t\tx$ iteration of rounding.
\item
$\tf$ is the value of $t$ in the \emph{last} iteration of rounding. 
\end{enumerate}

\begin{figure}[h]
\begin{center}
\scalebox{0.9}[0.9]{
\begin{tabular}{l}
\toprule
$(* \text{ initialization } *)$ 
\\
[5pt]
$t \leftarrow 0$ 
\hspace*{0.1in}
$\Vsol\leftarrow\emptyset$; 
\hspace*{0.1in}
$\Xsol\leftarrow\emptyset$; 
\hspace*{0.1in}
$\mbox{var-count}_{\mathrm{remaining}} \leftarrow \nhat+\mhat$; 
\hspace*{0.1in}
$\rhat \leftarrow \mhat$; 
\hspace*{0.1in}
$\nhat \leftarrow \nhat$; 
\\
[5pt]
$(* \text{ iterations of rounding } *)$ 
\\
[5pt]
\wwhile\ $(\mbox{var-count}_{\mathrm{remaining}}\neq 0)$ \ddo
\\
[5pt]
\hspace*{0.2in}
$t\leftarrow t+1$
\\
[5pt]
\hspace*{0.2in}
find an \emph{extreme-point} optimal solution of 
objective value $\optf^{(t-1)}$
for the $\LP$ $\cL^{(t-1)}$ (\emph{cf}.\ \FI{tab4})
\\
[5pt]
\hspace*{0.2in}
\textbf{begin cases} 
\\
[5pt]
\hspace*{0.3in}
\textbf{Case 1:} there exists a variable $y_u$ in the solution such that $y_u=0$
\\
\hspace*{0.5in}
$\mbox{var-count}_{\mathrm{remaining}} \leftarrow \mbox{var-count}_{\mathrm{remaining}} - 
\big| \, Z_u^{\mathrm{variables}} \, \big|
$; 
$\,\,\,\nhat \leftarrow \nhat - 1$
\\
[5pt]
\hspace*{0.5in}
$\rhat \leftarrow \rhat - \big| \,\{ \, x_{e,u} \,|\, x_{e,u} \in Z_u^{\mathrm{variables}} \, \}\, \big|$; 
$\,\,\,\Xsol \leftarrow \Xsol \cup \{ \, x_{e,u} \,|\, x_{e,u} \in Z_u^{\mathrm{variables}} \, \}$
\\
[5pt]
\hspace*{0.5in}
remove the variables in $ Z_u^{\mathrm{variables}}$ 
from $\cL^{(t-1)}$, and delete or update the constraints 
\\
\hspace*{0.8in}
and the objective function to reflect the removal of variables  
\\
[5pt]
\hspace*{0.3in}
\textbf{Case 2:} there exists a variable $y_u$ in the solution such that $y_u=1$
\\
\hspace*{0.5in}
$\Vsol \leftarrow \Vsol \cup \{u\}$;
$\,\,\,\mbox{var-count}_{\mathrm{remaining}} \leftarrow \mbox{var-count}_{\mathrm{remaining}} - 
\big| \, Z_u^{\mathrm{variables}} \, \big|
$ 
\\
[5pt]
\hspace*{0.5in}
$\khat \leftarrow \khat - 1$;
$\,\,\,\nhat \leftarrow \nhat - 1$;
$\,\,\,\rhat \leftarrow \rhat - \big| \,\{ \, x_{e,u} \,|\, x_{e,u} \in Z_u^{\mathrm{variables}} \, \}\, \big|$
\\
[5pt]
\hspace*{0.5in}
$\Xsol \leftarrow \Xsol \cup \{ \, x_{e,u} \,|\, x_{e,u} \in Z_u^{\mathrm{variables}} \, \}$
\\
[5pt]
\hspace*{0.5in}
remove the variables in $ Z_u^{\mathrm{variables}}$ 
from $\cL^{(t-1)}$, and delete or update the constraints 
\\
\hspace*{0.8in}
and the objective function to reflect the removal of variables  
\\
[5pt]
\hspace*{0.5in}
$\forall \, i\in \{1,\dots,\chi\}: \, 
\qhat_i \leftarrow \qhat_i - \big| \,\{ \, x_{e,u} \,|\, x_{e,u} \in Z_u^{\mathrm{variables}} \text{ and } \cC(e)=i \, \}\, \big|
$
\\
[5pt]
\hspace*{0.5in}
subtract the value
$
\sum_{x_{e,u} \in Z_u^{\mathrm{variables}}} w(e)x_{e,u}
$
from the objective function $\psi^{(t-1)}$
\\
[5pt]
\hspace*{0.3in}
\textbf{Case 3:} 
$1\leq \nhat \leq \chi+1$
\\
\hspace*{0.5in}
let $y_{u_1},\dots,y_{u_{\nhat'}}$ be the remaining non-zero $1\leq \nhat'\leq\nicefrac{\nhat}{2}$ node indicator variables
\\
[5pt]
\hspace*{0.5in}
$\mbox{var-count}_{\mathrm{remaining}} \leftarrow 0$;
$\,\,\,\khat \leftarrow 0$;
$\,\,\,\nhat \leftarrow 0$; 
$\,\,\,\rhat \leftarrow 0$
\\
[5pt]
\hspace*{0.5in}
$\forall \, i\in \{1,\dots,\chi\}: \, 
\qhat_i \leftarrow \qhat_i - \big| \,\{ \, x_{e,u} \,|\, x_{e,u} \in Z_u^{\mathrm{variables}} \text{ and } \cC(e)=i \, \}\, \big|
$
\\
[5pt]
\hspace*{0.5in}
$\Vsol \leftarrow \Vsol \cup \{u_1,\dots,u_{\nhat'}\}$;
$\,\,\,\Xsol \leftarrow \Xsol \cup \{ \, x_{e,u_j} \,|\, x_{e,u_j} \in Z_{u_j}^{\mathrm{variables}}, \, j\in\{1,\dots,\nhat'\} \, \}$
\\
[5pt]
\hspace*{0.2in}
\textbf{end cases} 
\\
[5pt]
\textbf{end while}
\\
[5pt]
$\optf \leftarrow 0$
\\
[5pt]
\rreturn
$\Vsol$
\\
\bottomrule
\end{tabular}
}
\end{center}
\vspace*{-0.2in}
\caption{\label{tab5}Pseudo-code of the iterated rounding algorithm \algiter used in Theorem~\ref{thm-nfmc-con}. $\Vsol^{(\tf)}$ is the
set of nodes selected in our solution.}
\end{figure}

\begin{lemma}\label{lem-term}
\algiter terminates after at most $n$ iterations and selects 
at most $k+\frac{\chi-1}{2}$ nodes.
\end{lemma}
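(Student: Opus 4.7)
The plan is to establish the two claims separately; termination is immediate, and the node-count bound reduces to a rank-lemma calculation. For termination, each iteration of the main while loop executes exactly one of the three cases. Cases~1 and~2 each remove one node variable $y_u$ (together with its associated edge variables $x_{e,u}$) from the current LP and therefore decrease $\nhat$ by one, while Case~3 sets $\mbox{var-count}_{\mathrm{remaining}}\leftarrow 0$ and exits the loop. Since initially $\nhat\leq n$, the loop terminates after at most $n$ iterations.

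For the node-count bound I would decompose $\Vsol^{(\tf)}$ into (i)~the $\chi+1$ nodes $v_1,\ldots,v_{\chi+1}$ committed during preprocessing, (ii)~the $c_2$ nodes selected across all Case~2 executions, and (iii)~the $\nhat'$ nodes selected in the single terminal Case~3 execution. Since the LP equality $\sum_v y_v = \khat$ is maintained throughout and $\khat$ is decremented by one per Case~2 selection, at the moment Case~3 is triggered one has $c_2 = k-(\chi+1)-\khat$. Two observations are now essential. First, whenever Case~3 fires the current $\khat$ is at least~$1$: otherwise $\khat=0$ combined with $\sum_v y_v=0$ and non-negativity would force every remaining $y_v$ to vanish, dispatching Case~1 rather than Case~3. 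Second, applying the rank lemma (Fact~\ref{fact2}) to the reduced LP obtained by eliminating every $x_{e,u}$ through the equality $x_{e,u}=y_u$, and using the Case~3 entry condition $\nhat\leq\chi+1$ together with the fact that no $y_v$ equals $0$ or $1$ at the selected extreme point, one obtains $\nhat'\leq\lfloor\nhat/2\rfloor\leq(\chi+1)/2$.

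Combining these ingredients,
\[
\tau \;=\; (\chi+1)+c_2+\nhat' \;=\; k+(\nhat'-\khat) \;\leq\; k+\tfrac{\chi+1}{2}-1 \;=\; k+\tfrac{\chi-1}{2}.
\]
The main technical obstacle is the sharp bound $\nhat'\leq\nhat/2$: a naive application of the rank lemma against the $1+\chi$ equality constraints and $\nhat$ strictly fractional variables only yields the weaker $\nhat'\leq 1+\chi$, and one must exploit the interaction between the budget equality, the $\chi$ color equalities, and the integrality of $\khat$ and the $\qhat_i$'s to save the missing factor of~$2$ and thereby convert the weak bound $\tau\leq k+\chi$ into the desired $\tau\leq k+(\chi-1)/2$.
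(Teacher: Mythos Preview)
Your overall framework matches the paper's, but there are two genuine gaps.

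\textbf{Termination.} You assert that ``each iteration of the main while loop executes exactly one of the three cases'' and then count iterations. But this exhaustiveness is precisely what has to be argued: nothing in the algorithm guarantees \emph{a priori} that the guard $1\le\nhat\le\chi+1$ of Case~3 holds whenever Cases~1 and~2 fail. The paper establishes this by applying the rank lemma to the full LP (with the $x_{e,u}$ variables still present): when every $y_u\in(0,1)$, all $\nhat+\rhat$ variables are non-zero, whereas lines \textbf{(1)}--\textbf{(3)} supply at most $\rhat+\chi+1$ tight constraints, forcing $\nhat\le\chi+1$. Your write-up invokes the rank lemma only later, in the node-count part, so the termination paragraph is incomplete as stated.

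\textbf{The factor-of-two bound on $\nhat'$.} You correctly isolate $\nhat'\le(\chi+1)/2$ as the crux and honestly acknowledge that the rank lemma alone gives only $\nhat'\le\chi+1$. However, the mechanism you propose --- exploiting integrality of $\khat$ and the $\qhat_i$ --- is not the paper's, and you do not carry it out. In fact the observation $\sum_v y_v=\khat\in\mathbb{Z}$ with each $y_v\in(0,1)$ yields only $\nhat'\ge\khat+1$, which points the wrong way and does not recover a factor of two. The paper's argument is purely combinatorial and has nothing to do with integrality: it uses that $G$ has no isolated nodes, so every remaining node variable is accompanied by at least one edge variable, which (in the paper's accounting) forces the number of node variables picked up in Case~3 to be at most half of $\nhat$. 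Whatever one thinks of the paper's one-line justification, your proposed route via integral right-hand sides does not reach the target bound, so the node-count argument as written stops at $\tau\le k+\chi$ rather than $\tau\le k+\frac{\chi-1}{2}$.
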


\begin{proof}
For finite termination, 
it suffices to show that at least one of the three cases in \algiter always applies.
Consider the first iteration, say when $t=\alpha$, when neither Case~1 nor Case~2 applies. Note that this  
also implies that $x_{e,u}\notin\{0,1\}$ for any variable $x_{e,u}$ in 
$\LP^{(\alpha)}$
since otherwise the variable $y_u$ in 
$\LP^{(\alpha)}$ will be either $0$ or $1$ via the equality constraint $y_u=x_{e,u}$ and one of Case~1 or Case~2 
will apply. 
Thus the total number of non-zero variables is 
$\nhat^{(\alpha)}+\rhat^{(\alpha)}$.
Since the constraints in lines 
\textbf{(4)}--\textbf{(5)} of \FI{tab4} are not strict constraints now (\IE, not satisfied with equalities), 
the total number of \emph{any} maximal set of strict constraints 
is at most the total number of constraints in lines 
\textbf{(1)}--\textbf{(3)} of \FI{tab4}, \IE, at most 
$\rhat^{(\alpha)}+\chi+1$.
By the rank lemma (Fact~\ref{fact2})
$\rhat^{(\alpha)}+\chi+1 \geq \nhat^{(\alpha)}+\rhat^{(\alpha)}
\equiv 
\nhat^{(\alpha)}\leq \chi+1$, 
which implies 
Case~3 applies and the algorithm terminates.

We now prove the bound on the number of selected sets.
The value of $\khat$ decreases by $1$ every time a new node is selected
in Case~2 and remains unchanged in Case~1 where no node is selected. 
In the very last iteration involving Case~3, since $G$ has no isolated nodes
the number of node indicator variables is at least the number of edge indicator 
variables, implying $\nhat'\leq\nicefrac{\nhat}{2}\leq \frac{\chi+1}{2}$. Since 
$\khat^{(\tf-1)}\geq 1$, the total number of nodes selected is at most
$k+(\nhat'-1)\leq k+\frac{\chi-1}{2}$. 
\end{proof}

\begin{lemma}
The sum of weights $\Gamma$ of the edges selected by \algiter is at least
$\nicefrac{\opt}{2}$.
\end{lemma}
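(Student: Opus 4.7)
The plan is to lower bound $\Gamma$ by first bounding the LP value $\optf^{(0)}$ and then showing that the iterated rounding preserves a factor-$\nicefrac{1}{2}$ share of it. Write $\opt = W^{\chi+1} + \opt'$, where $\opt'$ is the total weight of those edges of $\vopt$ that are not incident to any node in $\{v_1,\dotsc,v_{\chi+1}\}$. By construction these edges all lie in $\Ehat$, and are covered by the set $V^\star \eqdef \vopt\setminus\{v_1,\dotsc,v_{\chi+1}\}$; note $V^\star \subseteq \Vhat$, since by our initial guess every node of $V^\star$ has degree at most $\deg(v_1)$. Setting $y_u=1$ for $u\in V^\star$ and $y_u=0$ otherwise yields a feasible integral solution to $\cL^{(0)}$: constraint (2) holds because $|V^\star|=\khat$, constraint (3) holds because $\qhat_i$ was defined to be exactly the total number of incidences of color-$i$ edges of $\Ehat$ onto $V^\star$, and (1) then follows automatically. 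Its LP objective, $\sum_{u\in V^\star}\deg_{w,\Ehat}(u)$, counts every edge of $\Ehat$ incident to $V^\star$ at least once and is therefore at least $\opt'$, so $\optf^{(0)}\geq \opt'$.

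Next I would argue $W_{\mathrm{Alg}}^{(\tf)}\geq \optf^{(0)}$ by a telescoping argument on $\optf^{(t)}$. In Case 1 only zero-valued variables are eliminated, so both $\optf^{(t)}$ and $W_{\mathrm{Alg}}^{(t)}$ are unchanged. In Case 2 the current extreme point has $y_u=1$ and hence $x_{e,u}=1$ for every surviving incidence $(e,u)$; the contribution removed from the LP objective, $\sum_{(e,u)\in \Ehat_{t-1}}w(e)$, equals the weight added to $W_{\mathrm{Alg}}^{(t)}$, and since the old extreme point restricted to the remaining variables stays feasible for the new LP, this weight is at least $\optf^{(t-1)}-\optf^{(t)}$. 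In the terminal Case 3, every remaining non-zero $y_{u_j}^\ast$ is rounded \emph{up} to $1$; using $\optf^{(\tf-1)} = \sum_j y_{u_j}^\ast \deg_{w,\Ehat_{\tf-1}}(u_j)$, the weight added at this step is $\sum_j \deg_{w,\Ehat_{\tf-1}}(u_j) \geq \optf^{(\tf-1)}$. Telescoping these bounds over all $t$ gives $W_{\mathrm{Alg}}^{(\tf)} \geq \optf^{(0)}$.

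Finally, since variables $x_{e,u}$ are removed only when their indexing node $u$ itself is processed, one gets $W_{\mathrm{Alg}}^{(\tf)} = \sum_{u\in\Vsol^{(\tf)}}\deg_{w,\Ehat}(u)$, and this quantity double-counts each edge of $\Ehat$ with both endpoints in $\Vsol^{(\tf)}$. Hence the weight of edges of $\Ehat$ actually covered by $\Vsol^{(\tf)}$ is at least $W_{\mathrm{Alg}}^{(\tf)}/2$. Adding the weight $W^{\chi+1}$ of the edges already covered by the preselected nodes,
\[
\Gamma \;\geq\; W^{\chi+1}+\tfrac{1}{2}W_{\mathrm{Alg}}^{(\tf)} \;\geq\; W^{\chi+1}+\tfrac{1}{2}\optf^{(0)} \;\geq\; W^{\chi+1}+\tfrac{1}{2}\opt' \;=\; \tfrac{1}{2}\bigl(\opt+W^{\chi+1}\bigr) \;\geq\; \tfrac{\opt}{2}.
\]
The main obstacle is the bookkeeping in the middle paragraph, where one must reconcile the two-sided edge indexing of $\Ehat$ with the single objective ``weight of covered edges'', and in particular justify that Case 3 still pays at least $\optf^{(\tf-1)}$ despite rounding fractional values up rather than down. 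The factor-$\nicefrac{1}{2}$ loss is unavoidable for this edge-incidence LP formulation and is precisely what produces the $\nicefrac{1}{2}$-approximation in the lemma.
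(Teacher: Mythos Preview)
Your proposal is correct and follows essentially the same approach as the paper: establish $\optf^{(0)}\geq \opt-W^{\chi+1}$ via the integral solution induced by $\vopt\setminus\{v_1,\dots,v_{\chi+1}\}$, telescope over iterations to show the (double-counted) selected incidence weight is at least $\optf^{(0)}$, and then halve to account for edges counted from both endpoints. Your case analysis for the telescoping step---Case~1 leaves both $\optf$ and the accumulated weight unchanged, Case~2 decreases $\optf$ by exactly what is gained, and Case~3 rounds up so the gain dominates the remaining $\optf^{(\tf-1)}$---is in fact somewhat more explicit than the paper's write-up, which folds this into a single recurrence on $w(\Xsol^{(t)})$.
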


\begin{proof}
Let 
$
W_{\!\!\text{\footnotesize\sc Alg}\!-W}^{(t)}
=
W_{\!\!\text{\tiny\sc Alg}}^{(t)}- W^{\chi+1}
$, and 
$
\opt_{\!-W}
=
\opt - W^{\chi+1}
$.
The proof of Lemma~\ref{lem-term} shows that Case~3 of \algiter is executed only when $t=\tf$.
Thus, the details of \algiter in \FI{tab5} imply the following sequence of assertions: 
\begin{enumerate}[label=\textbf{(\emph{\roman*})},leftmargin=*]
\item
$
\optf^{(0)} 
\geq 
\opt_{\!-W}
$ and  
$
\optf^{(t)} = 
\optf^{(t-1)} - \big( w(\Xsol^{(t)}) - w(\Xsol^{(t-1)}) \big)
$
for $t\in\{1,\dots,\tf-1\}$.
Since the variables 
$x_{e_{u,j}}\in \Xsol^{(\tf)}\setminus \Xsol^{(\tf-1)}$
are at most $1$, we have 
\begin{multline*}
\optf^{(\tf)}
=
\optf^{(\tf-1)} - 
\hspace*{-0.4in}
\sum _{x_{e_{u,j}} \in \Xsol^{(\tf)}\setminus \Xsol^{(\tf-1)}} \hspace*{-0.2in} w(e) x_{e_{u,j}}
\geq 
\optf^{(\tf-1)} - 
\hspace*{-0.4in}
\sum _{x_{e_{u,j}} \in \Xsol^{(\tf)}\setminus \Xsol^{(\tf-1)}} \hspace*{-0.2in} w(e) 
\\
=
\optf^{(\tf-1)} - 
\big( w(\Xsol^{(\tf)}) - w(\Xsol^{(\tf-1)}) \big)
\end{multline*}
Using the fact that 
$\optf^{(\tf)}=0$, we can therefore
unravel the recurrence to get 
\begin{gather}
\optf^{(\tf)}
\geq
\optf^{(0)} - 
w(\Xsol^{(\tf)})
\,\Rightarrow\,
w(\Xsol^{(\tf)})
\geq
\optf^{(0)}
\geq 
\opt_{\!-W}
\label{eq:iter2}
\end{gather}
\item
$
W_{\!\!\text{\footnotesize\sc Alg}\!-W}^{(0)} = 0
$ and 
$
W_{\!\!\text{\footnotesize\sc Alg}\!-W}^{(t)} = 
W_{\!\!\text{\footnotesize\sc Alg}\!-W}^{(t-1)}
+ \big( w(\Xsol^{(t)}) -  w(\Xsol^{(t-1)}) \big) 
$
for $t\in\{1,\dots,\tf\}$.
Using~\eqref{eq:iter2} we can 
unravel the recurrence we get 
\begin{gather*}
W_{\!\!\text{\footnotesize\sc Alg}\!-W}^{(\tf)} = 
w(\Xsol^{(\tf)})
\geq 
\optf^{(0)}
\geq 
\opt_{\!-W}
\end{gather*}
\end{enumerate}
Noting that an edge $e\eqdef\{u,v\}$
can contribute the value of $w(e)$ \emph{twice} in 
$W_{\!\!\text{\tiny\sc Alg}}^{(\tf)}$
corresponding to the two variables 
$x_{e,u}$ and 
$x_{e,v}$, the total weight $\Gamma$ of selected edges in our solution is at least 
\begin{gather*}
\Gamma \geq 
W^{\chi+1}
+
\frac{1}{2} W_{\!\!\text{\tiny\sc Alg}}^{(\tf)}
\geq
\frac{  W^{\chi+1} + W_{\!\!\text{\tiny\sc Alg}}^{(\tf)} }{2}
\geq
\frac{  W^{\chi+1} +  \opt_{\!-W} }{2}
=
\frac{\opt}{2}
\qed
\end{gather*}
\end{proof}

Our proof of Theorem~\ref{thm-nfmc-con}
is therefore completed once we prove the following lemma.

\begin{lemma}\label{lem-ratio}
For all $i,j\in\{1,\dots,\chi\}$ 
$
\frac
{\phat_i^{(\tf)}}
{\phat_j^{(\tf)}}
\leq 
4+ 4 \chi
$.
\end{lemma}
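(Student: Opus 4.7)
The plan is to separately upper bound $\phat_i^{(\tf)}$ and lower bound $\phat_j^{(\tf)}$ by expressions of the form $\Theta(\nopt/\chi)+\Theta(\nopt)$ and $\Omega(\nopt/\chi)$, respectively, and to exploit crucially the preprocessing guess of the $\chi+1$ pivot nodes $v_1,\dots,v_{\chi+1}$. First, I decompose the color-$i$ edges covered by \algiter into three contributions: (a) those incident to some pivot $v_\ell$, which contribute exactly $\mu_i$; (b) those with at least one endpoint selected in some Case~2 iteration and no endpoint in $\{v_1,\dots,v_{\chi+1}\}$; and (c) those with at least one endpoint among the $\nhat'\le \chi+1$ fractional nodes $u_1,\dots,u_{\nhat'}$ rounded up in Case~3. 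Since the rounding preserves the line~(3) equality constraint of \FI{tab4} exactly after each Case~1 and Case~2 update, the endpoint-incidence count added in (b) is exactly $\qhat_i^{(0)}-\qhat_i^{(\tf-1)}$, and (c) adds exactly $\sum_{j=1}^{\nhat'}d_i(u_j)$ further incidences, where $d_i(u)$ denotes the number of color-$i$ edges of $\Ehat$ incident to $u$.

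Converting endpoint incidences to distinct edges costs a factor of at most $2$ in either direction, since an edge with both endpoints selected is counted twice in the incidence sum. For the upper bound, I additionally observe that $\mu_i+\qhat_i$ is the color-$i$ endpoint-incidence count for $\vopt$, hence $\mu_i+\qhat_i\le 2p_i$; since the optimum is valid, $p_1=\dots=p_\chi=\nopt/\chi$, so $\mu_i+\qhat_i\le 2\nopt/\chi$. Using $\qhat_i^{(\tf-1)}\ge 0$ I then get
\[
\phat_i^{(\tf)}\;\le\;\mu_i+(\qhat_i^{(0)}-\qhat_i^{(\tf-1)})+\sum_{j=1}^{\nhat'}d_i(u_j)\;\le\;\frac{2\nopt}{\chi}+\sum_{j=1}^{\nhat'}d_i(u_j).
\]

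The main technical step is bounding the Case~3 overflow $\sum_{j=1}^{\nhat'}d_i(u_j)$ uniformly in $\chi$; this is where the preprocessing pays off. By construction every node of $\Vhat$ has degree at most $\deg(v_1)$ in $G$, while the pivots $v_1,\dots,v_{\chi+1}$ all lie in $\vopt$ and have degree at least $\deg(v_1)$. Therefore $(\chi+1)\deg(v_1)\le \sum_{v\in\vopt}\deg(v)\le 2\nopt$, giving $\deg(v_1)\le 2\nopt/(\chi+1)$. Consequently $\sum_{j=1}^{\nhat'}d_i(u_j)\le \nhat'\deg(v_1)\le 2\nopt$, and so $\phat_i^{(\tf)}\le 2\nopt/\chi+2\nopt$.

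For the lower bound I apply the incidence-to-edge conversion in the opposite direction. The incidence count of color-$j$ edges added in parts (b) and (c) equals $\qhat_j+\sum_{\ell}(1-y_{u_\ell}^{(\tf-1)})d_j(u_\ell)\ge \qhat_j$, so the number of distinct covered non-fixed color-$j$ edges is at least $\qhat_j/2$, and hence $\phat_j^{(\tf)}\ge \mu_j+\qhat_j/2\ge (\mu_j+\qhat_j)/2\ge p_j/2=\nopt/(2\chi)$. Dividing the two bounds gives the advertised ratio $\phat_i^{(\tf)}/\phat_j^{(\tf)}\le (2\nopt/\chi+2\nopt)/(\nopt/(2\chi))=4+4\chi$. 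The conceptual obstacle throughout is the intrinsic endpoint-incidence versus distinct-edge gap, which is unavoidable because the LP only sees incidences through the variables $x_{e,u}$; I dispose of it by applying the two directions of the conversion to the numerator and denominator consistently, so that the factor-of-$2$ losses combine cleanly with the preprocessing bound on $\deg(v_1)$.
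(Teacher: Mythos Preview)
Your proof is correct and follows essentially the same route as the paper: both arguments lower-bound $\phat_j^{(\tf)}$ by $\nopt/(2\chi)$ via the incidence-to-edge conversion together with $\mu_j+\qhat_j\ge p_j=\nopt/\chi$, and both upper-bound $\phat_i^{(\tf)}$ by $2\nopt/\chi+2\nopt$ by controlling the Case~3 overflow through the preprocessing guarantee that every node of $\Vhat$ has degree at most $\deg(v_1)$ while the $\chi+1$ pivots together have total degree at most $2\nopt$. The only cosmetic difference is that the paper bounds the overflow by $\sum_{j=1}^{\chi+1}\deg(v_j)\le 2\nopt$ directly, whereas you first isolate $\deg(v_1)\le 2\nopt/(\chi+1)$ and then multiply by $\nhat'\le\chi+1$; these are the same bound.
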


\begin{proof}
When $t=\tf$ Case~3 applies and,  
since the variables 
$x_{e_{u,j}}\in \Xsol^{(\tf)}\setminus \Xsol^{(\tf-1)}$
are at most $1$, 
$\qhat_i^{(\tf)}\leq 0$ and consequently 
$
\qhat_i^{(\tf-1)} - \qhat_i^{(\tf)}
\geq 
\qhat_i^{(\tf-1)}
$.
Noting that an edge $e\eqdef\{u,v\}$
can contribute \emph{twice} in 
the various $\qhat_i^{(t)}$'s 
corresponding to the two variables 
$x_{e,u}$ and 
$x_{e,v}$ 
and remembering that 
$\qhat_i^{(0)}=\qhat_i$,
we get 
\begin{multline*}
\phat_i^{(\tf)}
\geq 
\frac{1}{2} 
\Big( \sum_{t=1}^{\tf} \!\! \big( \qhat_i^{(t-1)} - \qhat_i^{(t)}  \big)
\Big) 
+\mu_i
\geq 
\frac{1}{2} 
\Big( \sum_{t=1}^{\tf-1} \!\!\! \big( \qhat_i^{(t-1)} - \qhat_i^{(t)}  \big)
+
\qhat_i^{(\tf-1)}
\Big) 
+\mu_i
=
\frac{\qhat_i^{(0)}}{2}
+\mu_i
\\
=
\frac{\qhat_i}{2}
+\mu_i
\geq 
\frac{\opt_\#}{2\chi}
-
\frac{\mu_i}{2}
+\mu_i
=
\frac{\opt_\#}{2\chi}
+
\frac{\mu_i}{2}
\end{multline*}
We can get an upper bound on 
$\phat_i^{(\tf)}$
by getting an upper bound on 
$
\qhat_i^{(\tf-1)} - \qhat_i^{(\tf)}
$
in the following manner.
Consider the 
$\nhat^{(\tf)'}<\nhat^{(\tf)}\leq\chi+1$ 
nodes 
$u_1,\dots,u_{\nhat^{(\tf)'}}$
in Case~3. 
By choice of the nodes 
$v_1,\dots,v_{\chi+1}$ of degrees 
$\deg(v_1),\dots,\deg(v_{\chi+1})$, respectively, 
the number of edges incident on $u_i$ is at most $\deg(v_i)$ for all
$i\in\{1,\dots, \nhat^{(\tf)'} \}$. 
Thus, we get 
$
\qhat_i^{(\tf-1)} - \qhat_i^{(\tf)}
\leq 
\sum_{j=1}^{\chi+1} \deg(v_j)
\leq 
2 \opt_\#
$,
and 
consequently
\begin{multline*}
\phat_i^{(\tf)}
\leq 
\sum_{t=1}^{\tf} \!\! \big( \qhat_i^{(t-1)} - \qhat_i^{(t)}  \big)
\leq
\sum_{t=1}^{\tf-1} \!\! \big( \qhat_i^{(t-1)} - \qhat_i^{(t)}  \big)
+
2 \opt_\#
=
\qhat_i
+
2 \opt_\#
\\
\leq 
2 \left( {\textstyle \frac{\nopt}{\chi}-\mu_i } \right) 
+
2 \opt_\#
=
{\textstyle (2+2\chi) \frac{ \opt_\#}{\chi} 
}
\end{multline*}
Thus, for all $i,j\in\{1,\dots,\chi\}$ we have 
\[
\frac
{\phat_i^{(\tf)}}
{\phat_j^{(\tf)}}
\leq 
\frac { 
        { (2 + 2\chi) \frac{ \opt_\#}{\chi} }
      }
			{
        \frac{\opt_\#}{2\chi} + \frac{\mu_j}{2}
			}
< 4 + 4\chi
\qed
\]
\end{proof}

\subsubsection{The case of arbitrary $\chi$} 

As stated below,  
there are two steps in the previous algorithm that cannot be executed in polynomial time 
when $\chi$ is \emph{not} a constant: 
\begin{enumerate}[label=\textbf{(\arabic*)},leftmargin=*]
\item\label{s1}
We cannot guess the $\chi+1$ nodes $v_1,\dots,v_{\chi+1}$ 
in polynomial time. Instead, we guess only one node $v_1$ 
such that there exists an optimal solution contains $v_1$
with the following property: ``the remaining $k-1$ nodes in the solution have degree \emph{at most} $\deg(v_1)$''.
\item\label{s2}
We cannot guess the exact value of $\qhat_i$ by exhaustive enumeration and therefore we cannot 
use the $\chi$ constraints 
``$\sum_{u\in\Vhat} \sum_{ e\eqdef\{u,v\}\in C_i} x_{e,u} = \qhat_i$''
in line 
\textbf{(3)}
of the $\LP$-relaxation in \FI{tab4} anymore.
However, note that it still holds that 
$\qhat_i$ 
is an integer in the set 
$\big\{\frac{\nopt}{\chi} - \mu_i,\frac{\nopt}{\chi} - \mu_i+1,\dots,2\big(\frac{\nopt}{\chi} - \mu_i\big) \big\}$.
Thus, instead we use the $2\chi$ constraints 
\[
\textbf{(3) }
\,\,\,\,\,\,
{\textstyle \frac{\nopt}{\chi} - \mu_i }
\leq
   \sum_{u\in\Vhat} \sum_{ e\eqdef\{u,v\}\in C_i} x_{e,u} = \qhat_i
\leq
{\textstyle 2 \big( \frac{\nopt}{\chi} - \mu_i \big)  }
\,\,\,\,\,\,
     \text{ for all } i\in\{1,\dots,\chi\}
\]
\end{enumerate}
We need modifications of the bounds in the previous proof to reflect these changes as follows:
\begin{enumerate}[label=$\triangleright$]
\item
We make some obvious parameter value adjustments such as: 
$\vhat=V\setminus \{v_1\}$, $\Ehat=E\setminus \{ \{u,v_1\} \,|\, \{u,v_1\}\in E\}$, 
$\khat^{(0)}=k-1$, $\mu_i\leq\deg(v_1)$ for all $i$.
\item
The number of constraints in lines 
\textbf{(1)}--\textbf{(3)} of \FI{tab4} is now 
$\mhat+2\chi+1$. 
\item
The condition in Case~3 of \FI{tab5} is now 
$1\leq \nhat \leq 2\chi+1$.
\item
In Lemma~\ref{lem-term}, we select at most $\big\lfloor k+\frac{2\chi-1}{2}\big\rfloor = k+\chi-1$ nodes.
\item
The calculations for the upper bound 
for $\phat_i^{(\tf)}$
in Lemma~\ref{lem-ratio} change as follows.
By choice of the node $v_1$ of degree $\deg(v_1)$, 
the number of edges incident on $u_i$ is at most $\deg(v_1)$ for all
$i\in\{1,\dots, \nhat^{(\tf)'} \}$. 
This now gives 
$
\qhat_i^{(\tf-1)} - \qhat_i^{(\tf)}
\leq 
(2\chi+1) \deg(v_1)
\leq 
(2\chi+1) \nopt
$,
and therefore 
$
\phat_i^{(\tf)}
\leq 
\qhat_i
+
(2\chi+1) \nopt
\leq
2 \left( {\textstyle \frac{\nopt}{\chi}-\mu_i } \right) 
+
(2\chi+1) \nopt
<
(2+\chi+2\chi^2) \frac{\nopt}{\chi}
$.
This gives us the following updated bound: 
\[
\frac
{\phat_i^{(\tf)}}
{\phat_j^{(\tf)}}
\leq 
\frac { 
        { (2+\chi+2\chi^2) \frac{\opt_\#}{\chi} }
      }
			{
        \frac{\opt_\#}{2\chi} + \frac{\mu_j}{2}
			}
<
4+2\chi+4\chi^2
\]
\end{enumerate}

\subsection{The general case: approximating \fmc}
\label{sec-fmc-con}

\begin{theorem}[generalizing Theorem~\ref{thm-nfmc-con} for \fmc]\label{thm-fmc-con}
We can design a deterministic polynomial-time approximation algorithm for \fmc with the following 
properties: 
\begin{enumerate}[label=\textbf{\emph{(}\alph*\emph{)}},leftmargin=*]
\item
The algorithm selects 
$\tau$ sets where 
$
\tau \leq 
\begin{cases}
k+\frac{\chi-1}{2}, & \mbox{if $\chi=O(1)$}
\\
k +\chi-1, & \mbox{otherwise}
\end{cases}
$
\item
The algorithm
is a   
$\nicefrac{1}{f}$-approximation for \nfmc, \emph{\IE}, 
the total weight of the selected elements 
is at least $\nicefrac{\opt}{f}$.
\item
The algorithm
satisfies
the $\eps$-approximate coloring constraints $($cf.\ Inequality~\eqref{eq1}$)$
as follows:
\begin{quote}
for all 
$i,j\in\{1,\dots,\chi\}$, 
$\frac{p_i}{p_j}<
\begin{cases}
O( \min \{ \chi^2f,\, \chi f^2\} ),  & \mbox{if $\chi=O(1)$}
\\
O(f^2+\chi^2f), & \mbox{otherwise}
\end{cases}
$
\end{quote}
\end{enumerate}
\end{theorem}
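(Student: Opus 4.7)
The plan is to adapt the iterated-rounding argument of Theorem~\ref{thm-nfmc-con} from graphs to general set systems, the only conceptual change being that an element now belongs to up to $f$ sets rather than an edge being incident on exactly $2$ nodes. Consequently, every multiplicity-$2$ bound in the previous proof becomes a multiplicity-$f$ bound, which degrades the coverage approximation from $\nicefrac{1}{2}$ to $\nicefrac{1}{f}$ and blows up the coloring-discrepancy estimates by a corresponding factor of $f/2$.

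First I would set up the LP-relaxation analogous to \FI{tab4}: one variable $y_\ell$ per set $\cS_\ell$, one variable $x_{j,\ell}$ per incidence pair with $u_j\in\cS_\ell$, equality constraints $x_{j,\ell}=y_\ell$, the cardinality constraint $\sum_\ell y_\ell=\khat$, and color-balance constraints
\[
\sum_{u_j\in C_i}\;\sum_{\cS_\ell\ni u_j} x_{j,\ell}\;=\;\qhat_i,\qquad i=1,\dots,\chi,
\]
where each $\qhat_i$ is an integer in $\{\opt_\#/\chi-\mu_i,\dots,f(\opt_\#/\chi-\mu_i)\}$ (the factor $f$ replacing the earlier factor $2$, since a non-guessed optimal element of color $i$ may be covered by anywhere from one to $f$ non-guessed optimal sets). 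When $\chi=O(1)$ one enumerates over all $n^{O(1)}$ tuples $(\qhat_1,\dots,\qhat_\chi)$ and guesses $\chi+1$ sets $\cS_{j_1},\dots,\cS_{j_{\chi+1}}$ with $|\cS_{j_1}|\leq\dots\leq|\cS_{j_{\chi+1}}|$ such that some optimal solution contains all of them and every other optimal set has cardinality at most $|\cS_{j_1}|$; when $\chi$ is arbitrary one replaces each color-balance equality by a two-sided range constraint and guesses just a single large set. The iterated-rounding procedure of \FI{tab5} then runs verbatim, with Cases~1 and~2 handling zero-valued and one-valued set variables and Case~3 triggered once the rank lemma forces $\nhat\leq\chi+1$ (or $\nhat\leq 2\chi+1$ in the range-constraint variant); the selected-set bound $\tau\leq k+\lfloor(\chi-1)/2\rfloor$ (respectively $k+\chi-1$) then follows exactly as in Lemma~\ref{lem-term}.

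For the coverage guarantee, because each element contributes to at most $f$ incidence variables, the telescoped inequality $w(\Xsol^{(\tf)})\geq\optf^{(0)}\geq\opt-W^{\chi+1}$ inherited from Theorem~\ref{thm-nfmc-con} over-counts each actually covered element by a factor of at most $f$, so the total weight of distinct covered elements is at least $W^{\chi+1}+\frac{1}{f}(\opt-W^{\chi+1})\geq\opt/f$. For the color-ratio, the analogs of the two estimates in Lemma~\ref{lem-ratio} give $\phat_i^{(\tf)}\geq\qhat_i/f+\mu_i\geq\opt_\#/(f\chi)$ from below and $\phat_i^{(\tf)}\leq\qhat_i+\sum_{t=1}^{\chi+1}|\cS_{j_t}|\leq f\opt_\#/\chi+f\opt_\#$ from above, using that the sets of any optimal solution have total cardinality at most $f\opt_\#$ (each covered element lying in at most $f$ of them). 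Combining yields the ratio $O(\chi f^2)$; the complementary $O(\chi^2 f)$ bound, relevant when $\chi\leq f$, follows by further exploiting that every non-guessed Case~3 set has cardinality at most $|\cS_{j_1}|\leq f\opt_\#/(\chi+1)$ so that $\phat_i^{(\tf)}=O(\chi\opt_\#)$, and taking the minimum delivers $O(\min\{\chi^2 f,\,\chi f^2\})$ when $\chi=O(1)$ and $O(f^2+\chi^2 f)$ for arbitrary $\chi$ (the additive $f^2$ reflecting the looser rank-lemma bound $2\chi+1$).

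The main obstacle is carrying the factor $f$ through the interlocking steps of the argument in a synchronized manner: the LP must be set up so that the equalities $x_{j,\ell}=y_\ell$ still give exactly one tight constraint per incidence variable (so the rank lemma continues to bound $\nhat$ by $\chi+1$), and the telescoping estimates must scrupulously distinguish ``incidence counts'' from ``distinct element counts'' since these differ by a factor of up to $f$. Once this bookkeeping is in place, every step of Theorem~\ref{thm-nfmc-con} carries over with $f$ replacing $2$ throughout.
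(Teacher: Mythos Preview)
Your proposal follows the same approach as the paper: translate nodes/edges to sets/elements, keep the incidence-indexed LP with equalities $x_{j,\ell}=y_\ell$, run the identical iterated-rounding loop, and replace every multiplicity-$2$ estimate by a multiplicity-$f$ estimate. The set-count bound, the $\nicefrac{1}{f}$ coverage guarantee, the lower bound $\phat_i^{(\tf)}\geq \qhat_i/f+\mu_i\geq \opt_\#/(f\chi)$, and the $O(\chi f^2)$ upper bound via $\sum_{t}|\cS_{j_t}|\leq f\opt_\#$ all match the paper.

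There is one genuine slip in your derivation of the complementary $O(\chi^2 f)$ bound. Your stated inequality $|\cS_{j_1}|\leq f\opt_\#/(\chi+1)$, multiplied by the at most $\chi+1$ Case~3 sets, only gives back $f\opt_\#$ for the Case~3 contribution, hence $\phat_i^{(\tf)}=O(f\opt_\#)$ rather than the $O(\chi\opt_\#)$ you assert; so this route does not yield anything beyond $O(\chi f^2)$. The paper instead bounds the \emph{sum} of the guessed-set cardinalities directly: since the $\chi+1$ guessed sets lie in the optimal solution, any element they contain is one of the $\opt_\#$ optimally covered elements, and within these $\chi+1$ specific sets each such element is counted at most $\min\{\chi+1,f\}$ times, giving
\[
\sum_{t=1}^{\chi+1}|\cS_{j_t}|\;\leq\;\min\{\chi+1,\,f\}\cdot\opt_\#.
\]
Using the $\chi$-branch of this minimum yields $\phat_i^{(\tf)}\leq f\opt_\#/\chi + (\chi+1)\opt_\#$ and hence the ratio $O(\chi^2 f+f^2)$; taking the better of the two branches gives the paper's $\min\{\chi^2 f+f^2,\,2\chi f^2\}$. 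Once you replace your $|\cS_{j_1}|$ argument with this sum bound, everything else in your write-up goes through.
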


The proof of 
Theorem~\ref{thm-fmc-con}
is a suitable modified version of the proof of 
Theorem~\ref{thm-nfmc-con}.
We point out the important alterations that are needed.

\subsubsection*{General modifications}

\begin{enumerate}[label=$\triangleright$,leftmargin=*]
\item
Nodes and edges now correspond to sets and elements, respectively, incidence of an edge on a node 
corresponds to membership of an element in a set, and degree of a node correspond to number of elements in 
a set.
\item
There is a set indicator variable $y_j$ for every element $\cS_j\in \Vhat$. 
For every element $(u_i,\cS_j)\in\Ehat$, there is 
an element indicator variable
$x_{i,j}$ 
and a constraint 
$x_{i,j} = y_j$.
\item
Now 
$
\sum_{i=1}^{\chi+1} |\cS_i|
\leq 
\min\{\chi,f\} \nopt
$
since any element in any one of the sets from $\cS_1,\dots,\cS_{\chi+1}$ can appear in at most 
$\min\{\chi,f\}$ other sets in the collection of sets $\cS_1,\dots,\cS_{\chi+1}$.
Also, $\qhat_j$ 
is an integer in the set 
$\Big\{\frac{\nopt}{\chi} - \mu_j,\frac{\nopt}{\chi} - \mu_j+1,\dots,f\big(\frac{\nopt}{\chi} - \mu_j\big) \Big\}\subseteq \{0,1,2,\dots,n\}$
since any element can appear in at most $f$ sets.
\item
An element $u_i$ appearing in $f_i\leq f$ sets, say sets $\cS_1,\dots,\cS_{f_i}$, 
can now contribute the value of $w(u_i)$ at most $f_i\leq f$ times in 
$W_{\!\!\text{\tiny\sc Alg}}^{(\tf)}$
corresponding to the $f_i$ variables 
$x_{i,1},\dots,x_{i,f_i}$.
Thus, we get a $\nicefrac{1}{f}$-approximation to the objective function.
\end{enumerate}

\subsubsection*{Modifications related to $\chi=O(1)$ case}

\begin{enumerate}[label=$\triangleright$,leftmargin=*]
\item
An element $u_i$ appearing in $f_i\leq f$ sets, say sets $\cS_1,\dots,\cS_{f_i}$, 
can now contribute at most $f_i\leq f$ times in 
the various $\qhat_i^{(t)}$'s 
corresponding to the $f_i$ variables 
$x_{i,1},\dots,x_{i,f_i}$.
This modifies the relevant inequality for 
$\phat_i^{(\tf)}$ as follows:
\begin{gather*}
\phat_i^{(\tf)}
\geq 
\frac{\qhat_i^{(0)}}{f} +\mu_i
\geq 
\frac{\opt_\#}{f\chi}
-
\frac{\mu_i}{f}
+\mu_i
>
\frac{\opt_\#}{f\chi}
\\
\phat_i^{(\tf)}
\leq 
\qhat_i
+
\sum_{i=1}^{\chi+1} |\cS_i|
\leq
{\textstyle f\big(\frac{\nopt}{\chi} - \mu_i\big) }
+
\min\{\chi,f\} \nopt
<
{\textstyle \min \{ \chi^2+f,\, 2 \chi f\} \frac{\nopt}{\chi} }
\\
\frac
{\phat_i^{(\tf)}}
{\phat_j^{(\tf)}}
\leq 
\frac {
     {\textstyle \min \{ \chi^2+f,\, 2 \chi f\} \frac{\nopt}{\chi} }
  }
	{
     \frac{\opt_\#}{f\chi}
	}
\leq
\min \{ \chi^2f+f^2,\, 2 \chi f^2\}
=O( \min \{ \chi^2f,\, \chi f^2\} )
\end{gather*}
\end{enumerate}

\subsubsection*{Modifications related to the arbitrary $\chi$ case}

\begin{enumerate}[label=$\triangleright$,leftmargin=*]
\item
The calculations for the upper bound 
for $\phat_i^{(\tf)}$
in Lemma~\ref{lem-ratio} change as follows.
\begin{gather*}
{\textstyle f\big(\frac{\nopt}{\chi} - \mu_i\big) }
+
(2\chi+1)\nopt
<
{\textstyle  \big( f + \chi + 2\chi^2 \big) \frac{\nopt}{\chi} }
<
{\textstyle  \big( f + 3\chi^2 \big) \frac{\nopt}{\chi} }
\end{gather*}
This gives the final bound of 
$
\frac
{\phat_i^{(\tf)}}
{\phat_j^{(\tf)}}
\leq 
f^2 + 3\chi^2 f =O(f^2+\chi^2f)
$.
\end{enumerate}

\section{Approximation algorithms for two special cases of \fmc}
\label{sec-special-fmc}

For approximating these special cases of \fmc, which are still $\NP$-complete,
we will be specific about the various constants and will 
try to provide approximation algorithms with as tight a constant as we can.
For this section, let $\varrho = \max\{ \varrho(f), \, \varrho(k) \}$. Note that $\varrho > 1 - \nicefrac{1}{\bee}$.

\subsection{\sfmc: almost optimal deterministic approximation with ``at most'' $k$ sets}
\label{sec-segr}

%
Note that Lemma~\ref{lem1} shows that finding a feasible solution is $\NP$-complete even for unweighted \sfmc 
with $\chi=2$.
Further inapproximability results for \sfmc are stated in 
Remark~\ref{rr1}.

\begin{theorem}\label{thm-segr-app}
There exists a polynomial-time deterministic algorithm \alggreed that, given an instance of unweighted 
\sfmc($\chi,k$) 
outputs a solution with the following properties: 
\begin{description}
\item[(\emph{a})]
The number of selected sets is at most $k$. 
\item[(\emph{b})]
The approximation ratio is at least 
$\varrho > 1 - \nicefrac{1}{\bee}$.
\item[(\emph{c})]
The coloring constraints are  
$2$-approximately 
satisfied $($cf.~\emph{\eqref{eq1}}$)$, \emph{\IE}, 
\[
\forall \,  i,j\in\{1,\dots,\chi\}: \, 
p_i \leq 2 \, p_j 
\]
\end{description}
\end{theorem}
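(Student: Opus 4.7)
The plan is to exploit the fact that in \sfmc each set is monochromatic, so the input decouples into $\chi$ independent max $k$-set cover subproblems tied together only by the shared budget $k$ and the fairness requirement on per-color coverage. I would first guess $\beta=\opt_\#/\chi$ by enumerating its $O(n)$ candidate integer values; for each guess I would discard every set of size exceeding $\beta$. On each color $i$ separately I would then invoke the best known deterministic max $k$-set cover algorithm $\mathcal{A}$ from \cite{AS04,Ho97}, which has ratio $\varrho$, and let $k_i$ be the smallest number of sets with which $\mathcal{A}$ achieves color-$i$ coverage at least $\varrho\beta$. If $\sum_i k_i\leq k$ the candidate is recorded, and at the end \alggreed outputs the best such candidate.

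The structural observation justifying the preprocessing step is that in any optimal valid solution every selected set has size at most $\beta^\ast=\opt_\#/\chi$: by segregation a set $\mathcal{S}$ of color $i$ contributes only color-$i$ elements and all are covered once $\mathcal{S}$ is selected, so $|\mathcal{S}|>\beta^\ast$ would force $p_i>\beta^\ast$, contradicting $p_i=\beta^\ast$. Thus at the correct guess $\beta=\beta^\ast$ the entire optimal solution survives preprocessing, and hence for each color $i$ the $k_i^\ast$ sets used by OPT yield under $\mathcal{A}$ coverage at least $\varrho\cdot h_i(k_i^\ast)\geq \varrho\beta^\ast$, where $h_i(t)$ is the optimum color-$i$ coverage using $t$ sets. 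This shows $k_i\leq k_i^\ast$ and $\sum_i k_i\leq k$ (claim \textbf{(a)}) and that the total coverage is at least $\chi\varrho\beta^\ast=\varrho\cdot\opt$ (claim \textbf{(b)}).

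For the coloring approximation in \textbf{(c)}, the minimality of $k_i$ gives $g_i(k_i-1)<\varrho\beta$, while the marginal gain of the last set chosen by $\mathcal{A}$ is bounded by the maximum surviving set size, which after preprocessing is at most $\beta$; hence $g_i(k_i)<\varrho\beta+\beta$, and combined with $g_j(k_j)\geq\varrho\beta$ for any other color $j$ this readily gives $p_i/p_j<(1+\varrho)/\varrho$. The main obstacle will be tightening this bound from roughly $2.58$ down to the stated factor $2$. I plan to do this via an exhaustive ``plus'' step (the suffix in \alggreed): for each color $i$ also enumerate a candidate ``pivot'' set, run $\mathcal{A}$ on the remaining budget committed to that color, and keep only solutions in which the pivot is the last set added on its color. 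Just before the pivot is added the coverage is some $P<\beta$ and just after it lies in $[\beta,2\beta]$, yielding ratio $\leq 2$ directly. Verifying that the $\varrho$-ratio of $\mathcal{A}$ survives this seeding --- a standard property of both greedy and LP-rounding analyses for max $k$-set cover, since the seeded set can be treated as part of OPT with the remaining budget reduced by one --- closes the argument.
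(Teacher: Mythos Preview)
Your overall strategy matches the paper's: decouple by color, guess $\beta=\nopt/\chi$, discard sets larger than $\beta$, and for each color run a deterministic max-$k$-cover algorithm with increasing budget until coverage first reaches $\varrho\beta$, so that the minimal budgets $k_i$ satisfy $\sum_i k_i\le\sum_i k_i^\ast=k$. This yields (\emph{a}) and (\emph{b}) exactly as in the paper, and your intermediate $(1+\varrho)/\varrho\approx 2.58$ bound for (\emph{c}) is also correct.

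The gap is in your ``plus'' step. You propose a pivot that pushes color-$i$ coverage from below $\beta$ into $[\beta,2\beta]$. But reaching coverage $\ge\beta$ within budget $k_i^\ast$ is \emph{not} guaranteed: the $\varrho$-approximation only certifies $\varrho\beta$ with $k_i^\ast$ sets, and if it always reached $\beta$ you would have a $1$-approximation to max-$k$-cover. So either $\sum_i k_i\le k$ fails or your pivot may simply not exist within budget. (Your description is also internally inconsistent: you seed the pivot first but then require it to be the \emph{last} set added.)

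The paper's fix keeps the threshold at $\varrho\beta$ and replaces the pivot enumeration by a one-line case split. After obtaining $\widehat{k_i}\le k_i^\ast$ sets covering $\ge\varrho\beta$, reorder them greedily by marginal contribution and let $j$ be the first index with $\big|\bigcup_{\ell\le j}\cS_\ell\big|\ge\varrho\beta$. If $|\cS_j|\ge\varrho\beta$, output the single set $\cS_j$: its coverage lies in $[\varrho\beta,\beta]\subset[\varrho\beta,2\varrho\beta]$ because $\varrho>1-\nicefrac{1}{\bee}>\nicefrac12$. Otherwise $|\cS_j|<\varrho\beta$, and since the first $j-1$ sets cover $<\varrho\beta$ by minimality of $j$, the prefix $\cS_1,\dots,\cS_j$ covers fewer than $\varrho\beta+\varrho\beta=2\varrho\beta$ elements. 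Either way each color's coverage lies in $[\varrho\beta,2\varrho\beta]$, giving $p_i/p_j\le 2$, and the number of sets used can only shrink.
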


\begin{remark}\label{rr1}
Based on the $(1-\nicefrac{1}{\bee})$-inapproximability result of Feige in~\emph{\cite{F98}}
for the maximum $k$-set coverage problem, it is not difficult to see the two constants in 
Theorem~\ref{thm-segr-app}, namely $\rho$ and $2$, cannot be improved beyond 
$1-\nicefrac{1}{\bee}+\eps$ 
and 
${(1-\nicefrac{1}{\bee})}^{-1}+\eps\approx 1.58+\eps$, respectively,  
for any $\eps>0$ and all $\chi\geq 2$ assuming 
$\mbox{P}\neq\NP$. 
\end{remark}

\begin{remark}
The ``at most $k$ sets'' part of the proof arises in the following steps of the algorithm.
Since we cannot know $k_r$ exactly, we can only assume $\widehat{k_r}\leq k_r$ since it is possible 
that the algorithm for the maximum $k$-set coverage also covers at least $\rho\frac{\opt_\#}{\chi}$
elements for some $k<k_r$.
Secondly, even if we have the guessed the correct value of $k_r$, 
the algorithm for the maximum $k_r$-set coverage 
may cover more than $2\rho\frac{\opt_\#}{\chi}$ elements, and thus we have to ``un-select'' some of the selected
sets to get the desired bounds (the proof shows that sometimes we may have to un-select all but one set).
The following example shows that a solution that insists on selecting exactly $k$ sets may need to select sets all of which 
are not in our solution. 
Consider the following instance of
unweighted \fmcg{1}{\ell}: 
$\cU=\{u_1,\dotsc,u_n\}$, 
$\ell=\nicefrac{n}{2}$,
$\cS_1=\{u_1,\dotsc,u_{n/2}\}$,
and 
$\cS_{j+1}=\{u_{(n/2)+j}\}$ for $j=1,\dots,\nicefrac{n}{2}$.
Our algorithm will select the set $\cS_1$ whereas any 
solution that selects exactly $\ell$ sets must selects the sets 
$\cS_2,\dots,\cS_{(n/2)+1}$.
\end{remark}

\begin{proof}
We reuse the notations, terminologies and bounds shown in the proof of Theorem~\ref{thm-main} as needed.
Let $\cU_1,\dots,\cU_\chi$ be the partition of the universe based on the color of the elements, \IE, 
$\cU_r=\{ u_\ell \,|\, \cC(u_\ell)=r \}$ for $r\in\{1,\dots,\chi\}$. 
By the definition of 
\sfmc
every set contains elements from exactly one such partition and thus, after renaming the sets and elements for notational 
convenience, we may set assume that our collection $\cS_1,\dots,\cS_m$ of $m$ sets 
is partitioned into $\chi$ collection of sets, where the $r\tx$ collection (for $r\in\{1,\dots,\chi\}$) 
contains the sets $\cS_1^r,\dots,\cS_{m_r}^r$ over the universe $\cU_r=\{u_1,\dots,u_{n_r}\}$ 
of $n_r$ elements such that $\sum_{r=1}^{\chi} m_r=m$ and 
$\sum_{r=1}^{\chi} n_r=n$.
For $r\in\{1,\dots,\chi\}$ and any $\ell$
let \fmc$\!\!_r(1,\ell)$
be the unweighted \fmcg{1}{\ell} problem defined over the universe $\cU_r$ and the collection of sets 
$\cS_1^r,\dots,\cS_{m_r}^r$.
The following observation holds trivially.

\begin{quote}
Unweighted \sfmc($\chi,k$)
has a valid solution covering 
$\ell\in\{\chi,2\chi,\dots,\left\lfloor\nicefrac{n}{\chi}\right\rfloor\chi\}$ elements if and only if
\begin{enumerate*}[label=(\emph{\roman*})]
\item 
for each 
$r\in\{1,\dots,\chi\}$,
\fmc$\!\!_r(1,k_r)$
has a valid solution covering $\nicefrac{\ell}{\chi}$ elements for some $k_r>0$, and 
\item 
$\sum_{r=1}^\chi k_r=k$.
\end{enumerate*}
\end{quote}

The above observation suggests that we can guess the value of 
$\opt_\#$
by trying out all possible values of $\ell$ just like the algorithms in Theorem~\ref{thm-main}, 
and for each such value of $\ell$ 
we can solve $\chi$ \emph{independent} \fmc instances and combine them to get a solution of the
original \sfmc instance. 
Although we cannot possibly solve 
the \fmc$\!\!_r(1,k_r)$ problems exactly, appropriate approximate solutions of these problems do correspond to a similar
approximate solution of 
\sfmc($\chi,k$) as stated in the following observation:

\begin{quote}
Suppose that for each $r\in\{1,\dots,\chi\}$ we have a solution 
$\cS_{i_1}^r,\dots,\cS_{i_{\widehat{k_r}}}^r\subseteq \cU_r$
of \fmc~$\!_r(1,k_r)$
with the following properties (for some $\eta_1 \leq 1$ and $\eta_2 \geq 1$): 
\begin{enumerate*}[label=(\emph{\roman*})]
\item
$\eta_1 (\nicefrac{\ell}{\chi} ) \leq | \cup_{p=1}^{\widehat{k_r}} \cS_{i_p}^r | \leq \eta_2 (\nicefrac{\ell}{\chi})$,
and
\item
$\widehat{k_r}\leq k_r$.
\end{enumerate*}
Then, the collection of
sets 
$\big\{ 
\cS_{i_{\ell_r}}^r \,|\, 
\ell_r\in \{1,\dots,\widehat{k_r}\} ,\, 
r\in\{1,\dots,\chi\}
\big\}
$
outputs a solution of 
\sfmc($\chi,k$) 
with the following properties: 
\begin{enumerate*}[label=(\emph{\alph*})]
\item
the number of selected sets is at most $k$, 
\item
the number of elements covered is at least $\eta_1\ell$,
and 
\item
for any pair 
$i,j\in\{1,\dots,\chi\}$, 
$\nicefrac{p_i}{p_j} \leq \nicefrac{\eta_2}{\eta_1}$.
\end{enumerate*}
\end{quote}

By the above observation, to prove our claim it suffices if we can find a solution for 
\fmc~$\!_r(1,k_r)$ for any $r$ 
with $\ell=\opt_\#$, $\eta_1=\varrho$ and $\eta_2 = 2\,\varrho$. 
For convenience, we will omit the \emph{superscript} $r$ from the set labels while dealing with 
\fmc$\!_r(1,k_r)$.
Remove from consideration any sets from 
$\cS_1,\dots,\cS_{m_r}$ that contains more than $\nicefrac{\ell}{\chi}$ elements, and 
consider the standard (unweighted) maximum $k$-set coverage problem,
that ignores constraint~(\emph{i}) of the above observation, 
on these remaining collection of sets $\cT$ over the universe $\cU_r$. 
Since we have guessed the correct value of $\ell$, 
there is at least one valid solution and thus the following assertions hold: 
\begin{enumerate*}[label=({\Roman*})]
\item
there exists a set of $k_r$ sets 
that covers 
$\frac{\opt_\#}{\chi}$
elements, and 
\item
$|\cT|\geq k_r$.
\end{enumerate*}
%
Let $\nu_k$ denote the \emph{maximum} number of elements that can be covered by selecting 
$k$ sets from $\cT$.
There are the following two well-known algorithm algorithms for the maximum $k$-set coverage problem 
both of which select exactly $k$ sets:
the greedy algorithm covers at least 
$\varrho(k)\nu_k$ elements~\cite[Proposition 5.1]{F98},  
where the pipage-rounding algorithm (based on the $\LP$-relaxation in \FI{f1})
covers at least 
$\varrho(f)\nu_k$ elements~\cite{AS04}.  
Note that we do \emph{not} know the exact value of $k_r$ and we \emph{cannot} guess by enumerating every possible 
$k_r$ values for every $r\in\{1,\dots,\chi\}$ in polynomial time. 
To overcome this obstacle, we use the following steps.
\begin{enumerate}[label=$\triangleright$]
\item
We run both the algorithms for maximum $k$-set coverage for $k=1,2,\dots$ 
until we find the first (\emph{smallest}) index 
$\widehat{k_r}\leq k_r$
such that the better of the two algorithms cover \emph{at least} 
$\max\{ \varrho( \widehat{k_r}), \, \varrho(f) \}\frac{\opt_\#}{\chi}\geq \rho \frac{\opt_\#}{\chi}$ elements.
\item
Suppose that this algorithm 
selects the 
$\widehat{k_r}$
sets (after possible re-numbering of set indices) 
$\cS_1,\dots,\cS_{\widehat{k_r}}$, where 
we have ordered the sets such that 
for every $j\in\{2,3,\dots,\widehat{k_r}\}$ 
the number of elements covered by $\cS_j$ and \emph{not} covered by any of the sets $\cS_1,\dots,\cS_{j-1}$
is \emph{at least} as many as
the number of elements covered by $\cS_\ell$ and not covered by any of the sets $\cS_1,\dots,\cS_{j-1}$
for \emph{any} $\ell>j$. 
Remember that $\max_{j\in\{1,\dots,\widehat{k_r}\} } \{ |\cS_j| \} \leq \nicefrac{\opt_\#}{\chi}$.
Let $j$ be the \emph{smallest} index such 
$
| \cup_{\ell=1}^{j-1} \cS_j |
<
\varrho \frac{\opt_\#}{\chi}
$
but 
$
| \cup_{\ell=1}^{j} \cS_j |
\geq
\varrho \frac{\opt_\#}{\chi}
$.
We have the following cases.
\begin{enumerate}[label=$\triangleright$]
\item
If $|\cS_j|\geq \varrho \frac{\opt_\#}{\chi}$ then we select $\cS_j$ as our solution since 
$\varrho \frac{\opt_\#}{\chi} \leq |\cS_j|\leq \frac{\opt_\#}{\chi}<2 \, \varrho \frac{\opt_\#}{\chi}$.
\item
Otherwise $|\cS_j|<\varrho \frac{\opt_\#}{\chi}$ and in this case 
we select the $j\leq \widehat{k_r} \leq k_r$ sets 
$\cS_1,\dots,\cS_{j}$ in our solution since 
$
\varrho \frac{\opt_\#}{\chi}
\leq
| \cup_{\ell=1}^{j} \cS_j |
\leq
2\,\varrho \frac{\opt_\#}{\chi}
$.
\end{enumerate}
\end{enumerate}
\end{proof}

\subsection{\bfmc: improved deterministic approximation}
\label{sec-bala}

\begin{proposition}\label{prop-bala-app}
There exists a polynomial-time deterministic algorithm 
\alggreedsmpl
that, given an instance of unweighted 
\bfmc$\!\!(\chi,k)$ 
outputs a solution with the following properties: 
\begin{description}
\item[(\emph{a})]
The number of selected sets is $($exactly$)$ $k$. 
\item[(\emph{b})]
The approximation ratio is at least 
$\varrho > 1 - \nicefrac{1}{\bee}$.
\item[(\emph{c})]
The coloring constraints are  
$O(\Delta f)$-approximately 
satisfied $($cf.~\emph{\eqref{eq1}}$)$, \emph{\IE}, 
$
\forall \,  i,j\in\{1,\dots,\chi\}: \, 
\nicefrac{p_i}{p_j} \leq (2+2\Delta)f 
$.
\end{description}
\end{proposition}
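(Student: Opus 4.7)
The plan for \alggreedsmpl{} is to run any standard $\varrho$-approximation for unweighted maximum $k$-set coverage—for example, the better of the greedy algorithm and the pipage rounding on the $\LP$-relaxation of \FI{f1}—ignoring all color information. This selects exactly $k$ sets, establishing (a). For (b), note that if $\opt_{\mathrm{coverage}}$ denotes the optimum of the unconstrained maximum $k$-set coverage instance on the same input, then $\opt_{\mathrm{coverage}}\geq\opt$ (since \bfmc{} has strictly more constraints), so the algorithm covers at least $\varrho\cdot\opt_{\mathrm{coverage}}\geq\varrho\cdot\opt$ elements, where $\varrho=\max\{\varrho(f),\varrho(k)\}$.

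All the non-trivial work lies in (c), and the key observation is that $\Delta$-balance is a \emph{per-set} property and can be exploited set by set. For any selected set $\cS$ and colors $i,j\in\{1,\dots,\chi\}$, writing $\nu_i(\cS)=|\cS\cap C_i|$, the defining inequalities of \bfmc{} yield
\begin{gather*}
\frac{\nu_i(\cS)}{\nu_j(\cS)} \;\leq\; \frac{\lceil|\cS|/\chi\rceil+\Delta}{\max\{1,\,\lfloor|\cS|/\chi\rfloor-\Delta\}}.
\end{gather*}
A short case analysis on whether $|\cS|<\chi(\Delta+1)$ or not shows this ratio is always at most $2\Delta+2$: in the small-set regime the denominator collapses to $1$ and the numerator is at most $\lceil|\cS|/\chi\rceil+\Delta\leq 2\Delta+1$; in the large-set regime the denominator is at least $1$, and the worst case is attained when $|\cS|/\chi$ is just above $\Delta+1$ and non-integer, yielding the bound $(2\Delta+2)/1$.

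Summing this per-set inequality over the selected sets $\cS_{i_1},\dots,\cS_{i_k}$ gives $\sum_{\ell=1}^k \nu_i(\cS_{i_\ell}) \leq (2\Delta+2)\sum_{\ell=1}^k \nu_j(\cS_{i_\ell})$. Combining with the elementary estimates $p_i\leq \sum_{\ell=1}^k \nu_i(\cS_{i_\ell})$ (no double counting in the union) and $p_j \geq \tfrac{1}{f}\sum_{\ell=1}^k \nu_j(\cS_{i_\ell})$ (each element appears in at most $f$ of the selected sets, so the multi-count overestimates the true count by a factor of at most $f$) finally yields
\begin{gather*}
\frac{p_i}{p_j} \;\leq\; f\cdot\frac{\sum_{\ell=1}^k \nu_i(\cS_{i_\ell})}{\sum_{\ell=1}^k \nu_j(\cS_{i_\ell})} \;\leq\; (2+2\Delta)\,f,
\end{gather*}
which is exactly (c). Notably, this argument invokes \emph{no} assumption on $\nopt$ or $\chi$ and does not even depend on which $k$ sets the outer algorithm selects—any $k$-set choice out of a \bfmc{} instance will satisfy the same color-ratio bound.

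The only mild subtlety is treating the ``$\max\{1,\cdot\}$'' clause of \bfmc{} carefully in the per-set analysis: without this clause one could hope for a constant of $2\Delta+1$, but small sets with $|\cS|<\chi(\Delta+1)$ force the denominator to $1$, and the extremal value $2\Delta+2$ is actually attained on non-integer values of $|\cS|/\chi$ just above $\Delta+1$, making the $(2+2\Delta)f$ bound on $p_i/p_j$ essentially tight for this style of argument.
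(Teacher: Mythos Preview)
Your proof is correct and follows essentially the same approach as the paper: run a standard $\varrho$-approximation for maximum $k$-set coverage ignoring colors, then exploit the $\Delta$-balance of each selected set together with the ``sum-with-multiplicity versus union'' estimates $p_i\leq\sum_\ell\nu_i(\cS_{i_\ell})$ and $p_j\geq\tfrac{1}{f}\sum_\ell\nu_j(\cS_{i_\ell})$. The only cosmetic difference is that the paper first aggregates, defining $\alpha^+=\sum_{\ell}(\lceil|\cS_{i_\ell}|/\chi\rceil+\Delta)$ and $\alpha^-=\sum_{\ell}\max\{1,\lfloor|\cS_{i_\ell}|/\chi\rfloor-\Delta\}$, and then bounds $\alpha^+/\alpha^-\leq 2\Delta+2$ via $\alpha^+\leq\alpha^-+(2\Delta+1)k$ and $\alpha^-\geq k$, whereas you bound the per-set ratio $\nu_i(\cS)/\nu_j(\cS)\leq 2\Delta+2$ first and then sum; the arithmetic and the final constant are identical.
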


\begin{proof}
As already mentioned in the proof of Theorem~\ref{thm-segr-app} and elsewhere, 
there is a deterministic polynomial-time algorithm for the maximum $k$-set coverage problem with 
an approximation ratio of $\varrho$. 
For the given instance of \bfmc$\!\!(\chi,k)$,  
we run this algorithms (ignoring element colors) selecting $k$ sets, say $\cS_1,\dots,\cS_k$. 
Obviously, the total weight of all the elements covered in the selected solution is at least $\varrho\,\opt$.
Let 
$\alpha^+=\sum_{i=1}^k \lceil \nicefrac{|\cS_i|}{\chi} \rceil + \Delta$
and 
$\alpha^-=\sum_{i=1}^k \max \big\{ 1, \, \lfloor \nicefrac{|\cS_i|}{\chi} \rfloor - \Delta \big\}$. 
Note that $k\leq \alpha^- \leq \alpha^+ \leq \alpha^- + (2\Delta+1)k$.
Since each of the sets in the solution is balanced, 
an upper bound for the number $p_i$ of elements of color $i$ in the solution 
is given by 
$p_i \leq \alpha^+$.
Also note that by definition of $f$ 
we have 
$p_i \geq \frac{\alpha^-}{f}$.
It thus follows that for any $i$ and $j$ 
we have 
$
\nicefrac{p_i}{p_j} \leq f \times \frac{\alpha^+}{\alpha^-}
\leq (2+2\Delta) f
$.
\end{proof}

\section{Approximating \gfmc via randomized shifting}
\label{sec-gfmc}

We refer the reader to textbooks such as~\cite{V01} for a general overview of the randomized shifting 
technique (textbook~\cite{V01} illustrates the technique in the context of Euclidean travelling 
salesperson problem). 

\begin{theorem}\label{thm-gfmc}
For any constant $0<\eps<1$, 
we can design a randomized algorithm \alggeom for \gfmc with the following properties: 
\begin{enumerate}[label=\textbf{\emph{(}\alph*\emph{)}},leftmargin=*]
\item
\alggeom runs in $O( (\Delta /d)^d 2^{(Cd/\eps)^{O(d)}} k^d )$ time. 
\item
\alggeom satisfies the following properties with probability $1-o(1)$ $($cf.\ Inequality~\eqref{eq1}$'$--\eqref{eq1}$'')$:
\begin{enumerate}[label=$\triangleright$]
\item
The algorithm
covers at least 
$(1-O(\eps)) (\opt - \eps\chi)$ points.
\item
The algorithm
satisfies
the $(1+\eps)$-approximate coloring constraints $($cf.\ Inequality~\eqref{eq1}$'')$, \emph{\IE}, 
for all 
$i,j\in\{1,\dots,\chi\}$, 
$p_i \leq (1+\eps)p_j$.
\end{enumerate}
\end{enumerate}
\end{theorem}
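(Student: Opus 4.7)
The plan is to apply the randomized shifting technique, partitioning $[0,\Delta]^d$ with a uniform grid of side length $L = C'\,d/\eps$ (for a suitable constant $C'$) and then shifting the grid by a random offset $\tau$ drawn uniformly from $[0,L)^d$. A unit ball $\cB$ with center $x$ lies entirely inside a single shifted cell unless $x$ is within $\ell_\infty$-distance $1$ of some shifted coordinate hyperplane; since each coordinate of $x-\tau$ is uniform modulo $L$, the probability that $\cB$ is cut equals at most $2d/L = O(\eps)$. Fixing an optimal family $\cB_1^\ast,\ldots,\cB_k^\ast$ with exactly balanced color counts $p_1^\ast=\cdots=p_\chi^\ast$, linearity of expectation together with Markov/union bounds gives that with probability $1-o(1)$ the sub-family $I\subseteq\{1,\ldots,k\}$ of surviving (uncut) balls still satisfies $\sum_{i\in I}\mu(\cB_i^\ast)\ge(1-O(\eps))\opt$ and, for every color $j$, $\mu_j(\bigcup_{i\in I}\cB_i^\ast)\ge (1-O(\eps))p_j^\ast$. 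Because the surviving balls lie in pairwise disjoint cells, the remaining problem decomposes across cells modulo the global constraints $|I|\le k$ and approximate color balance.

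Next I would discretize candidate ball centers to a finite set. Since each density $\mu_j$ is $C$-Lipschitz and bounded by $1$, snapping any center to a sub-grid of resolution $\rho = \Theta(\eps/(Cd))$ perturbs $\mu_j(\cB)$ by at most $O(\eps)\cdot\vol(\cB)$; combined with the standing hypothesis $\opt/\chi\ge\eta>1$, this contributes only an additive $O(\eps)$ loss per color to the coverage. Each cell therefore contains at most $(L/\rho)^d = (Cd/\eps)^{O(d)}$ candidate centers, and there are $O((\Delta/L)^d) = O((\Delta/d)^d)$ cells. For each cell $\Gamma$ and each $k'\in\{0,1,\ldots,k\}$, I exhaustively enumerate all subsets of at most $k'$ candidate centers whose balls lie inside $\Gamma$ (there are $2^{(Cd/\eps)^{O(d)}}$ such subsets per cell) and record, for every subset, its $\chi$-vector of per-color measures rounded to multiples of $\eps\opt_\#/\chi$. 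These per-cell tables are then combined by a dynamic program whose state is (cumulative ball count, rounded cumulative color-coverage vector). After guessing the common target value $p^\star\approx\opt_\#/\chi$, the DP maximizes the total measure subject to the final coverage vector lying in a $(1\pm\eps)$-window around $(p^\star,\ldots,p^\star)$; chaining the errors from the shift ($O(\eps\opt)$), the quantization ($O(\eps\opt)$), and the state rounding ($O(\eps\opt_\#)$) yields a solution covering at least $(1-O(\eps))(\opt-\eps\chi)$ points while $(1+\eps)$-approximately balancing the colors.

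The hard part is that the coloring constraints couple the cells globally, whereas the enumeration is inherently local; the state-space rounding at resolution $\eps\opt_\#/\chi$ must simultaneously be coarse enough to keep the DP within the stated time bound and fine enough to preserve the $(1+\eps)$-balance. This is precisely where the assumption $\opt/\chi\ge\eta$ is essential: without a bounded-below per-color mass, the rounding error would be of the same order as the per-color coverage itself and the $(1+\eps)$ guarantee would fail. A secondary technical point is that the cut balls of OPT are simply discarded by the localized enumeration, so the guarantee is only asserted with probability $1-o(1)$, relying on the standard survival argument for the random shift rather than on a deterministic structural property.
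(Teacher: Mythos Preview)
Your proposal is correct and follows essentially the same route as the paper: a randomly shifted grid of side $\Theta(d/\eps)$, discarding balls of $\opt$ that straddle a hyperplane, snapping the survivors' centers to a lattice of resolution $\Theta(\eps/(Cd))$, exhaustive enumeration of ball subsets within each cell, and a dynamic program over cells on rounded per-color coverage vectors. One small correction: Markov's inequality by itself only yields constant success probability for the survival event, not $1-o(1)$; the paper (and you should too) repeats the random shift $O(\log n)$ times and returns the best run to amplify to high probability.
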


\begin{remark}
In many geometric applications, the dimension parameter $d$ is a \emph{fixed} constant. 
For this case, \alggeom runs in polynomial time, and moreover, under the \emph{mild} assumption of 
$\frac{\opt}{\chi} \geq \eta$ for \emph{some} constant $\eta>1$, 
\alggeom covers at least $(1-O(\eps))\opt$ points, \IE, 
under these conditions \alggeom behaves like a randomized polynomial-time approximation scheme.
\end{remark}

\begin{proof}
Fix an optimal solution having 
$k$ unit balls 
$\cB_1^\ast,\ldots,\cB_k^\ast \subset \R^d$,
such that 
for all $i,j\in \{1,\dots,\chi\}$,
$\mu_i(\cB^\ast)=\mu_j(\cB^\ast)$, where
$
\cB^\ast = \bigcup_{i=1}^k \cB_i^\ast
$.
Thus, we need to show that 
our algorithm \alggeom 
computes in $(\Delta /d)^d 2^{(Cd/\eps)^{O(d)}} k^d$ time
a set of unit balls $B_1,\ldots,B_k \subset \mathbb{R}^d$
such that the following assertions hold 
with probability $1-o(1)$ (where 
$B = \bigcup_{i=1}^k B_i$):
\begin{gather*}
\sum_{i=1}^\chi \mu_i(B) > (1-O(\eps)) \sum_{i=1}^\chi (\mu_i(\cB^{\ast}) - \eps)
\\
\forall \, i,j\in \{1,\ldots,\chi\}: \, 
\mu_i(B) \leq (1+\eps) \mu_j(B)
\end{gather*}
Set $L=8d/\eps$.
Let $G\subset \mathbb{R}^d$ be an axis-parallel grid such that every connected component of 
$\R^d \setminus G$ is an open $d$-dimensional hypercube isometric to $(0,L)^d$.
In other words, $G$ is the union of $d$ infinite families of axis-parallel 
$(d-1)$-dimensional 
hyperplanes, spaced apart by $L$ in each orthonormal direction.
Let $\alpha\in [0,L)$ be chosen uniformly at random, and let
$
G' = G + \alpha
$
be the \emph{random} translation of $G$ by $\alpha$, \IE,
\[
(p_1+\alpha,\dots,p_d+\alpha)\in G' 
\,\,
\equiv 
\,\,
(p_1,\dots,p_d)\in G
\]
Let 
$F$ is the set of indices of all balls $\cB_i^\ast$ that has a non-empty intersection
with the randomly shifted grid $G'$, \IE, 
\[
F = \{i \in \{1,\dots,k\} : \cB^\ast_i\cap G' \neq \emptyset\}
\]
Let
$
\cB^{\ast,F} = \bigcup_{i\in F} \cB_i^\ast
$.
Any point $p\in \R^d$ is contained in $\cB^{\ast,F}$ only if it is contained in some unit ball intersecting $G'$.
Therefore, $p\in \cB^{\ast,F}$ only if it is at distance at most $2$ from $G'$ (in other words, $\cB^{\ast,F}$ is 
contained in the $2$-neighborhood of $G'$).
The probability that any particular point $p$ is at distance at most $2$ from any family of parallel randomly shifted 
hyperplanes in $G'$ is exactly $4/L$.
By the union bound over all dimensions,
$\Pr[p\in \cB^{\ast,F}]\leq 
4d/L$.
Therefore, by the linearity of expectation,
\[
\Ave{\mu_i(\cB^{\ast,F})} 
=
\sum_{p\in \cup_{i=1}^k \cB_i^\ast} \!\!\!\! \prob{p\in \cB^{\ast,F}\,} \,\mu_i(p)
\leq \frac{4d}{L} \mu_i(\cB^\ast)
\]
Consequently, by  Markov's inequality, we get
\[
\prob{\textstyle\mu_i(\cB^{\ast,F}) \geq \frac{8d}{R} \mu_i(\cB^\ast)} \leq \nicefrac{1}{2}
\]
Set $\delta=2^{-\Theta(d)}\eps /C$.
Let $B_1^{\ast\ast},\ldots,B_k^{\ast\ast}$ be the collection of unit balls in $\R^d$ obtained as follows.
For each $i\in \{1,\dots,k\}\setminus F$,
obtain a unit ball by translating $\cB_i^\ast$ such that its center has coordinates that are 
\emph{integer multiples} of $\delta$, \IE, it is an element of the dilated integer lattice 
$\delta \cdot \mathbb{Z}^d$.
For every $i\in F$, we obtain a unit ball by picking an arbitrary ball obtained for some $j\in \{1,\dots,k\}\setminus F$ as 
described above.
Essentially, the new solution $B^{\ast\ast}_1,\dots,B^{\ast\ast}_k$ is missing all the balls that intersect $G'$, and 
rounds every other ball so that its center is contained in some integer lattice.
In this construction, each ball $\cB_i^\ast$, with $i\notin F$, gets translated by at most some distance $\sqrt{d} \delta$.
Since each for all $j\in \{1,\ldots,\chi\}$, $\mu_j$ is $C$-Lipschitz, it follows that 
\[
| \mu_j(\cB^{\ast}_i) - \mu_j(B^{\ast\ast}_i) | \leq \mathrm{vol}(\cB^\ast_i) \, \sqrt{d} \, \delta \, C \leq 2^{\Theta(d)} \delta C 
\]
Letting $B^{\ast\ast}=\bigcup_{i=1}^k B^{\ast\ast}_i$,
we get
that for all $i\in \{1,\dots,\chi\}$,
\[
| \mu_j(\cB^{\ast}) - \mu_j(B^{\ast\ast}) | \leq k 2^{\Theta(d)} \delta C 
\]
Let ${\cal I}$ be the set of connected components of $[0,\Delta]^d \setminus G'$.
We refer to the elements of ${\cal I}$ as \emph{cells}.
For each $A\in {\cal I}$, we enumerate the set, ${\cal S}_A$, of all possible subsets of at 
most $k$ unit balls with centers in $A\cap ( \delta\cdot \mathbb{Z}^d )$.
There are at most $(L/\delta)^d$ lattice points in $A$, and thus there are at most $2^{(L/\delta)^d}$ such subsets of unit balls.
Since $|{\cal I}|\leq (\lceil \Delta/L \rceil)^d$, it follows that this enumeration takes
$O((\Delta/L)^d \, 2^{(L/\delta)^d})$ time.

For each enumerated subset 
$J\in {\cal S}_A$ of unit balls, we record the vector 
\[
\left(|J|, \frac{\eps}{k} \left\lfloor \mu_1(X) \frac{k}{\eps}  \right\rfloor,\dots, 
\frac{\eps}{k} \left\lfloor \mu_k(X) \frac{k}{\eps}  \right\rfloor\right)
\]
where $X = \bigcup_{Y\in J} Y$.
There are at most $(2^{O(d)}k/\eps)^d$ such vectors for each cell in ${\cal I}$.
Via standard dynamic programming, we can inductively compute all possible sums of vectors 
such that we pick at most one vector from each cell, and the total sum of the first coordinate, \IE, the number of unit balls, 
is at most $k$.
This can be done in $O( (\Delta/L)^d 2^{(L/\delta)^d} (2^{O(d)}k/\eps)^d )$ time.
For the correct choice of vectors that corresponds to the solution $B^{\ast\ast}$, the sum 
of the vectors we compute is correct up to an additive factor of $\eps$ on each coordinate.
This means that we compute a solution $B_1,\ldots,B_k$, with 
the following property:
\begin{multline*}
\sum_{i=1}^k \mu_i(B) \geq (1-\eps) \sum_{i=1}^\chi \mu_i(B^{\ast\ast}) 
 \geq (1-\eps) \Big[ \sum_{i=1}^\chi \big( \, \mu_i(\cB^{\ast}) - 2^{\Theta(d)} \delta C - \mu_i(\cB^{\ast,F}) \, \big)  \Big]
\\
 \geq \big( \, 1-\eps- (8d/L) \, \big) \Big[ \sum_{i=1}^\chi \big( \, \mu_i(\cB^{\ast}) - 2^{\Theta(d)} \delta C\, \big) \Big]
 =
 \big( \, 1-2\eps \, \big) \Big[ \sum_{i=1}^\chi \big( \, \mu_i(\cB^{\ast}) - \eps \, \big) \Big]
\end{multline*}
with probability at least $\nicefrac{1}{2}$.
Repeating the algorithm $O(\log n)$ times and returning the best solution found, results in the high-probability assertion,
which concludes the proof.
\end{proof}

\section{Conclusion and open problems}

In this paper we formulated a natural combinatorial optimization framework for incorporating fairness issues in 
coverage problems and provided a set of approximation algorithms for the general version 
of the problem as well as its special cases. Of course, it is possible to design other optimization frameworks
depending on the particular application in hand, and we encourage researchers to do that. 
Below we list some future research questions related to our framework: 
\begin{description}[leftmargin=0.1in]
\item[Eliminating the gap of factor $f$ in $\LP$-relaxation:]
As noted in Section~\ref{sec-limit-lp}, all of our $\LP$-relaxations incur a gap of factor $f$ in the coloring
constraints while rounding. It seems non-trivial to close the gap using additional linear inequalities while 
preserving the same approximation ratio. However, it may be possible to improve the gap using 
$\SDP$-relaxations. 
\item[Primal-dual schema:]
Another line of attack for the \fmc problems is via the primal-dual approach~\cite{V01}.
For example, can the primal-dual approach for partial coverage problem by 
Gandhi, Khuller and Srinivasan~\cite{GKS04} be extended to \fmc? 
A key technical obstacle seems to center around effective interpretation of the dual 
of the coloring constraints. 
Our iterated rounding approach was able to go around this obstacle but 
the case when $\chi=\omega(1)$ may be improvable.
\item[Fixed parameter tractability:]
As mentioned in Section~\ref{sec-prior-coverage} fixed-parameter tractability issues for $k$-node coverage 
have been investigated by prior researchers such as 
Marx~\cite{M08} and Gupta, Lee and Li in~\cite{GLL18a,GLL18b}. It would be interesting to extend 
these results to \nfmc.
\item[{Generalizing to non-decreasing submodular set objective functions:}]
\added[comment={added}]{
The proofs and proof techniques in this paper do not generalize to the case when 
the objective function for our 
\fmc problems is a (more general) non-decreasing submodular set function. 
It would be interesting to devise new algorithmic techniques and proofs for this more
general case.
Approximation algorithms for such generalizations for the standard 
}
{
maximum $k$-set coverage problem 
(see Section~\ref{sec-prior-coverage})
were provided in~\cite{KST09}.
}
\end{description}



\appendix

\bigskip

\centerline{\bf\Large Appendix}

\section*{Proof of Lemma~\ref{lem1}}

\noindent
{(\emph{a})}
We describe the proof for $\chi=2$; generalization to $\chi>2$ is obvious.
The reduction is from the \emph{Exact Cover by 3-sets} (X3C) problem which is defined as follows.
We are given an universe $\cU'=\{u_1,\dots,u_{n'}\}$ of $n'$ elements for some $n'$ that is a 
multiple of $3$, and a collection of 
$n'$ subsets $\cS_1,\dots,\cS_{n'}$ of $\cU$ such that $\bigcup_{j=1}^{n'} \cS_j=\cU$, 
every element of $\cU'$ occurs in exactly $3$ sets and $|\cS_j|=3$ for $j=1,\dots,n'$. 
The goal is to decide if there exists a collection of $\nicefrac{n'}{3}$ (disjoint) sets 
whose union is $\cU'$. X3C is known to be $\NP$-complete~\cite{GJ79}.
Given an instance $\langle \cU',\cS_1,\dots,\cS_{n'}\rangle$ of X3C as described, we create the
following instance $\langle \cU,\cS_1,\dots,\cS_{n'+1},k\rangle$ of \fmcg{2}{k}: 
\begin{enumerate}[label=(\emph{\roman*})]
\item 
The universe is 
$\cU=\{u_1,\dots,u_{n'}\}\cup \{u_{n'+1},\dots,u_{2n'}\}$ (and thus $n=2n'$),
\item 
$w(u_j)=1$ for $j=1,\dots,2n'$,
\item 
the sets are 
$\cS_1,\dots,\cS_{n'}$ and a new set $\cS_{n'+1}=\{u_{n'+1},\dots,u_{2n'}\}$, 
\item 
the coloring function is given by
$
\cC(u_j) = \left\{
\begin{array}{r l}
1, & \mbox{if $1\leq j\leq n'$} 
\\
2, & \mbox{otherwise} 
\end{array}
\right.
$,
and 
\item 
$k=\frac{n'}{3}+1 = \frac{n}{6}+1$. 
\end{enumerate}
%
Clearly, 
every element of $\cU$ occurs in no more than $3$ sets and 
all but the set $\cS_{n'+1}$ contains exactly $3$ elements.
The proof is completed once the following is shown:  

\begin{quote}
($\ast$) 
\emph{the given instance of X3C has a solution if and only if 
the transformed instance of} \fmcg{2}{1+\nicefrac{n}{6}} \emph{has a solution}.
\end{quote}

\noindent
A proof of 
($\ast$) 
is easy: since the set 
$\cS_{n'+1}$ must appear in any valid solution of \fmc, 
a solution 
$\cS_{i_1},\dots,\cS_{i_{n'/3}}$ of X3C corresponds to a solution 
$\cS_{i_1},\dots,\cS_{i_{n'/3}},\cS_{n'+1}$ of \fmcg{2}{k} and \emph{vice versa}.

\bigskip
\noindent
(\emph{b})
The proof is similar to that in 
(\emph{a}) 
but now
instead of X3C
we reduce the node cover problem for cubic (\IE, $3$-regular) graphs (VC$_3$)
which is defined as follows: 
given a cubic graph $G=(V,E)$ of $n'$ nodes and $3n'/2$ edges and an integer $k'$,
determine if there is a set of $k'$ nodes that cover all the edges. 
$VC_3$ is known to be $\NP$-complete even if $G$ is planar~\cite{GJ79}. 
For the translation to an instance of 
\fmcg{2}{k}, 
edges of $G$ are colored with color $1$, 
we add a \emph{new} connected component 
$\cK_{(3n'/2)+1}$ 
to $G$ that is a complete graph of 
$(3n'/2)+1$ nodes with every edge having color $2$, 
transform this to the set-theoretic version of \fmc using the standard 
transformation from node cover to set cover and set 
$k=k'+1$; note that $n=3n'/2 + \binom{(3n'/2)+1}{2}=\Theta((n')^2)$ and $a=3n'/2=O(\sqrt{n}\,)$. 
To complete the proof, note that any feasible solution for the
\fmcg{2}{k} instance 
must contain \emph{exactly} one node from 
$\cK_{(3n'/2)+1}$ 
covering 
$3n'/2$ edges 
and therefore the solution for the edges with color $1$ must correspond to a node cover in $G$ (and \emph{vice versa}).

\bigskip
\noindent
(\emph{c})
We given a different reduction from X3C.
Given an instance $\langle \cU',\cS_1,\dots,\cS_{n'}\rangle$ of X3C as in 
(\emph{a}), 
we create the
following instance $\langle \cU,\cT_1,\dots,\cT_{n'},k\rangle$ of \fmcg{n'}{k}: 
\begin{enumerate}[label=(\emph{\roman*})]
\item 
For every set $\cS_i=\big\{u_{i_1},u_{i_2},u_{i_3}\big\}$ of X3C we have three elements 
$u_{i_1}^i,u_{i_2}^i,u_{i_3}^i$ and a set 
$\cT_i=\big\{ u_{i_1}^i,u_{i_2}^i,u_{i_3}^i\big\}$ in \fmc
(and thus $n=3n'$, $a=3$ and $f=1$),
\item 
$w(u_{i_j}^i)=1$ 
and 
$\cC(u_{i_j}^i) = i_j$ 
for $i\in \{1,\dots,n'\},j\in\{1,2,3\}$
(and thus $\chi=n'=\nicefrac{n}{3}$),
\item 
$k=\nicefrac{n'}{3} = \nicefrac{n}{9}$. 
\end{enumerate}
%
The proof is completed by showing 
the given instance of X3C has a solution if and only if 
the transformed instance of \fmcg{\nicefrac{n}{3}}{\nicefrac{n}{9}} has a solution.
This can be shown as follows.
We include the set $\cT_i$ in the solution for \fmc 
if and only if the set $\cS_i$ is in the solution for X3C. 
For any valid solution of X3C and every $j\in \{1,\dotsc,n'\}$ 
the element $u_j\in\cU'$ appears in \emph{exactly one} set, say $\cS_\ell=\big\{u_{\ell_1},u_{\ell_2},u_{\ell_3}\big\}$,
of X3C where one of the elements, say $u_{\ell_1}$, is $u_j$. 
Then, the solution of \fmc contains exactly one element, namely the element $u_{\ell_1}^\ell$, of color 
$\ell_1=j$. 
Conversely, given a feasible solution of \fmc 
with at most $k\leq \nicefrac{n}{9}$ sets, 
first note that if $k<\nicefrac{n}{9}$ then the total number of colors of various elements in the solution is 
$3k<n'$ and thus the given solution is not valid. Thus, $k=\nicefrac{n}{9}$
and therefore the solution of X3C contains $\nicefrac{n}{9}=\nicefrac{n'}{3}$ sets. 
Now, for every color $j$ the solution of \fmc contains a set, 
say $\cT_\ell=\big\{u_{\ell_1}^\ell,u_{\ell_2}^\ell,u_{\ell_3}^\ell\big\}$, 
containing an element of color $j$, say the element 
$u_{\ell_1}^\ell$. Then $\ell_1=j$ and the element $u_j$ appears in a set in the solution of X3C. 
To see that 
remaining claims about the reduction, 
there is no solution of \fmc that includes at least one element of every color and that is not a solution 
of X3C.

%
\end{document}